\newtheorem{theorem}{Theorem}[section]
\newtheorem{corollary}{Corollary}[section]
\newtheorem{lemma}{Lemma}[section]
\DeclareMathOperator{\argmax}{arg\,max}
\newcommand{\floor}[1]{\lfloor #1 \rfloor}
\begin{document}

\title{Likelihood-based Inference for Random Networks with Changepoints}
\author{Daniel Cirkovic, Tiandong Wang, and Xianyang Zhang}
\maketitle

\begin{abstract}
Generative, temporal network models play an important role in analyzing the dependence structure and evolution patterns of complex networks. Due to the complicated nature of real network data, it is often naive to assume that the underlying data-generative mechanism itself is invariant with time. Such observation leads to the study of changepoints or sudden shifts in the distributional structure of the evolving network. In this paper, we propose a likelihood-based methodology to detect changepoints in undirected, affine preferential attachment networks, and establish a hypothesis testing framework to detect a single changepoint, together with a consistent estimator for the changepoint. Such results require establishing consistency and asymptotic normality of the MLE under the changepoint regime, which suffers from long range dependence. The methodology is then extended to the multiple changepoint setting via both a sliding window method and a more computationally efficient score statistic. We also compare the proposed methodology with previously developed non-parametric estimators of the changepoint via simulation, and the methods developed herein are applied to modeling the popularity of a topic in a Twitter network over time. 
\end{abstract}

%\begin{keyword}[class=MSC]
%\kwd[Primary ]{}
%\kwd{}
%\kwd[; secondary ]{}
%\end{keyword}

% history:
% \received{\smonth{1} \syear{0000}}

%\tableofcontents

% Main text entry area

\section{Introduction}

In network analysis, understanding how a temporal network evolves is a consequential task. For example, assessing how interest in a topic diffuses over time in a Twitter network can only be evaluated if the network is observed temporally. Similarly, in order to judge how effective a policy is at dampening disease spread throughout a community, one must be able to compare the spread dynamics before and after the policy is instituted. Upon observing a large, temporal network, however, it is often na\"ive to assume that it evolved under one fixed data-generating process. Most temporal networks are exposed to forces, internal or external, that may augment the nature in which they evolve. Continuing the Twitter example, news of a major world event may receive increased interest, albeit for a short period of time, before returning to its pre-announcement popularity. These shifts in data-generating dynamics can be characterized as changepoints. 

When it comes to modeling dynamic networks, one popular class of temporal, generative network models is the preferential attachment (PA) model. Since its initial emergence as the Barab\'asi-Albert (BA) model in \cite{barabasi1999emergence}, many generalizations have been developed in order to generate networks that better conform to real-world datasets; see for example \cite{bollobas2003directed, cirkovic2023preferential, gao2017consistent, wang2023poisson}. Estimating changepoints in the undirected preferential attachment model was first studied by \cite{banerjee2023fluctuation} and \cite{bhamidi2018change} where they employed non-parametric estimators of the changepoint based on fluctuations in the degree distribution. Although the methodology is general, it does not directly determine the existence of changepoints. Real network datasets are inherently noisy objects and thus it is natural to ask whether a certain temporal shock has a large enough impact on the network formation, beyond what is typically expected, to be classified as a changepoint. Inspired by \cite{banerjee2023fluctuation} and \cite{bhamidi2018change}, we propose likelihood-based alternatives to the changepoint problem, and provide a hypothesis testing framework that not only detects where a changepoint has occurred, but decides whether or not it even occurred in the first place. If a changepoint has not occurred, one may assume that the entire network is generated under a single PA process. Otherwise, different periods of time may have been governed by varying parameters under the preferential attachment evolution assumption. 

In  \cite{bhamidi2018change}, it is found that the preferential attachment model suffers from long range dependence in the changepoint setting. That is, the degree distribution of a preferential attachment network is heavily influenced by its evolution before changepoint. Since likelihood-based procedures for preferential attachment networks heavily depend on the degree distribution, it could be the case that estimation procedures using edges recently added to the network are corrupted by their dependence on the network's earlier evolution. In this work we confirm that is not the case; despite the long range dependence affecting quantities such as the tail index of the degree distribution, likelihood-based procedures are still effective in making inference on parameters of the preferential attachment model. Surprisingly, there are cases where the long range dependence results in more efficient parameter estimation.

%Depending on the nature of the network and the period of time over which it is observed, 
In practice, it is possible to have
multiple changepoints in a given data stream. Although it is justified theoretically, \cite{banerjee2023fluctuation} and \cite{bhamidi2018change} have not addressed the statistical nature of the multiple changepoint detection. Therefore, another goal of our work is to present efficient multiple changepoint detection algorithms that scale to massive networks. Outside of applications to the the PA model, the likelihood-based changepoint detection problem has a long history in the statistics literature. For an overview of both parametric and non-parametric changepoint detection methods, we refer to \cite{csorgo1997limit}. We also refer to \cite{niu2016multiple} for an overview of multiple changepoint procedures. In our present study, we provide two methods, namely sliding window and binary segmentation methods, to handle the multiple changepoint detection problem for PA networks, both of which have also been further applied to a real data example.

The rest of the paper is organized as follows. In Section~\ref{sec:PA}, we provide important background information on the setup of the PA model and the likelihood-based inference. 
Then in Section~\ref{sec:ChangepointEstimation}, we establish the hypothesis testing framework for the single changepoint detection problem in the PA models, together with the proposal of a consistent estimator for the changepoint. We next discuss challenges and our resolutions (i.e. sliding window and binary segmentation methods) for the multiple changepoint situation in Section~\ref{sec:ScoreTest}. A numerical study comparing the two proposed methods is also included in Section~\ref{subsec:multiple_sim}. Section~\ref{sec:data} provides a real data example from a Twitter network, which gives an illustration of the applicability of our proposed methods. We give concluding remarks in Section~\ref{sec:conclusion}, and all technical proofs are collected in the appendix.

\section{Preferential Attachment Modeling}\label{sec:PA}

\subsection{The standard preferential attachment model}

Preferential attachment (PA) has become a popular mechanism to model real-world, dynamic networks by imitating  their scale-free behavior through a ``rich-get-richer" data-generative mechanism. In this section, we review well-known aspects of the undirected preferential attachment model (see Chapter 8 of \cite{van2017random} for a full treatment).

For our theoretical analyses, we assume the graph sequence $\left\lbrace G(k) \right\rbrace_{k = 2}^n$ is initialized with two nodes connected by two edges, i.e. $G(2) \equiv (V(2), E(2))$ where $V(2) = \{ 1, 2 \}$ and $E(2) = \left\lbrace \{1, 2\}, \{1,2 \} \right\rbrace$ .
%{\color{red}Xianyang: $\{1,2\}$ and $\{2,1\}$?} 
%Slight modifications to the theory can be made for networks with arbitrary seed graphs. 
From $G(k-1)$ to $G(k)$, $k\ge 3$, a new node $k$ together with an undirected edge, $\{k, v_k \}$, is added to the graph according to the following preferential attachment rule. For $v_k \in V(k - 1)$ and $k \geq 3$,
\begin{equation}
\label{PAscheme}
\mathbb{P}\left( \left\lbrace k, v_k  \right\rbrace \in E(k)\setminus E(k - 1) \bigm\vert \mathcal{F}_{k - 1}  \right) = \frac{D_{v_k}(k - 1) + \delta}{(2 + \delta)(k - 1)},
\end{equation}
where $D_{v_k}(k - 1)$ denotes the degree of node $v_k \in V(k - 1)$ and $\mathcal{F}_{k}$ is the sigma algebra generated by the information in $G(k)$. 
Note that we exclude the possibility of having self-loops in the model setup. In order for \eqref{PAscheme} to result in a valid probability, it is further assumed that $\delta > -1$. In what follows, we employ the short-hand notation PA$(\delta)$ to denote this process. 

For $i \geq 1$, let $N_i(k) = \sum_{w \in V(k)} 1_{\{ D_{w}(k) = i \}}$ be the number of nodes with degree $i$ at time $k$. By \cite[Theorem 8.3]{van2017random}, we have 
\begin{equation}
\label{deg_counts}
\frac{N_i(k)}{k} \xrightarrow{p} p_i(\delta) = (2 + \delta)\frac{\Gamma(3 + 2\delta)\Gamma(i + \delta)}{\Gamma(1 + \delta)\Gamma(i + 3 + 2\delta)} \hspace{1cm} \text{as $k \rightarrow \infty$}.
\end{equation}
Also, let $N_{>i}(k) := \sum_{j>i}N_j(k)$ be the number of nodes with degree greater than $i$ at time $k$,
then we have that 
\begin{equation}
\label{deg_counts2}
\frac{N_{>i}(k)}{k} \xrightarrow{p} p_{>i}(\delta) = \frac{\Gamma(3 + 2\delta)\Gamma(i + 1 + \delta)}{\Gamma(1 + \delta)\Gamma(i + 3 + 2\delta)}\hspace{1cm} \text{as $k \rightarrow \infty$}.
\end{equation} 
From \eqref{deg_counts} and \eqref{deg_counts2}, we have the following convenient identity:
%We will often refer to the convenient identity
\begin{equation}
\label{cdfpmf}
p_{>i}(\delta) = \frac{(i + \delta)p_i(\delta)}{2 + \delta},
\end{equation}
which is helpful later in the proof of our main results. By Stirling's formula, $p_i(\delta) \sim Ci^{-(3 + \delta)}$ as $i \rightarrow \infty$, where $C = (2 + \delta)\Gamma(3 + 2\delta)/\Gamma(1 + \delta)$, giving the PA model the ability to exhibit scale free behavior \citep[see Chapter 8.4 of][for more details]{van2017random}. 

\subsection{Birth-immigration processes}

Results like \eqref{deg_counts} are often attained through either martingale methods or by embedding the degrees $\{D_v(n)\}_{v = 1}^n$ into a carefully constructed sequence of birth-immigration processes. The embedding provides a finer resolution of the structural dynamics that underlie the evolution of the preferential attachment process, as well as a convenient description of limiting objects for preferential attachment models \cite[see][for a full description of such embedddings]{athreya2008growth, rudas2007random}. In order to cleanly describe some limit theorems for the degree distributions, we now introduce linear birth-immigration processes.  

A linear birth process $\left\lbrace \xi^{i}_\theta(t) : t \geq 0 \right\rbrace$ with immigration parameter $\theta$ and unit lifetime parameter is a continuous time Markov process, initialized at $\xi_\theta^i(0) = i$, with state space $\mathbb{N}^+$ and transition rate
\begin{align}
\label{eq:rate}
q_{j, j + 1} = j + \theta, \qquad j \in \mathbb{N}^+, \quad \theta \geq 0.
\end{align}
The difference $\xi^i_\theta(t) - \xi^i_\theta(0)$ has a negative binomial distribution with number of successes $r = i + \theta$ and probability of success $p = e^{-t}$ so that for $t \geq 0$
\begin{align}
\label{eq:pmf}
\mathbb{P}\left( \xi^i_\theta(t) = j \right) =& \frac{\Gamma(j + \theta)}{\Gamma(j - i + 1)\Gamma(i+\theta)}e^{-(i+\theta) t}\left(1 - e^{-t}\right)^{j - i}, \\
\label{eq:cdf}
\mathbb{P}\left( \xi^i_\theta(t) > j \right) =& \frac{\Gamma(j + 1 + \theta)}{\Gamma(j - i + 1)\Gamma(i + \theta)}\int_0^{1 - e^{-t}} x^{j - i}(1-x)^{i - 1 + \theta}dx.
\end{align} 
If the process is initialized with one particle, we drop the superscript and write $\xi_\delta(t) \equiv \xi^1_\delta(t)$ for $t \geq 0$. A simple calculation shows that 
\begin{align}
\label{eq:alt1}
p_i(\delta) &= \mathbb{P}\left( \xi_\delta(T) = i \right), \\
\label{eq:alt2}
p_{>i}(\delta) &= \mathbb{P}\left( \xi_\delta(T) > i \right),
\end{align}
where $T$ is an exponential random variable, independent of $\xi_\delta(\cdot)$, with rate $2 + \delta$.

\subsection{Likelihood inference}\label{sec:like}

In this section, we will review likelihood-based inference for the preferential attachment model. Likelihood inference for the undirected affine preferential attachment model was rigorously studied in \cite{gao2017asymptotic} and will form the basis for our changepoint analyses. Assume we have observed the graph sequence $\left\lbrace G(k) \right\rbrace_{k = 2}^n$ evolving  according to the classical rule in \eqref{PAscheme} with $\delta\in[\eta, K]$, where $\eta > -1$ and $K < \infty$. The likelihood of $\lambda \in [\eta, K]$ given the entire graph sequence is given by
\begin{equation}
\label{likelihood}
L\left(\lambda \mid \left\lbrace G(k) \right\rbrace_{k = 2}^n  \right) = \prod_{k = 3}^n \frac{D_{v_k}(k - 1) + \lambda}{(2 + \lambda)(k - 1)}.
\end{equation}
Since we will later consider model evolution with changepoints, it is convenient to denote 
the likelihood function:
$$
L_{(ns, nt]}(\lambda) := L\left(\lambda \middle\vert \left\lbrace G(k) \right\rbrace_{k = \floor{ns} + 1}^{\floor{nt}}  \right),
\qquad \text{for}\quad s, t \in [0, 1], s<t.
$$ 
If $s$ is such that the likelihood product in \eqref{likelihood} is indexed for $k < 3$, we simply denote the multiples as $1$. The log-likelihood becomes
\begin{align}
&\ell_{(ns, nt]}(\lambda) := \log L_{(ns, nt]}(\lambda) \nonumber\\
&= \sum_{k = \floor{ns} + 1}^{\floor{nt}} \log (D_{v_k}(k - 1) + \lambda) - \sum_{k = \floor{ns} + 1}^{\floor{nt}} \log(k - 1) - (\floor{nt} - \floor{ns})\log(2 + \lambda),\label{eq:loglikelihood}
\end{align}
which emits score and hessian
\begin{align}
u_{(ns, nt]}(\lambda) &:= \frac{\partial}{\partial\lambda} \ell_{(ns, nt]}(\lambda) = \sum_{k = \floor{ns} + 1}^{\floor{nt}} \frac{1}{D_{v_k}(k - 1) + \lambda} - \frac{\floor{nt} - \floor{ns}}{2 + \lambda},\label{score}\\
u'_{(ns, nt]}(\lambda) &:= \frac{\partial^2}{\partial\lambda^2}\ell_{(ns, nt]}(\lambda) = -\sum_{k = \floor{ns} + 1}^{\floor{nt}} \frac{1}{(D_{v_k}(k - 1) + \lambda)^2} + \frac{\floor{nt} - \floor{ns}}{(2 + \lambda)^2}.\label{scoreprime}
\end{align}
Note that if for some $k$, $D_{v_k}(k-1) = i$, node $v_k$ has incremented its degree from $i$ to $i + 1$ at step $k$. Since at most one node, other than node $k$, can have its degree incremented at step $k$, we find that
\begin{align*}
\sum_{k = \floor{ns} + 1}^{\floor{nt}} 1_{\left\lbrace D_{v_k}(k - 1) = i \right\rbrace} = N_{>i}(\floor{nt}) - N_{>i}(\floor{ns}).
\end{align*}
That is, the difference between the number of nodes with degree greater than $i$ at time $\floor{nt}$ and $\floor{ns}$ is simply the number of nodes whose degree have undergone the $i \rightarrow i + 1$ transition during that time period. We may use this fact to rewrite the score and hessian in a more convenient form
\begin{align}
u_{(ns, nt]}(\lambda) = \sum_{i = 1}^\infty \frac{N_{>i}(\floor{nt}) - N_{>i}(\floor{ns})}{i + \lambda}  - \frac{\floor{nt} - \floor{ns}}{2 + \lambda},\label{eq:score2} \\
u'_{(ns, nt]}(\lambda) = -\sum_{i = 1}^\infty \frac{N_{>i}(\floor{nt}) - N_{>i}(\floor{ns})}{(i + \lambda)^2} + \frac{\floor{nt} - \floor{ns}}{(2 + \lambda)^2}, \label{eq:scoreprime2}
\end{align}
(see Lemma \ref{lem:score_prime_asymp} for more details).
We thus define the maximum likelihood estimator based on the edges $\floor{ns} + 1$ to $\floor{nt}$, $\hat{\delta}_{(ns, nt]}$, as the solution to the equation $u_{(ns, nt]}(\delta) = 0$. If we are using the full data to estimate $\delta$, we will employ the notation $\hat{\delta}_n := \hat{\delta}_{(0, n]}$. It is proven in \cite{gao2017asymptotic} that for fixed $s$ and $t$, $\hat{\delta}_{(ns, nt]}$ is a consistent estimator of $\delta$. Additionally, \cite{gao2017asymptotic} prove asymptotic normality of the MLE, where  asymptotic variance is given by $I^{-1}(\delta; \delta)$. Here,
\begin{align}
\label{eq:finull}
I(\lambda; \delta) = \sum_{i = 1}^\infty \frac{p_{>i}(\delta)}{(i + \lambda)^2} - \frac{1}{(2 + \lambda)^2}.
\end{align}
The asymptotic normality of $\hat{\delta}_{(ns, nt]}$ under PA($\delta$) is derived in Lemma~\ref{lem:MLE}, which is crucial in the derivation of our likelihood ratio test.
As \eqref{eq:score2} and \eqref{eq:scoreprime2} suggest, this boils down to understanding the asymptotic behavior of $N_{>i}(\floor{nt})/n$.

\section{Changepoint detection}\label{sec:PAChangepoint}

Although the classical preferential attachment model is already a popular choice for modeling real-world networks, it essentially assumes the network evolution rule (cf. \eqref{PAscheme}) to remain unchanged over time and is not perturbed by temporal shocks to the network. For many real-life scenarios, this is an unrealistic assumption (see for instance the Twitter data example in Section~\ref{sec:data}). In order to capture these dynamics, a preferential attachment model with changepoints have been proposed in \cite{bhamidi2018change}, which we now summarize. The graph sequence $\{ G(k) \}_{k = 2}^n$ evolves as follows:
\begin{itemize}
\item For $k = 2, \dots, \floor{nt^\star}$, allow the network evolve according to PA$(\delta_1)$. 
\item For $k = \floor{nt^\star} + 1, \dots, n$, the network evolves according to PA$(\delta_2)$ where $G(\floor{nt^\star})$ is used as the seed graph.
\end{itemize}
In the rest of this paper, we refer to this graph evolution process as PA$(t^\star; \delta_1, \delta_2)$ with $\delta_1 \neq \delta_2$. The shift from $\delta_1$ to $\delta_2$ thus signifies that some disturbance has occurred at time $\floor{nt^\star} + 1$, and has affected the underlying structure of the network evolution going forward. 

\subsection{Degree distribution under changepoint}

Under the PA$(t^\star; \delta_1, \delta_2)$ model, \cite{bhamidi2018change} derive the asymptotic limit of the empirical degree distribution. This theorem is fundamental in deriving the consistency of our likelihood-based changepoint estimator. We present their result as Theorem \ref{thm:degdistchng} below. 

\begin{theorem}[Theorem 2.1 of \cite{bhamidi2018change}, 3.8 of \cite{banerjee2023fluctuation}]
\label{thm:degdistchng}
For $t^\star \in [0, 1]$, suppose that $\{ G(k) \}_{k = 2}^n$ evolves according to the PA$(t^\star; \delta_1, \delta_2)$ model. Define for $t \geq t^\star$
\begin{align*}
\tau^\star(t) \equiv \frac{1}{2 + \delta_2} \log \frac{t}{t^\star}.
\end{align*}
Then for $i \geq 1$
\begin{align*}
\sup_{t \in [t^\star, 1]} \left| \frac{N_i(\floor{nt})}{nt} -  p^\star_i(t; \delta_1, \delta_2)\right| \xrightarrow{p} 0  \hspace{1cm} \text{as $nk \rightarrow \infty$},
\end{align*}
where
\begin{align}
\label{eq:limchng}
p^\star_i(t; \delta_1, \delta_2) =& \left(1 - (t^\star/t)\right)\mathbb{P}\left(\xi_{\delta_2}(\tilde{T}(t)) = i \right) + (t^\star/t)  \mathbb{P}\left(\xi^{\xi_{\delta_1}(T)}_{\delta_2}(\tau^\star(t)) = i \right).
\end{align}
Here, $T$ is an exponential random variable with rate $2 + \delta_1$, independent of $\xi_{\delta_1}(\cdot)$ and $\xi_{\delta_2}(\cdot)$. $\tilde{T}(t)$ is an exponential random variable, truncated to the interval $[0, \tau^\star(t)]$, with rate $2 + \delta_2$ and independent of $\xi_{\delta_2}(\cdot)$. Explicitly, $\xi^{\xi_{\delta_1}(T)}_{\delta_2}(\cdot)$ is a birth-immigration process generated conditional on the value of $\xi_{\delta_1}(T)$.
\end{theorem}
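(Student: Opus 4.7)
The plan is to use the Athreya-Rudas birth-immigration embedding, split the analysis at the changepoint, and decompose $N_i(\floor{nt})$ according to whether a node was born before or after $\floor{nt^\star}$. I would embed the PA$(\delta_1)$ phase into a system of independent birth-immigration processes with immigration parameter $\delta_1$, each initialized with a single particle at its node's birth time. The standard identity \eqref{eq:alt1} yields $N_i(\floor{nt^\star})/\floor{nt^\star} \xrightarrow{p} \mathbb{P}(\xi_{\delta_1}(T) = i)$ with $T \sim \mathrm{Exp}(2+\delta_1)$, and by the same Malthusian argument the joint distribution of degrees at $\floor{nt^\star}$ is approximately i.i.d.\ across nodes. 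From step $\floor{nt^\star}+1$ onward I would continue the embedding under PA$(\delta_2)$, with each pre-changepoint node of degree $j$ evolving subsequently as $\xi^j_{\delta_2}(\cdot)$ independently of the others. Because the population grows exponentially at rate $2+\delta_2$ in embedded time, the embedded clock required to push the vertex count from $\floor{nt^\star}$ to $\floor{nt}$ concentrates around $\tau^\star(t) = (2+\delta_2)^{-1}\log(t/t^\star)$, precisely the definition in the statement.

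I would then split $N_i(\floor{nt})$ into contributions from ``old'' nodes (born in $[1,\floor{nt^\star}]$) and ``new'' nodes (born in $(\floor{nt^\star},\floor{nt}]$). Old nodes form an asymptotic fraction $t^\star/t$ of all vertices; conditional on having degree $\xi_{\delta_1}(T)$ at the changepoint, their degree at $\floor{nt}$ is distributed as $\xi^{\xi_{\delta_1}(T)}_{\delta_2}(\tau^\star(t))$, yielding the second term in \eqref{eq:limchng}. New nodes form an asymptotic fraction $1-t^\star/t$; by the same Malthusian argument that makes $T$ exponential in the single-regime case, the embedded age of a uniformly chosen new node converges to a truncated exponential $\tilde{T}(t)$ on $[0,\tau^\star(t)]$ with rate $2+\delta_2$, so its degree at $\floor{nt}$ is distributed as $\xi_{\delta_2}(\tilde{T}(t))$, producing the first term. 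Weighting and summing gives pointwise convergence to $p_i^\star(t;\delta_1,\delta_2)$ via a standard second-moment computation on the degree counts.

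To upgrade pointwise convergence to uniform convergence on $[t^\star,1]$, note that $p_i^\star(t;\delta_1,\delta_2)$ is continuous in $t$, so a mesh argument applies: take a grid $t^\star=t_0<\cdots<t_M=1$ of mesh $\epsilon$ and bound the oscillation of $N_i(\floor{nt})/(nt)$ between adjacent grid points using the facts that $N_i$ changes by at most $2$ per PA step and that $1/(nt)$ is uniformly Lipschitz on $[t^\star,1]$. The main obstacle is this uniform upgrade, together with the delicate exchange of limits for the old-node term: the embedded elapsed time $\tau^\star_n(t)$ is random in the pre-limit, and one must show its fluctuation around $\tau^\star(t)$ is small enough to be absorbed uniformly in $t$. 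This is typically handled via an exponential concentration bound on the sum of independent exponentials defining the embedded clock, combined with Lipschitz-in-$s$ continuity of $s \mapsto \mathbb{P}(\xi^j_{\delta_2}(s)=i)$, so that the randomness in the clock can be decoupled from the randomness in the inner birth-immigration evolution.
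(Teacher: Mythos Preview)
The paper does not prove this theorem at all: it is explicitly cited as Theorem~2.1 of \cite{bhamidi2018change} and Theorem~3.8 of \cite{banerjee2023fluctuation}, and the surrounding text (``We present their result as Theorem~\ref{thm:degdistchng} below'') makes clear it is quoted from those sources without independent proof. So there is no paper-proof to compare against.

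That said, your sketch is faithful to the argument in the cited references, and the paper's own informal discussion following the theorem statement confirms this: \cite{bhamidi2018change} embed each node's degree into a birth-immigration process, interpret $\tau^\star(t)$ as the embedded time elapsed between the $\floor{nt^\star}$-th and $\floor{nt}$-th arrivals, and read off the two mixture components from nodes born before versus after the changepoint. Your decomposition, identification of the truncated exponential age for new nodes, and mesh argument for the uniform upgrade are all standard pieces of that machinery. The one place where your sketch is looser than the actual proofs in \cite{bhamidi2018change, banerjee2023fluctuation} is the claim that pre-changepoint degrees are ``approximately i.i.d.''\ --- this is heuristic, and the rigorous route goes through concentration of the empirical degree distribution (e.g.\ via martingale bounds on $N_j(\floor{nt^\star})/\floor{nt^\star}$) rather than any i.i.d.\ approximation, followed by a law-of-large-numbers argument for the subsequent independent evolutions conditional on the degree profile at the changepoint.
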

In order to model  degree growth in the $\text{PA}(t^\star; \delta_1, \delta_2)$ network, \cite{bhamidi2018change} associate to each node a birth-immigration process that models its degree. The constant $\tau^\star(t)$ corresponds to the amount of time the processes evolve after the changepoint, or the amount of embedding time that passes between the introduction of the $\floor{nt^\star}$-th and $\floor{nt}$-th node in the PA$(t^\star; \delta_1, \delta_2)$ process. The random variable $\xi_{\delta_2}(\tilde{T}(t))$ models the degree of a typical node (or birth-immigration process) introduced after the changepoint. Similarly, $\xi^{\xi_{\delta_1}(T)}_{\delta_2}(\tau^\star(t))$ approximates the degree of a typical node introduced before the changepoint. In other words, to generate a random variable that has the asymptotic distribution of a typical node born before the changepoint, we may initialize a birth-immigration process with immigration rate $\delta_2$ with a draw from the distribution $\{p_i(\delta_1)\}_{i \geq 1}$ and allow the birth-immigration process to evolve until time $\tau^\star(t)$.

Using \eqref{eq:pmf} and \eqref{eq:cdf}, we have another convenient representation of \eqref{eq:limchng}
\begin{align*}
p^\star_i(t; \delta_1, \delta_2) =& p_i(\delta_2)\mathbb{P}\left(\xi^3_{2\delta_2}(\tau^\star(t)) > i + 2\right) + (t^\star/t) \sum_{j = 1}^i p_j(\delta_1) \mathbb{P}\left(\xi_{\delta_2}^j(\tau^\star(t)) = i \right).
\end{align*}
Note that in $p^\star_i(t; \delta_1, \delta_2)$, the contribution of $\delta_2$ to the tail has lessend when compared to the PA$(\delta_2)$ model. Since nodes have not undergone their intital evolution under the PA$(\delta_2)$ model, as evidenced by the truncation of the random variable $\tilde{T}(t)$ when compared to \eqref{eq:alt1}, $p_i(\delta_2)$ has been down-weighted in $p^\star_i(t; \delta_1, \delta_2)$. 

From Theorem \ref{thm:degdistchng}, we obtain Lemma \ref{lem:tailchng}.  Recall from the discussion in Section \ref{sec:like} that in order to develop theory for likelihood-based procedures for PA networks, understanding the behavior of $N_{>i}(n)/n$ is pivotal. 
\begin{lemma}
\label{lem:tailchng}
For $t^\star \in [0, 1]$, suppose that $\{ G(k) \}_{k = 2}^n$ evolves according to the PA$(t^\star; \delta_1, \delta_2)$ model. Then for $i \geq 1$
\begin{align*}
\sup_{t \in [t^\star, 1]}\left| \frac{N_{>i}(\floor{nt})}{nt} - p^\star_{>i}(t; \delta_1, \delta_2) \right| \xrightarrow{p} 0  \hspace{1cm} \text{as $n \rightarrow \infty$},
\end{align*}
where
\begin{align*}
p^\star_{>i}(t; \delta_1, \delta_2) =& (1 - (t^\star/t))\mathbb{P}\left(\xi_{\delta_2}(\tilde{T}(t)) > i \right) + (t^\star/t)  \mathbb{P}\left(\xi^{\xi_{\delta_1}(T)}_{\delta_2}(\tau^\star(t)) > i \right) \\
=& p_{>i}(\delta_2)\mathbb{P}\left(\xi^3_{2\delta_2}(\tau^\star(t)) > i + 2\right) - (t^\star/t) \mathbb{P}\left(\xi_{\delta_2}(\tau^\star(t)) > i \right) \\
&+ (t^\star/t) \sum_{j = 1}^i p_j(\delta_1)\mathbb{P}\left( \xi_{\delta_2}^j(\tau^\star(t)) > i\right) + (t^\star/t)  p_{>i}(\delta_1).
\end{align*}
\end{lemma}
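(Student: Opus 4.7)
My plan is to (i) obtain the uniform convergence of $N_{>i}(\floor{nt})/(nt)$ by reducing to Theorem~\ref{thm:degdistchng} on a \emph{finite} collection of degree classes, and then (ii) algebraically verify the two displayed forms of the limit. For (i) I would write $N_{>i}(\floor{nt}) = \floor{nt} - \sum_{j = 1}^{i} N_j(\floor{nt})$, so that
\[
\frac{N_{>i}(\floor{nt})}{nt} = \frac{\floor{nt}}{nt} - \sum_{j = 1}^{i} \frac{N_j(\floor{nt})}{nt}.
\]
In the interesting case $t^\star > 0$ (the case $t^\star=0$ degenerates to PA$(\delta_2)$), the index range $t \in [t^\star, 1]$ is bounded away from $0$, so $\floor{nt}/(nt) \to 1$ uniformly; Theorem~\ref{thm:degdistchng} then supplies uniform convergence of each of the $i$ summands to $p^\star_j(t; \delta_1, \delta_2)$. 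Summing, $N_{>i}(\floor{nt})/(nt)$ converges uniformly on $[t^\star, 1]$ to $1 - \sum_{j = 1}^{i} p^\star_j(t; \delta_1, \delta_2)$, and the remaining work is purely deterministic.

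The first displayed form of the limit is immediate: $\xi_{\delta_2}(\tilde{T}(t))$ and $\xi^{\xi_{\delta_1}(T)}_{\delta_2}(\tau^\star(t))$ are both $\mathbb{N}^+$-valued, so summing \eqref{eq:limchng} over $j=1,\dots,i$ and subtracting from $1$ turns each mixture pmf into the corresponding tail probability. For the second form I would split into pre- and post-changepoint contributions. The pre-changepoint piece is handled by conditioning on $\xi_{\delta_1}(T) \sim \{p_j(\delta_1)\}_{j \geq 1}$ and splitting at $j = i$: when $j > i$ the birth-immigration path $\xi^j_{\delta_2}$ is already above $i$ and the event holds almost surely, contributing $(t^\star/t)\,p_{>i}(\delta_1)$; for $j \leq i$ we retain $(t^\star/t)\sum_{j=1}^{i} p_j(\delta_1)\,\mathbb{P}(\xi^j_{\delta_2}(\tau^\star(t)) > i)$. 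Together these give the last two terms of the target.

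The main obstacle, and the step I would reserve the most care for, is the post-changepoint identity
\[
(1 - t^\star/t)\,\mathbb{P}(\xi_{\delta_2}(\tilde{T}(t)) > i) = p_{>i}(\delta_2)\,\mathbb{P}(\xi^3_{2\delta_2}(\tau^\star(t)) > i + 2) - (t^\star/t)\,\mathbb{P}(\xi_{\delta_2}(\tau^\star(t)) > i).
\]
My approach is to unfold the truncation: writing $\tilde{T}(t)$ as $T^\star \sim \mathrm{Exp}(2 + \delta_2)$ conditioned on $\{T^\star \leq \tau^\star(t)\}$, the left-hand side becomes $\int_0^{\tau^\star(t)} (2 + \delta_2) e^{-(2+\delta_2)s}\,\mathbb{P}(\xi_{\delta_2}(s) > i)\,ds$. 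Substituting the Beta-integral representation of $\mathbb{P}(\xi_{\delta_2}(s) > i)$ from \eqref{eq:cdf} and swapping the order of integration (the region $\{0 \leq x \leq 1 - e^{-s},\ 0 \leq s \leq \tau^\star(t)\}$ rewrites as $\{-\log(1-x) \leq s \leq \tau^\star(t),\ 0 \leq x \leq 1 - e^{-\tau^\star(t)}\}$) collapses the double integral, via $e^{-(2 + \delta_2)\tau^\star(t)} = t^\star/t$, to
\[
\frac{\Gamma(i + 1 + \delta_2)}{\Gamma(i)\,\Gamma(1 + \delta_2)} \int_0^{1 - e^{-\tau^\star(t)}} x^{i - 1}\bigl[(1 - x)^{2 + 2\delta_2} - (t^\star/t)\,(1 - x)^{\delta_2}\bigr] dx.
\]
Applying \eqref{eq:cdf} a second time with starting state $3$, immigration $2\delta_2$, and target $j = i + 2$, together with the explicit form \eqref{deg_counts2} of $p_{>i}(\delta_2)$, matches the $(1-x)^{2+2\delta_2}$ piece against $p_{>i}(\delta_2)\,\mathbb{P}(\xi^3_{2\delta_2}(\tau^\star(t)) > i + 2)$ after a clean cancellation of $\Gamma$-factors, while the $(1-x)^{\delta_2}$ piece is directly $(t^\star/t)\,\mathbb{P}(\xi_{\delta_2}(\tau^\star(t)) > i)$. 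This $\Gamma$-matching is the delicate step; once it is in place, combining with the pre-changepoint decomposition completes the lemma.
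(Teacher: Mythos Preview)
Your proposal is correct and follows essentially the same route the paper sketches: the paper records only that the convergence is ``obtained from Theorem~\ref{thm:degdistchng}'' and that the second displayed form ``is achieved by employing the tower property, conditioning on $\tilde{T}(t)$ and $\xi_{\delta_1}(T)$,'' which is exactly what your finite-sum reduction and your two-piece decomposition accomplish. Your explicit Fubini/Beta-integral computation for the post-changepoint identity unpacks the paper's one-line tower-property remark in full detail, and the $\Gamma$-matching you flag as delicate does indeed cancel cleanly.
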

The equality for $p^\star_{>i}(t; \delta_1, \delta_2)$ in Lemma \ref{lem:tailchng} is achieved by employing the tower property, conditioning on $\tilde{T}(t)$ and $\xi_{\delta_1}(T)$. We see that preferential attachment models are subject to long range dependence; the initial conditions that govern the early evolution of the network have a long-term impact on properties of the degree distribution. For example, Lemma \ref{lem:tailchng} exhibits that under the PA$(t^\star; \delta_1, \delta_2)$ model, the power-law tail index of the degree distribution is still $2 + \delta_1$. The changepoint does not perturb the tail behavior of the degree distribution which renders power-law tail index estimates such as the Hill estimator unfit to detect shifts in the degree distribution \cite[see][for more on tail index estimation]{hill1975simple,wang2019consistency}. Additionally, it is unclear how such long range dependence affects likelihood inference as presented in Section \ref{sec:like}.

In Lemma \ref{lem:recur}, we develop a generalization of the relationship \eqref{cdfpmf} for the PA$(t^\star; \delta_1, \delta_2)$ model that will prove useful in later proofs.
\begin{lemma}
\label{lem:recur}
Suppose $t > s \geq t^\star$. Then for all $i \geq 1$,
\begin{align*}
\frac{i + \delta_2}{2 + \delta_2} \int_s^t p_i^\star(u; \delta_1, \delta_2) du = tp_{>i}^\star(t; \delta_1, \delta_2) - sp_{>i}^\star(s; \delta_1, \delta_2).
\end{align*}
\end{lemma}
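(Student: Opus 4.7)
The plan is to reduce the integral identity to its differential form,
$$\frac{d}{du}\bigl[u\, p^\star_{>i}(u; \delta_1, \delta_2)\bigr] = \frac{i+\delta_2}{2+\delta_2}\, p^\star_i(u; \delta_1, \delta_2), \qquad u > t^\star,$$
and then invoke the fundamental theorem of calculus to integrate from $s$ to $t$. The right-hand side of the lemma is $A(t)-A(s)$ with $A(u) := u\, p^\star_{>i}(u; \delta_1, \delta_2)$, so this is literally what is needed.

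The key analytic tool is the Kolmogorov forward equation for a birth--immigration process with rates $q_{k,k+1} = k+\delta_2$. Summing the forward equations over all $k > i$ telescopes to
$$\frac{d}{du}\mathbb{P}\bigl(\xi^j_{\delta_2}(u) > i\bigr) = (i+\delta_2)\,\mathbb{P}\bigl(\xi^j_{\delta_2}(u) = i\bigr), \qquad i \ge 1,\ j \ge 1,$$
which will drive every calculation. Using the first representation in Lemma~\ref{lem:tailchng} I would decompose
$$u\, p^\star_{>i}(u) = (u-t^\star)\,\mathbb{P}\bigl(\xi_{\delta_2}(\tilde{T}(u)) > i\bigr) + t^\star\,\mathbb{P}\bigl(\xi^{\xi_{\delta_1}(T)}_{\delta_2}(\tau^\star(u)) > i\bigr).$$
The second piece is immediate: conditioning on $\xi_{\delta_1}(T)$ and applying the chain rule with $d\tau^\star(u)/du = 1/[(2+\delta_2)u]$ together with the Kolmogorov identity above yields $\frac{i+\delta_2}{2+\delta_2}\,(t^\star/u)\,\mathbb{P}\bigl(\xi^{\xi_{\delta_1}(T)}_{\delta_2}(\tau^\star(u)) = i\bigr)$, which is exactly the ``post-changepoint'' component of $\frac{i+\delta_2}{2+\delta_2}\,p^\star_i(u)$.

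The main obstacle is the first term, because $\tilde{T}(u)$ is a truncated exponential whose density depends on $u$ through both the endpoint $\tau^\star(u)$ and the normalizer $1-t^\star/u$. Writing out the density explicitly gives the clean identity
$$(1 - t^\star/u)\,\mathbb{P}\bigl(\xi_{\delta_2}(\tilde{T}(u)) > i\bigr) = \int_0^{\tau^\star(u)} \mathbb{P}\bigl(\xi_{\delta_2}(x) > i\bigr)\,(2+\delta_2)\,e^{-(2+\delta_2)x}\,dx,$$
and an analogous identity with $>i$ replaced by $=i$. Differentiating $(u-t^\star)\,\mathbb{P}(\xi_{\delta_2}(\tilde{T}(u)) > i) = u \cdot \int_0^{\tau^\star(u)}(\cdots)\,dx$ via the product rule and Leibniz rule produces $(1-t^\star/u)\,\mathbb{P}(\xi_{\delta_2}(\tilde{T}(u)) > i) + (t^\star/u)\,\mathbb{P}(\xi_{\delta_2}(\tau^\star(u)) > i)$. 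To match the $=i$ side, I would integrate the Kolmogorov identity from $0$ to $x$ (noting $\mathbb{P}(\xi_{\delta_2}(0)>i)=0$ for $i\ge 1$) to obtain $\mathbb{P}(\xi_{\delta_2}(x) > i) = (i+\delta_2)\int_0^x \mathbb{P}(\xi_{\delta_2}(y) = i)\,dy$, substitute into the integral representation, and swap the order of integration; the inner integral $\int_y^{\tau^\star(u)} (2+\delta_2)e^{-(2+\delta_2)x}\,dx$ then splits into precisely the two pieces needed to produce
$$(1 - t^\star/u)\,\mathbb{P}\bigl(\xi_{\delta_2}(\tilde{T}(u)) > i\bigr) + \tfrac{t^\star}{u}\,\mathbb{P}\bigl(\xi_{\delta_2}(\tau^\star(u)) > i\bigr) = \tfrac{i+\delta_2}{2+\delta_2}\,(1-t^\star/u)\,\mathbb{P}\bigl(\xi_{\delta_2}(\tilde{T}(u)) = i\bigr).$$
Combining with the easy term gives the desired derivative identity, and integration from $s$ to $t$ completes the proof. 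The only subtlety to monitor is dominated convergence when interchanging the sum/integral with the derivative, which is controlled because all probabilities are bounded by $1$ and $\tau^\star(u)$ is $C^1$ on $(t^\star,1]$.
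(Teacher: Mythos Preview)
Your approach is correct and genuinely different from the paper's. The paper works directly with the integral $\int_s^t p_i^\star(u)\,du$: it expands $p_i^\star(u)$ via the second (explicit) representation in Theorem~\ref{thm:degdistchng}, computes the two resulting integrals by an exchange of order of integration and a $u$-substitution respectively, and only at the very end multiplies by $(i+\delta_2)/(2+\delta_2)$ using the identity \eqref{cdfpmf}. You instead attack the differential form: you take the first (probabilistic) representation of $u\,p^\star_{>i}(u)$ from Lemma~\ref{lem:tailchng}, differentiate it, and reduce everything to the telescoped Kolmogorov forward identity $\tfrac{d}{dx}\mathbb{P}(\xi^j_{\delta_2}(x)>i)=(i+\delta_2)\mathbb{P}(\xi^j_{\delta_2}(x)=i)$. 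Your route is more transparently probabilistic---it exposes the lemma as nothing more than the integrated forward equation for the $i\to i+1$ degree transition under the changepoint model---and would port readily to variants where the closed-form densities \eqref{eq:pmf}--\eqref{eq:cdf} are unavailable. The paper's route, by contrast, stays entirely within the explicit formulas already on the page and avoids any appeal to differentiation under the expectation; it is more self-contained given the preceding material, at the cost of being less illuminating about why the identity holds.
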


\begin{proof}
Using \eqref{eq:limchng}, we find that
\begin{align*}
\int_s^t p_i^\star(u; \delta_1, \delta_2) du =& p_i(\delta_2)\int_s^t\mathbb{P}\left(\xi^3_{2\delta_2}(\tau^\star(u)) > i + 2\right)du \\
&+ \sum_{j = 1}^i p_j(\delta_1) \int_s^t (t^\star/u) \mathbb{P}\left(\xi_{\delta_2}^j(\tau^\star(u)) = i \right) du,
\end{align*}
where a simple exchange in the order of integration gives that
\begin{align*}
\int_s^t\mathbb{P}\left(\xi^3_{2\delta_2}(\tau^\star(u)) > i + 2\right)du =& t\mathbb{P}\left(\xi^3_{2\delta_2}(\tau^\star(t)) > i + 2\right) - s\mathbb{P}\left(\xi^3_{2\delta_2}(\tau^\star(s)) > i + 2\right) \\
&- t^\star p^{-1}_{>i}(\delta_2) \left(\mathbb{P}\left(\xi_{\delta_2}(\tau^\star(t)) > i \right) - \mathbb{P}\left(\xi_{\delta_2}(\tau^\star(s)) > i \right)\right),
\end{align*}
and a $u$-substitution returns
\begin{align*}
\int_s^t (t^\star/u)  \mathbb{P}\left(\xi_{\delta_2}^j(\tau^\star(u)) = i \right) du =& t^\star\frac{2 + \delta_2}{i + \delta_2}\left(\mathbb{P}\left(\xi^j_{\delta_2}(\tau^\star(t)) > i \right) - \mathbb{P}\left(\xi^j_{\delta_2}(\tau^\star(s)) > i \right) \right).
\end{align*}
Hence
\begin{align*}
\int_s^t p_i^\star(u; \delta_1, \delta_2) du =&   tp_i(\delta_2)\mathbb{P}\left(\xi^3_{2\delta_2}(\tau^\star(t)) > i + 2\right) - sp_i(\delta_2)\mathbb{P}\left(\xi^3_{2\delta_2}(\tau^\star(s)) > i + 2\right) \\
&- t^\star \frac{2 + \delta_2}{i + \delta_2} \left(\mathbb{P}\left(\xi_{\delta_2}(\tau^\star(t)) > i \right) - \mathbb{P}\left(\xi_{\delta_2}(\tau^\star(s)) > i \right)\right) \\
&+  t^\star\frac{2 + \delta_2}{i + \delta_2}\sum_{j = 1}^i p_j(\delta_1)\left(\mathbb{P}\left(\xi^j_{\delta_2}(\tau^\star(t)) > i \right) - \mathbb{P}\left(\xi^j_{\delta_2}(\tau^\star(s)) > i \right) \right),
\end{align*}
where we have used \eqref{cdfpmf} to determine that $p_i(\delta_2)p^{-1}_{>i}(\delta_2) = (2+\delta_2)/(i + \delta_2)$. Multiplying both sides of the previous display by $(i + \delta_2)/(2 + \delta_2)$ and again using \eqref{cdfpmf} completes the proof.
\end{proof}

\subsection{Non-parametric methods}\label{sec:NP}

In order to detect the changepoint $t^\star$ under the PA$(t^\star; \delta_1, \delta_2)$ model, \cite{bhamidi2018change} employs a non-parametric estimator which tracks the proportion of nodes with degree 1 (i.e. leaves) as the network evolves. In a later paper, \citep{banerjee2023fluctuation} develops a non-parametric estimator for $t^\star$ under more general conditions on the attachment rule \eqref{PAscheme}.
%, though they are all satisfied under the PA$(t^\star, \delta_1, \delta_2)$ model. 
As opposed to \cite{bhamidi2018change}, the methodology in \cite{banerjee2023fluctuation} tracks changes in the entire empirical degree distribution, which in-turn provides more information about the changepoint location. Hence, for brevity, we only present and compare our likelihood-based methods to \cite{banerjee2023fluctuation}. Although their theory is based on a preferential attachment model where nodes attach according to their out-degree rather than their total degree, their methodology is still valid for the linear PA case since the associated $\delta$ parameters would just differ by $1$.

The estimator of the changepoint $t^\star$ in \cite{banerjee2023fluctuation} is given by
\begin{equation}
\label{eq:npstat}
\hat{T}_n = \inf \left\lbrace t \geq \frac{1}{h_n} : \sum_{i = 0}^\infty 2^{-i} \left| \frac{N_{i + 1}(\floor{nt})}{nt} - \frac{N_{i + 1}(\floor{n/h_n})}{n/h_n}  \right| > \frac{1}{b_n} \right\rbrace,
\end{equation}
where $h_n$ and $b_n$ are intermediate sequences such that $h_n \rightarrow \infty$, $b_n \rightarrow \infty$, $\frac{h_n}{n} \rightarrow 0$ and $\frac{b_n}{n} \rightarrow 0$ as $n \rightarrow \infty$. Intuitively, \eqref{eq:npstat} detects a changepoint when the $\mathbb{R}_+^\infty$ distance between the empirical degree distributions after the changepoint and some time in the distant past exceeds a given threshold. As reasonable defaults, \cite{banerjee2023fluctuation} recommends using $h_n = \log \log n$ and $b_n = n^{1/ \log \log n}$ which we will also follow. The consistency of $\hat{T}_n$ has been justified in \cite{banerjee2023fluctuation} as well, i.e. $\hat{T}_n \xrightarrow{p} t^\star$, provided that there is a changepoint at $t^\star$. Although methods developed in \citep{banerjee2023fluctuation} are applicable to a wide range of sublinear preferential attachment models, in their current state they do not provide a mechanism to test if a changepoint has occurred in the first place, motivating the need for a hypothesis testing framework and other likelihood-based alternatives.

\subsection{Likelihood ratio test}\label{sec:ChangepointEstimation}

As mentioned earlier, the goal of statistical changepoint detection is two-fold. We need to first decide if a changepoint has occurred, and if so, where it occurred. Both of these objectives can be conveniently met in a hypothesis testing framework. Consider the hypotheses
\begin{align}
\label{eq:hypotheses}
\begin{split}
H_0: \ &\text{The graph sequence  } \{ G(k) \}_{k = 2}^n \text{  evolves according to } \text{PA}(\delta). \\
H_A: \ &\exists t^\star \in [\gamma, 1 - \gamma] \text{ such that } \{ G(k) \}_{k = 2}^n \text{  evolves according to } \text{PA}(t^\star, \delta_1, \delta_2)
\end{split}
\end{align}
for some $\gamma \in (0, 1/2)$. Clearly, the alternative hypothesis implies the existence of a changepoint, while the null hypothesis assumes that the network is generated according to the PA rule \eqref{PAscheme} via a fixed $\delta$. Under $H_A$, $\gamma$ ensures that the changepoint is bounded away from $0$ and $1$. In practice, this assumption guarantees that a sufficient number of edges is generated under both PA schemes (allowing for accurate estimation of $\delta_1$ and $\delta_2$), but also plays a theoretical role in later proofs. Alternatively, one can choose to obviate the need for $\gamma$ by augmenting the statistic used to test the hypotheses \eqref{eq:hypotheses}. See \cite{fryzlewicz2014wild, hariz2007optimal} for more details.

To test \eqref{eq:hypotheses}, we consider the likelihood ratio
\begin{equation}
\label{LR}
\Lambda_m := \frac{L_{(0, n]}(\hat{\delta}_n)}{L_{(0, m]}(\hat{\delta}_{(0, m]}) L_{(m, n]}(\hat{\delta}_{(m, n]})} \qquad \text{for } m = \floor{n\gamma},\dots, \floor{n(1 - \gamma)}.
\end{equation}
The numerator in \eqref{LR} maximizes the likelihood under the null hypothesis of no changepoint and the denominator is maximized under the alternative $\delta_1 \neq \delta_2$ for some conjectured changepoint location $m$. For theoretical analyses, it is instead convenient to work with the statistic $-2 \log \Lambda_m$. With this transformation, a large value of $-2 \log \Lambda_m$ provides evidence in favor of a changepoint located at $m$. Therefore, one may expect the $m$ that maximizes $-2 \log \Lambda_m$ under the alternative to satisfy $m/n\approx t^\star$ as $n \rightarrow \infty$. 

In order to derive the asymptotic distribution for the maximum of $-2 \log \Lambda_m$ under the null hypothesis, it is amenable to convert to continuous time and consider a stochastic process perspective. Consider the log-likelihood ratios $-2 \log\Lambda_{nt}$ indexed by $t \in [\gamma, 1 - \gamma]$. This is a right-continuous process, and hence an element of the Skorohod space $D[\gamma, 1 - \gamma]$ which we will equip with the Skorohod metric. When the limiting process of $-2 \log\Lambda_{ nt}$ is derived, the continuous mapping theorem gives the asymptotic distribution of the supremum of $-2 \log\Lambda_{ nt}$ for $t \in [\gamma, 1 - \gamma]$. 

The following theorem gives the asymptotic distribution of 
$$\sup_{t \in [\gamma, 1 - \gamma]} -2 \log \Lambda_{nt},$$ 
under $H_0$, from which we derive the hypothesis test. Here, we use $B(t)$ to refer to a Brownian bridge process.

\begin{theorem}
\label{thm:nullhyp}
Fix $\gamma \in (0, 1/2)$. Then under $H_0$ in \eqref{eq:hypotheses}
\begin{equation}
\sup_{t \in [\gamma, 1 - \gamma]} -2 \log \Lambda_{nt} \Rightarrow \sup_{t \in [\gamma, 1 - \gamma]} \frac{B^2(t)}{t(1 - t)},
\end{equation}
in $\mathbb{R}$, where $\Rightarrow$ denotes the weak convergence.
%where $\xrightarrow{d}$ denotes weak convergence in $\mathbb{R}$.
\end{theorem}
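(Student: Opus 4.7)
The plan is to reduce $-2\log\Lambda_m$ to a quadratic form in the score and then to invoke a functional CLT for the score process together with a uniform LLN for the observed information. Writing
\begin{align*}
-2\log\Lambda_m =\;& 2\bigl[\ell_{(0,m]}(\hat{\delta}_{(0,m]}) - \ell_{(0,m]}(\delta)\bigr] + 2\bigl[\ell_{(m,n]}(\hat{\delta}_{(m,n]}) - \ell_{(m,n]}(\delta)\bigr] \\
&- 2\bigl[\ell_{(0,n]}(\hat{\delta}_n) - \ell_{(0,n]}(\delta)\bigr],
\end{align*}
and Taylor-expanding each bracket around $\delta$, using $u_{(0,m]}(\hat{\delta}_{(0,m]}) = 0$ and analogous identities, I would obtain
\begin{equation*}
-2\log\Lambda_{\floor{nt}} = \frac{u^2_{(0,\floor{nt}]}(\delta)}{-u'_{(0,\floor{nt}]}(\delta)} + \frac{u^2_{(\floor{nt},n]}(\delta)}{-u'_{(\floor{nt},n]}(\delta)} - \frac{u^2_{(0,n]}(\delta)}{-u'_{(0,n]}(\delta)} + o_p(1),
\end{equation*}
uniformly in $t \in [\gamma, 1-\gamma]$. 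Controlling the remainder uniformly requires a $\sqrt{n}$-consistency bound on $\hat{\delta}_{(ns,nt]} - \delta$, which I would strengthen from the pointwise statement in Lemma~\ref{lem:MLE}.

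Second, I would establish jointly the functional central limit theorem
\begin{equation*}
\frac{u_{(0,\floor{nt}]}(\delta)}{\sqrt{n}} \;\Rightarrow\; \sqrt{I(\delta;\delta)}\, W(t) \quad \text{in } D[0,1],
\end{equation*}
for a standard Brownian motion $W$, and the uniform LLN $\sup_{t \in [0,1]} | u'_{(0,\floor{nt}]}(\delta)/n + t\,I(\delta;\delta) | \xrightarrow{p} 0$. The representations \eqref{eq:score2} and \eqref{eq:scoreprime2} show the score as a martingale with respect to $\{\mathcal{F}_k\}$ under $H_0$, since $\mathbb{E}[(D_{v_k}(k-1)+\delta)^{-1} \mid \mathcal{F}_{k-1}] = (2+\delta)^{-1}$. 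The predictable quadratic variation converges to $t\,I(\delta;\delta)$ using \eqref{deg_counts2} (the $H_0$ specialization of Lemma~\ref{lem:tailchng}), and the Lindeberg condition is immediate from the bounded summands. A martingale FCLT then yields the claimed weak convergence.

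Combining the two ingredients and applying the continuous mapping theorem to the functional $(x,y) \mapsto x^2(t)/(ty(t)) + (x(1)-x(t))^2/((1-t)(y(1)-y(t))) - x(1)^2/y(1)$, which is continuous on the relevant subset of $D[\gamma, 1-\gamma]^2$ because $\gamma > 0$ keeps all denominators bounded away from zero, I get
\begin{equation*}
-2\log\Lambda_{\floor{nt}} \;\Rightarrow\; \frac{W^2(t)}{t} + \frac{(W(1)-W(t))^2}{1-t} - W^2(1) = \frac{(W(t) - tW(1))^2}{t(1-t)}.
\end{equation*}
Since $B(t) := W(t) - tW(1)$ is a standard Brownian bridge, a final continuous-mapping step applied to the $\sup$-functional on $C[\gamma, 1-\gamma]$ (which is continuous because the integrand has continuous sample paths on a closed sub-interval of $(0,1)$) delivers the theorem.

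The main obstacle is the uniform-in-$t$ control of the countable sums in \eqref{eq:score2} and \eqref{eq:scoreprime2}. The tail indices $i$ contribute to both the score's quadratic variation and to $u'$, but are sparsely populated for large $i$, so verifying tightness in $D[0,1]$ and convergence of the conditional variance requires a truncation-at-level-$M$ argument coupled with a uniform bound on $\sup_t \sum_{i > M} N_{>i}(\floor{nt})/\bigl(n(i+\delta)^2\bigr)$. A secondary but essential technical point is upgrading the pointwise consistency and asymptotic normality results of \cite{gao2017asymptotic} to statements that hold uniformly in $t \in [\gamma, 1-\gamma]$, which underlies both the negligibility of the Taylor remainder and the joint weak-convergence step used in the continuous mapping argument above.
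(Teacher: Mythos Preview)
Your proposal is correct and rests on the same three ingredients the paper uses---the martingale FCLT for $n^{-1/2}u_{(0,\floor{nt}]}(\delta)$ (Lemma~\ref{score_asymp}), the uniform LLN for $-n^{-1}u'_{(0,\floor{nt}]}(\lambda)$ (Lemma~\ref{lem:score_prime_asymp}), and the uniform consistency of $\hat{\delta}_{(ns,nt]}$ (Lemma~\ref{lem:delta_unif})---but it organizes them differently.

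The paper writes $-2\log\Lambda_{nt}$ as a two-term sum $I_{(0,nt]}+I_{(nt,n]}$, where each piece compares the local MLE to the full-data MLE $\hat{\delta}_n$, and Taylor-expands each piece about the \emph{local maximizer} so that the first-order term vanishes identically. This reduces each piece to $-n^{-1}u'(\delta_n^\star)\bigl(\sqrt{n}(\hat{\delta}_n-\hat{\delta}_{(0,nt]})\bigr)^2$, and the limit is read off from the functional convergence of the MLE process $\sqrt{n}(\hat{\delta}_{(0,nt]}-\delta)$ established as Lemma~\ref{lem:MLE}. Your route instead inserts $\pm 2\ell(\delta)$ to get a three-term decomposition and expands about the \emph{true} $\delta$, landing directly on the score-based quadratic $u^2/(-u')$ without passing through the MLE process. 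The algebraic identity you invoke,
\[
\frac{W^2(t)}{t}+\frac{(W(1)-W(t))^2}{1-t}-W^2(1)=\frac{(W(t)-tW(1))^2}{t(1-t)},
\]
is exactly the bridge that reconciles the two organizations. Your approach is arguably more direct for this theorem alone; the paper's detour through Lemma~\ref{lem:MLE} buys a reusable MLE-process limit that it exploits again in Theorems~\ref{thm:MLE_chng} and~\ref{thm:nullhyp2} and in Corollary~\ref{thm:nullhypSaRa}. Your identification of the truncation-at-level-$M$ argument for the tail sums and the need to upgrade \cite{gao2017asymptotic} to uniform statements matches precisely what Lemmas~\ref{lem:score_prime_asymp} and~\ref{lem:delta_unif} accomplish.
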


We defer the detailed proof of Theorem~\ref{thm:nullhyp} to Appendix~\ref{subsec:pf_thm1}.
In fact, Theorem \ref{thm:nullhyp} establishes a hypothesis testing framework for the null of no changepoint, rejects $H_0$ when the test statistics 
$$\sup_{t \in [\gamma, 1 - \gamma]} -2 \log \Lambda_{nt}$$ 
exceeds the $(1 - \alpha)$-th quantile of the limiting distribution, 
$$\sup_{t \in [\gamma, 1 - \gamma]} \frac{B^2(t)}{t(1 - t)}.$$ 
Though quantiles of the limit distribution are not readily available in most software, they can be simulated via Brownian motion realizations or approximations like that of Equation (1.3.26) in \cite{csorgo1997limit}.

Now that Theorem \ref{thm:nullhyp} gives us the ability to detect the presence of a changepoint, we would like to consistently estimate it. As mentioned earlier, large values of $-2 \log \Lambda_{nt}$ indicate that $t$ is likely a changepoint. Hence, under $H_A$, we expect the argmax of $-2 \log \Lambda_{nt}$ to be a consistent estimator of the true changepoint $t^\star$. However, in order for  $-2 \log \Lambda_{nt}$ to attain large values in a neighborhood about $t^\star$, we require that $L_{(nt, n]}(\hat{\delta}_{(nt, n]})$ is large for $t \geq t^\star$. That is, we require that the maximum likelihood estimator $\hat{\delta}_{(nt, n]}$ is a consistent estimator of $\delta_2$ under the PA$(t^\star; \delta_1, \delta_2)$. This condition is highly non-obvious as the long range dependence implied by Theorem \ref{thm:degdistchng} would suggest that the MLE may be corrupted by the pre-changepoint regime. This, however, is not the case as we present in Theorem \ref{thm:delta_unif_chng}, which is proved in Section \ref{sec:delta_unif_chng_proof} in the Appendix.

\begin{theorem}
\label{thm:delta_unif_chng}
Fix $s \in [t^\star, 1]$ and $\tau \in (0, 1 - s)$. Suppose that $\{ G(k) \}_{k = 2}^n$ evolves according to the PA$(t^\star; \delta_1, \delta_2)$ rule.
Then as $n \rightarrow \infty$
\begin{align*}
\sup_{t \in [s + \tau, 1]} \left| \hat{\delta}_{(ns, nt]} - \delta_2 \right| \xrightarrow{p} 0.
\end{align*}
\end{theorem}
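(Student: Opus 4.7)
The strategy is to cast uniform consistency as a standard M-estimator argument, lifted to uniform convergence in $t$. I would (i) show that the normalized score $n^{-1}u_{(ns, nt]}(\lambda)$ converges in probability to a deterministic limit $\Psi(\lambda; s, t)$ uniformly over $(\lambda, t) \in [\eta, K] \times [s+\tau, 1]$, and (ii) show that $\Psi(\cdot; s, t)$ has $\delta_2$ as its unique zero, with quantitative separation uniform in $t$. Since the empirical hessian $u'_{(ns,nt]}(\lambda) < 0$ (each term of the first sum in \eqref{eq:scoreprime2} is at least $(1+\lambda)^{-2} > (2+\lambda)^{-2}$, using $D_v \geq 1$), the empirical score is strictly decreasing in $\lambda$, so $\hat{\delta}_{(ns, nt]}$ is the unique root on $[\eta, K]$, and uniform convergence plus uniform identifiability will deliver the conclusion.

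For step (i), using \eqref{eq:score2} the natural candidate limit is
\[
\Psi(\lambda; s, t) := \sum_{i=1}^\infty \frac{t\, p^\star_{>i}(t; \delta_1, \delta_2) - s\, p^\star_{>i}(s; \delta_1, \delta_2)}{i+\lambda} - \frac{t-s}{2+\lambda}.
\]
Lemma \ref{lem:tailchng} delivers convergence of each summand, uniformly in $t$. The main technical point is the tail $i > M$: the degree-sum identity $\sum_{i \geq 1} N_{>i}(k) = \sum_v (D_v(k) - 1) = 2k - k = k$ (each new vertex contributes exactly one edge) gives $\sum_{i > M}(i+\lambda)^{-1} N_{>i}(\floor{nt})/n \leq (M+\eta)^{-1}$ uniformly in $t \in [s+\tau, 1]$ and $\lambda \in [\eta, K]$, and the mean-degree identity $\sum_i p^\star_{>i}(t) = 1$ gives the analogous bound for the deterministic tail. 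Sending $n \to \infty$ then $M \to \infty$ yields the claimed uniform convergence.

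For step (ii), apply Lemma \ref{lem:recur} to substitute $t p^\star_{>i}(t) - s p^\star_{>i}(s) = (i+\delta_2)(2+\delta_2)^{-1} \int_s^t p^\star_i(u;\delta_1,\delta_2)du$, and set $q_i(s, t) := (t-s)^{-1}\int_s^t p^\star_i(u;\delta_1,\delta_2)du$, which is a probability mass function on $\mathbb{N}^+$ since $\sum_i p^\star_i \equiv 1$. Using $\sum_i q_i = 1$ together with the algebraic identity $(i+\delta_2)(2+\lambda) - (2+\delta_2)(i+\lambda) = (i-2)(\lambda - \delta_2)$, one finds
\[
\Psi(\lambda; s, t) = \frac{(t-s)(\lambda - \delta_2)}{(2+\delta_2)(2+\lambda)} \sum_{i=1}^\infty \frac{i-2}{i+\lambda}\, q_i(s, t).
\]
The mean-degree identity $\sum_i i\, q_i(s, t) = 2$ gives $\sum_i (i-2) q_i(s, t) = 0$. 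Since $i \mapsto i-2$ is strictly increasing and $i \mapsto (i+\lambda)^{-1}$ is strictly decreasing, Chebyshev's sum inequality forces $\sum_i (i-2)(i+\lambda)^{-1} q_i(s, t) < 0$ strictly, with $q$ non-degenerate uniformly in $t$ (since $p^\star_i > 0$ for $i = 1, 2$). By joint continuity in $(\lambda, t)$ and compactness, $\inf_{|\lambda - \delta_2| \geq \epsilon,\, t \in [s+\tau, 1]} |\Psi(\lambda; s, t)| > 0$ for every $\epsilon > 0$. Combined with step (i), for each such $\epsilon$ we have $u_{(ns, nt]}(\delta_2 - \epsilon) > 0 > u_{(ns, nt]}(\delta_2 + \epsilon)$ uniformly in $t$ with probability tending to one, forcing $\sup_t |\hat{\delta}_{(ns, nt]} - \delta_2| < \epsilon$. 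The main obstacle is the uniform-in-$t$ tail control of the infinite $i$-series; the degree-sum identity is the key technical ingredient that resolves it cleanly.
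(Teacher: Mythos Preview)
Your proof follows the same structure as the paper's: uniform convergence of $n^{-1} u_{(ns,nt]}(\lambda)$ over $(\lambda,t)$ (their Lemma \ref{lem:score_asymp_chng}, with a tail argument equivalent to yours) together with the Lemma \ref{lem:recur} factorization and the mean-degree identity to pin the unique zero at $\delta_2$ (their Lemma \ref{lem:unique_zero}), followed by the standard $M$-estimator argument of Lemma \ref{lem:delta_unif}. Your Chebyshev covariance inequality is a clean alternative to the paper's direct term-by-term bound for the sign of $\sum_i (i-2)(i+\lambda)^{-1} q_i$; the one point to tighten is that the identity $\sum_i i\, q_i(s,t)=2$ does not follow from the finite-$n$ edge count $\sum_i N_{>i}(k)=k$ without a uniform-integrability step in passing to the limit, and the paper instead establishes $\sum_i (i+\delta_2)\,p^\star_i(u;\delta_1,\delta_2)=2+\delta_2$ directly via the birth-immigration moment computation in the proof of Lemma \ref{lem:unique_zero}.
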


We may also prove asymptotic normality of the MLE for $\delta_2$ under the PA$(t^\star; \delta_1, \delta_2)$ model, as given in Theorem \ref{thm:MLE_chng}. Note that the asymptotic variance of the centered and scaled $\hat{\delta}_{(ns, nt]}$ is given by $\left(I^\star_t(\delta_2) - I^\star_s(\delta_2)\right)^{-1}$ when $s \geq t^\star$. Here, 
\begin{align*}
I^\star_t(\lambda) =  t \left(\sum_{i = 1}^\infty \frac{p_{>i}^\star(t; \delta_1, \delta_2)}{(i + \lambda)^2} - \frac{1}{(2 + \lambda)^2}\right).
\end{align*}
Theorem \ref{thm:MLE_chng} indicates that, equating the number of edges used in each estimation procedure, the asymptotic variance of the MLE for $\delta_2$ under the PA$(t^\star; \delta_1, \delta_2)$ regime is smaller than the asymptotic variance of the MLE for $\delta_2$ under the PA$(\delta_2)$ model if $\delta_1 < \delta_2$. On the other hand, if $\delta_1 > \delta_2$, $\hat{\delta}_{(ns, nt]}$ is subject to a larger asymptotic variance under the PA$(t^\star; \delta_1, \delta_2)$ model when compared to the PA$(\delta_2)$ setting. 

This finding can be explained by relationship between the asymptotic variance of the MLE and the tail behavior of the degree distribution under the PA$(t^\star; \delta_1, \delta_2)$ model. Note that $N_{>i}(\floor{nt}) - N_{>i}(\floor{ns})$  counts the number of nodes that transitioned from degree $i$ to degree $i + 1$ in between time steps $\floor{ns}$ and $\floor{nt}$. Thus, if the parameters of the PA$(t^\star; \delta_1, \delta_2)$ model are such that more nodes make transitions between large degrees, $tp_{>i}(t; \delta_1, \delta_2) - sp_{>i}(s; \delta_1, \delta_2)$ tends to decay more slowly as $i \rightarrow \infty$, resulting in a smaller asymptotic variance of the MLE. However, in order for a node to make transitions between large degrees, the node must attain a large degree in the first place. Since $p_i(\delta)$ has a power law tail index of $3 + \delta$, $\delta_1 < \delta_2$ encourages more nodes to attain a large degree before the changepoint than otherwise would have occurred if the network was generated under the PA($\delta_2$) model. The setting where $\delta_1 < \delta_2$ thus acts as a stimulus that fosters the opportunity for more nodes to pass through larger degrees after the changepoint, while  $\delta_1 > \delta_2$ has the opposite effect.

\begin{theorem}\label{thm:MLE_chng}
Fix $s \in [t^\star, 1]$ and $\tau \in (0, 1 - s)$. Assume that $\{ G(k) \}_{k = 2}^n$ evolves according to the PA$(t^\star; \delta_1, \delta_2)$ rule.
Then as $n \rightarrow \infty$
\[ 
\left(I^\star_t(\delta_2) - I^\star_s(\delta_2)\right) \cdot \sqrt{n}(\hat{\delta}_{(ns, nt]} - \delta_2) \Rightarrow
W\left(I^\star_t(\delta_2)\right) - W\left(I^\star_s(\delta_2) \right)
\] in $D[s + \tau, 1]$ where $W(\cdot)$ is a Wiener process. Moreover, if $\delta_1 < \delta_2$,
\begin{align*}
(I_t^\star(\delta_2)& - I_s^\star(\delta_2))^{-1} < (t - s)^{-1} I^{-1}(\delta_2; \delta_2),
\end{align*}
and if $\delta_1 > \delta_2$
\begin{align*}
(I_t^\star(\delta_2)& - I_s^\star(\delta_2))^{-1} > (t - s)^{-1}I^{-1}(\delta_2; \delta_2).
\end{align*}
\end{theorem}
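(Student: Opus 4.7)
The proof splits into the functional CLT and the variance comparison, and I would treat them separately.

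For the functional CLT, I would start from the mean-value expansion $0 = u_{(ns,nt]}(\hat{\delta}_{(ns,nt]}) = u_{(ns,nt]}(\delta_2) + u'_{(ns,nt]}(\tilde{\delta}_n)(\hat{\delta}_{(ns,nt]} - \delta_2)$ with $\tilde{\delta}_n$ between $\hat{\delta}_{(ns,nt]}$ and $\delta_2$, and handle the numerator and denominator separately. Since $s\ge t^\star$, every edge $k>\floor{ns}$ attaches under the PA$(\delta_2)$ rule, so $X_k := 1/(D_{v_k}(k-1)+\delta_2) - 1/(2+\delta_2)$ is a bounded martingale difference with respect to $(\mathcal{F}_k)$. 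A direct calculation gives $\mathbb{E}[X_k^2\mid\mathcal{F}_{k-1}] = \frac{1}{(2+\delta_2)(k-1)}\sum_i \frac{N_i(k-1)}{i+\delta_2} - \frac{1}{(2+\delta_2)^2}$; Riemann-summing and applying Theorem~\ref{thm:degdistchng} together with Lemma~\ref{lem:recur} shows $n^{-1}\sum_{k=\floor{ns}+1}^{\floor{nt}}\mathbb{E}[X_k^2\mid\mathcal{F}_{k-1}]$ converges uniformly in $t\in[s+\tau,1]$ to $I_t^\star(\delta_2) - I_s^\star(\delta_2)$. Boundedness of $|X_k|$ makes the Lindeberg condition trivial, so the functional martingale CLT yields $u_{(ns,nt]}(\delta_2)/\sqrt{n} \Rightarrow W(I_t^\star(\delta_2)) - W(I_s^\star(\delta_2))$ in $D[s+\tau,1]$. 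For the denominator, Theorem~\ref{thm:delta_unif_chng} gives $\tilde{\delta}_n \to \delta_2$ uniformly, and applying Lemma~\ref{lem:tailchng} in~\eqref{eq:scoreprime2} gives $u'_{(ns,nt]}(\tilde{\delta}_n)/n \to -(I_t^\star(\delta_2)-I_s^\star(\delta_2))$ uniformly in $t$. Slutsky and continuous mapping then yield the claimed weak convergence.

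For the variance comparison, using Lemma~\ref{lem:recur} on $tp^\star_{>i}(t)-sp^\star_{>i}(s)$ and identity~\eqref{cdfpmf} on $p_{>i}(\delta_2)$, I rewrite $(I_t^\star(\delta_2)-I_s^\star(\delta_2)) - (t-s)I(\delta_2;\delta_2) = (2+\delta_2)^{-1}\int_s^t [\mathbb{E}(1/(D^\star_u+\delta_2)) - \mathbb{E}(1/(D+\delta_2))]\,du$, where $D^\star_u \sim p^\star_\cdot(u;\delta_1,\delta_2)$ and $D \sim p_\cdot(\delta_2)$. Combining the mixture representation of Theorem~\ref{thm:degdistchng} with the identity $D \stackrel{d}{=} \xi_{\delta_2}(T_2)$ for $T_2 \sim \mathrm{Exp}(2+\delta_2)$, I condition $D$ on the event $\{T_2 \le \tau^\star(u)\}$: this event has probability $1-t^\star/u$, and by memorylessness its contribution to $\mathbb{E}[1/(D+\delta_2)]$ coincides exactly with the post-changepoint component of $\mathbb{E}[1/(D^\star_u+\delta_2)]$. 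After cancellation and an application of the semigroup property of the birth--immigration process, the difference reduces to $(t^\star/u)\{\mathbb{E}[h(X_1)] - \mathbb{E}[h(X_2)]\}$, where $h(x) := \mathbb{E}[1/(\xi^x_{\delta_2}(\tau^\star(u))+\delta_2)]$ and $X_i \sim p_\cdot(\delta_i)$.

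To compare $\mathbb{E}[h(X_1)]$ with $\mathbb{E}[h(X_2)]$, I would establish two facts. First, $h$ is strictly convex: the decomposition $\xi^x_{\delta_2}(\tau) \stackrel{d}{=} \xi^1_{\delta_2}(\tau) + \sum_{k=1}^{x-1} Y_k(\tau)$ into a birth--immigration process plus $x-1$ iid Yule processes, combined with $1/z = \int_0^\infty e^{-sz}\,ds$, produces $h(x) = \int_0^\infty e^{-s\delta_2}\phi_\xi(s)\psi_Y(s)^{x-1}\,ds$, an integral over a set of positive measure of functions strictly convex in $x$. Second, when $\delta_1 < \delta_2$ one has $X_2 \leq_{cx} X_1$: both distributions have mean $2$ (an easy consequence of~\eqref{cdfpmf}), and a direct gamma-function computation shows that $R(i) := p_{>i}(\delta_1)/p_{>i}(\delta_2)$ satisfies $R(i+1)/R(i) - 1$ proportional to $(\delta_2-\delta_1)(i-1)$, so $R(i)$ is nondecreasing in $i$, starts below $1$ at $i=1$, and grows to $\infty$; hence the two tails cross exactly once, and the Karlin--Novikoff theorem converts this single-crossing plus equal means into the convex order. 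Strict convexity of $h$ together with $X_1$ and $X_2$ having distinct distributions then gives $\mathbb{E}[h(X_1)] > \mathbb{E}[h(X_2)]$ strictly, so the integrand above is positive, yielding $I_t^\star(\delta_2) - I_s^\star(\delta_2) > (t-s)I(\delta_2;\delta_2)$. Inversion gives the first inequality, and the $\delta_1 > \delta_2$ case is entirely symmetric. The main obstacle I expect is verifying the convex order: the single-crossing argument must be handled delicately near the small-$i$ boundary where $R$ is momentarily flat, and only then does strict convexity of $h$, established via the Yule decomposition, close the argument.
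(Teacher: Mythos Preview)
Your functional CLT argument is essentially the paper's own proof: the same Taylor expansion, the same martingale difference array with conditional variance computed via Theorem~\ref{thm:degdistchng} and Lemma~\ref{lem:recur}, uniform consistency of $\hat{\delta}_{(ns,nt]}$ from Theorem~\ref{thm:delta_unif_chng}, and uniform convergence of the Hessian. Nothing to add there.

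Your variance comparison, however, takes a genuinely different route. The paper writes
\[
I_t^\star(\delta_2) - I_s^\star(\delta_2) - (t-s)I(\delta_2;\delta_2)
= \frac{1}{2+\delta_2}\int_s^t \sum_{i\ge 1}\frac{1}{i+\delta_2}\int_{\delta_2}^{\delta_1}\frac{\partial}{\partial\lambda}p_i^\star(u;\lambda,\delta_2)\,d\lambda\,du,
\]
using $p_i^\star(u;\delta_2,\delta_2)=p_i(\delta_2)$, and then argues that $\partial_\lambda p_i^\star<0$ via a digamma calculation. You instead keep everything probabilistic: by conditioning $T_2\sim\mathrm{Exp}(2+\delta_2)$ on $\{T_2\le\tau^\star(u)\}$ and using memorylessness plus the Markov property of the birth--immigration process, you reduce the integrand to $(t^\star/u)\bigl(\mathbb{E}[h(X_1)]-\mathbb{E}[h(X_2)]\bigr)$ with $X_i\sim p_\cdot(\delta_i)$, then close via strict convexity of $h$ (from the Yule decomposition and Laplace representation) and the single-crossing/Karlin--Novikoff criterion for convex order. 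Both routes hinge on the identity $p_i^\star(u;\delta_2,\delta_2)=p_i(\delta_2)$, but what they do next diverges. The paper's path is shorter in principle but its stated derivative of $p_j(\lambda)$ drops the $\lambda$-dependent prefactor $(2+\lambda)\Gamma(3+2\lambda)/\Gamma(1+\lambda)$, and the claim $\partial_\lambda p_j(\lambda)<0$ for every $j$ is incompatible with $\sum_j p_j(\lambda)\equiv 1$ (indeed a direct check gives $p_2'(\lambda)>0$); so that argument would need repair. Your convex-order approach avoids this differentiation entirely, is fully rigorous as stated, and gives a clean structural reason for the inequality: $p_\cdot(\delta_1)$ and $p_\cdot(\delta_2)$ share mean $2$ but are convex-ordered, and $h$ is strictly convex. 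The cost is the extra machinery (Yule decomposition, Karlin--Novikoff), but the payoff is a self-contained proof that also makes transparent \emph{why} the long-range dependence helps precisely when $\delta_1<\delta_2$.
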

Theorem \ref{thm:MLE_chng} is proved in Section \ref{sec:MLE_chng_proof} of the Appendix. Theorem \ref{thm:consistency} formalizes the consistency of the likelihood ratio procedure. 

\begin{theorem}
\label{thm:consistency}
Fix $\gamma \in (0, 1/2)$ and assume that there exists only one changepoint at $t^\star \in [\gamma, 1 - \gamma]$. Suppose also that the network evolves according to PA$(t^\star; \delta_1, \delta_2)$. Then we have
\begin{equation}\label{eq:est}
\hat{t}_n := n^{-1}\underset{\floor{n\gamma}\le m\le \floor{(1 - \gamma)n} }{\argmax} -2 \log \Lambda_m \xrightarrow{p} t^\star,
\end{equation}
as $n \rightarrow \infty$.
\end{theorem}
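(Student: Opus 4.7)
The plan is to exhibit a deterministic, continuous function $F$ on $[\gamma, 1-\gamma]$ whose unique maximizer is $t^\star$, show that $G_n(t) := -\frac{2}{n}\log\Lambda_{\floor{nt}}$ converges in probability to $F$ uniformly on $[\gamma, 1-\gamma]$, and then invoke the standard argmax continuous mapping principle. First I would observe that the deterministic $\sum_k\log(k-1)$ term in \eqref{eq:loglikelihood} cancels across the three log-likelihoods making up $\log\Lambda_m$, leaving
\[
-2\log\Lambda_{\floor{nt}} = 2\bigl[A_n(0,t,\hat\delta_{(0,nt]}) + A_n(t,1,\hat\delta_{(nt,n]}) - A_n(0,1,\hat\delta_n)\bigr],
\]
where $A_n(s,t,\lambda) := \sum_{i\ge 1}[N_{>i}(\floor{nt}) - N_{>i}(\floor{ns})]\log(i+\lambda) - (\floor{nt}-\floor{ns})\log(2+\lambda)$.

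Next I would establish uniform convergence of the two ingredients---the tail-count process and the three MLEs. Lemma~\ref{lem:tailchng} together with the classical pre-changepoint limit from \eqref{deg_counts2} gives $n^{-1}[N_{>i}(\floor{nt}) - N_{>i}(\floor{ns})] \xrightarrow{p} \mu_i(s,t)$ uniformly in $(s,t)$, where $\mu_i(s,t) = tp_{>i}^\star(t;\delta_1,\delta_2) - sp_{>i}^\star(s;\delta_1,\delta_2)$ with the convention $p_{>i}^\star(u;\delta_1,\delta_2) = p_{>i}(\delta_1)$ for $u \leq t^\star$. The power-law decay of $\mu_i$ in $i$ (inherited from the asymptotics after \eqref{cdfpmf}) justifies a dominated-convergence interchange of limit with the infinite sum, yielding $n^{-1}A_n(s,t,\lambda)\xrightarrow{p}\mathcal{A}(s,t,\lambda)$ uniformly in $(s,t,\lambda)$ on compact subsets of $\{0\le s\le t\le 1\}\times[\eta,K]$. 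Strict concavity of $\mathcal{A}(s,t,\cdot)$ gives a unique root $\delta^*(s,t)$ of its $\lambda$-derivative; Theorem~\ref{thm:delta_unif_chng} handles the case $s\ge t^\star$ (with $\delta^*(s,t)=\delta_2$), the case $t\le t^\star$ reduces to the homogeneous PA$(\delta_1)$ regime covered by \cite{gao2017asymptotic} (with $\delta^*(s,t)=\delta_1$), and intervals straddling $t^\star$ yield to the same argument as in Theorem~\ref{thm:delta_unif_chng}. Propagating uniform consistency of $\hat\delta_{(ns,nt]}$ through the joint-uniform convergence of $n^{-1}A_n$ delivers
\[
G_n(t) \xrightarrow{p} F(t) := 2\bigl[M(0,t) + M(t,1) - M(0,1)\bigr] \quad \text{uniformly on }[\gamma,1-\gamma],
\]
where $M(s,t) := \mathcal{A}(s,t,\delta^*(s,t))$ and $F$ is continuous.

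I would then verify that $F$ has a unique maximum at $t^\star$. At $t=t^\star$ the two segments are homogeneous and $\delta^*(0,t^\star)=\delta_1$, $\delta^*(t^\star,1)=\delta_2$, so each profile attains its population maximum. For $t\neq t^\star$, one of the intervals $(0,t)$ or $(t,1)$ strictly straddles $t^\star$, so its population score is solved by a compromise $\delta^*(s,t)$ that is not the true generator on either portion. A Kullback--Leibler-type strict inequality---asserting that the per-edge expected log-likelihood $(t-s)^{-1}\mathcal{A}(s,t,\lambda)$ is uniquely maximized by the true transition intensities $\mu_i/(t-s)$, which no constant $\lambda$ realizes on a straddling interval because $\delta_1\neq\delta_2$---gives $M(0,t)+M(t,1) < M(0,t^\star)+M(t^\star,1)$. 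Since $M(0,1)$ does not depend on $t$, this yields $F(t) < F(t^\star)$ for all $t\neq t^\star$.

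Finally, uniform convergence of $G_n$ to $F$ on the compact set $[\gamma,1-\gamma]$ combined with unique maximization of $F$ at $t^\star\in(\gamma,1-\gamma)$ delivers $\hat t_n\xrightarrow{p}t^\star$ by the standard argmax continuous mapping theorem. The main obstacle is the strict KL inequality $F(t)<F(t^\star)$: because $F$ is a three-term difference of profile maxima rather than a single log-likelihood, a careful bookkeeping is needed to show that the gap $[M(0,t)+M(t,1)]-[M(0,t^\star)+M(t^\star,1)]$ is negative whenever the split is misplaced---this reduces to showing that the optimal single $\lambda$ fit on a straddling interval is strictly worse than the two-parameter fit obtained by further splitting at $t^\star$, which is ultimately an application of Jensen's inequality to the strictly concave log-likelihood. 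The uniform interchange of limit and infinite sum in $A_n$ is more routine but still requires the tail bounds on $\mu_i$ inherited from the polynomial decay of $p_i^\star$.
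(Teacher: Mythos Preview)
Your strategy---normalize by $n$, exhibit a deterministic uniform limit $F$ for $G_n=-\tfrac{2}{n}\log\Lambda_{\lfloor n\cdot\rfloor}$, show $F$ is uniquely maximized at $t^\star$, then apply the argmax theorem---is sound and genuinely different from the paper's. The paper never normalizes by $n$ nor identifies a limiting profile. Instead it compares the unnormalized statistic at $t^\star$ directly to its supremum over $[\gamma,t^\star-\tau]$: writing $-2\log\Lambda_{nt^\star}+2\log\Lambda_{ns}\ge A_1+A_2+A_3$, it shows $A_1,A_2$ are $O_p(1)$ (null-type analysis as in Theorem~\ref{thm:nullhyp}) while $A_3(n,s)=2[\ell_{(nt^\star,n]}(\hat\delta_{(nt^\star,n]})-\ell_{(nt^\star,n]}(\hat\delta_{(ns,n]})]\xrightarrow{p}+\infty$ uniformly in $s$, via a mean-value expansion together with the separation of $\hat\delta_{(nt^\star,n]}$ from $\hat\delta_{(ns,n]}$ and the fact that $|u_{(nt^\star,n]}(\cdot)|$ is of order $n$ away from $\delta_2$. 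Your argmax route is arguably cleaner conceptually and yields the extra information that $F$ is an explicit Kullback--Leibler-type contrast; the paper's route is more direct and avoids having to identify the straddling-interval pseudo-true value $\delta^*(s,t)$ at all.

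There is, however, a genuine technical gap in your uniform-convergence step for $n^{-1}A_n$. The representation $A_n(s,t,\lambda)=\sum_{i\ge1}[N_{>i}(\lfloor nt\rfloor)-N_{>i}(\lfloor ns\rfloor)]\log(i+\lambda)-(\lfloor nt\rfloor-\lfloor ns\rfloor)\log(2+\lambda)$ has weights $\log(i+\lambda)$ that \emph{grow} in $i$, and the only deterministic envelope available for the pre-limit counts is $N_{>i}(k)/k\le 2/i$; but $\sum_i(2/i)\log(i+K)=\infty$, so the dominated-convergence interchange you invoke does not go through as stated. The polynomial decay you cite belongs to the \emph{limit} $\mu_i$, not to $n^{-1}N_{>i}$. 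This is exactly why the paper's approach, which only ever needs uniform convergence of $n^{-1}u_{(ns,nt]}$ and $n^{-1}u'_{(ns,nt]}$ (weights $(i+\lambda)^{-1}$ and $(i+\lambda)^{-2}$, summable against $2/i$), is technically cleaner. A simple repair within your framework is to center at a fixed reference $\lambda_0\in[\eta,K]$: work with $\tilde A_n(s,t,\lambda):=n^{-1}[\ell_{(ns,nt]}(\lambda)-\ell_{(ns,nt]}(\lambda_0)]$, whose summands carry weights $\log\!\big(\tfrac{i+\lambda}{i+\lambda_0}\big)=O(1/i)$ uniformly in $\lambda\in[\eta,K]$, so that the $2/i$ envelope \emph{is} summable; the $\ell_{(\cdot,\cdot]}(\lambda_0)$ baselines cancel identically across the three terms of $G_n$, so nothing is lost.
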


We leave the proof of Theorem~\ref{thm:consistency} to Appendix~\ref{subsec:pf_consistency}, and provide comparisons on the performance of $\hat{t}_n$ and $\hat{T}_n$ through a simulation study in the next section.

\subsection{Simulation study}

Though the consistency of $\hat{t}_n$ and $\hat{T}_n$ has been justified, 
we now further evaluate and compare their numerical performance
%of the likelihood methodology presented in Section \ref{sec:ChangepointEstimation} and compare it to that of Section \ref{sec:NP} 
on simulated data. 
In each simulation, the preferential attachment networks are initialized by a node with a self-loop, and all simulations assume a typical significance level of $\alpha = 0.05$. We first assess the Type 1 error rate for the likelihood ratio methodology by simulating $500$ PA$(\delta)$ networks with $50{,}000$ edges, $\delta \in \{-0.5, 0, 1, 2 \}$ and $\gamma = 0.1$. The Type 1 error rates and their Bernoulli-based standard errors are presented in Table \ref{tab:H0}. As expected, the Wald-based confidence intervals for the rejection rates all contain $\alpha = 0.05$. Hence, the asymptotic distribution in Theorem \ref{thm:nullhyp} is a good approximation for the distribution of the likelihood ratio under the null hypothesis, even at small sample sizes. Most real-world dynamic networks (e.g. those listed on data repositories like SNAP \cite{snapnets} and KONECT \cite{konect}) contain well over $50{,}000$ edges and hence asymptotic approximations are reliable, assuming the network is truly generated by a PA process.

\begin{table}
\centering
\begin{tabular}{ c c c c c}
\hline \hline
$\delta$ & $-0.5$ & $0$ & $1$ & $2$ \\ 
\hline
Rejection Rate & $0.050$ & $0.042$ & $0.046$ & $0.052$  \\  
Standard Error & $0.0097$ & $0.0090$ & $0.0094$ & $0.0099$ \\
\hline 
\end{tabular}
\caption{Rejection rates and (Bernoulli-based) standard errors for the likelihood ratio test applied to 500 PA$(\delta)$ networks with $50{,}000$ edges and $\gamma = 0.1$.}
\label{tab:H0}
\end{table}

Next, we evaluate the statistical power and estimation of the changepoint location for the likelihood ratio methodology. We apply the likelihood ratio test to 500 simulated PA$(0.6; 0,\delta_2)$ networks, i.e. they are initialized by a Barab\'asi-Albert process, PA$(0)$, and then transitions to a PA$(\delta_2)$ process at $t^\star = 0.6$. The networks have $50{,}000$ edges and we again let $\gamma = 0.1$. Here, we allow $\delta_2$ to vary between $-0.2$ and $0.2$ by increments of $0.02$. An empirical power curve and mean absolute error for the changepoint location for tests that rejected the null hypothesis are presented in Figure \ref{fig:HA}. The likelihood ratio methodology is powerful; differences in the offset parameter beyond $0.1$ in absolute value are detected well-over 95\% of the time. Interestingly, decreases in the offset parameter are detected slightly more often than increases. For differences in the offset parameter below $0.1$ in absolute value, however, the estimation of the changepoint location becomes unreliable. 

In order to further assess the accuracy of the changepoint location estimate, we compare the estimation error for the likelihood ratio and non-parametric estimators, $\hat{t}_n$ and $\hat{T}_n$, on larger networks with greater, more realistic differences in the offset parameter. Specifically, we simulate 500 networks of size $100{,}000$ from a PA$(0.6; 0, \delta_2)$ process, where $\delta_2$ is allowed to range from $0.1$ to $0.5$ by increments of $0.1$. The mean absolute error in changepoint location is presented in Table \ref{tab:HA2}. The likelihood ratio test rejects the null hypothesis in all simulations. Compared to previous simulations, $\hat{t}_n$ becomes highly accurate with larger differences in the offset parameter and has uniformly smaller error than $\hat{T}_n$. The small differences in the offset parameter require the addition of many nodes to manifest in the empirical degree distribution according to the law of large numbers effect, hence $\hat{T}_n$ performs poorly for small $\delta_2$. 
We also remark here that the supremacy of the likelihood ratio is unsurprising, since the likelihood-based methodology should out-perform non-parametric procedures when the data-generating model is correctly specified. 

The non-parametric estimator $\hat{T}_n$, however, was designed to accommodate more general preferential attachment models. Namely, $\hat{T}_n$ can detect differences between sublinear regimes of preferential attachment as explained in Assumption 2.1 of \cite{banerjee2023fluctuation}. In order to assess the robustness of the likelihood ratio methodology to more general changes in the degree-based attractiveness of nodes, we generate 500 networks with $100{,}000$ edges, where the attachment function changes from linear to sublinear at $t^\star = 0.6$. 
%That is, if we let $i$ denote the degree of a generic node $v$ in the network, 
At step $k$, node $v$ is chosen to be attached to with probability proportional to $D_v(k-1) + 1$ for $k\le \lfloor nt^\star\rfloor$, and $(D_v(k-1) + 1)^b$ for $k\ge \lfloor nt^\star\rfloor+1$. Here we allow $b\in \{0.5, 0.6, 0.7, 0.8, 0.9\}$, where larger values of $b$ thus indicate less dramatic changes in the attachment function. Mean absolute errors (MAE) in changepoint location for $\hat{t}_n$ and $\hat{T}_n$ are reported in Table \ref{tab:HAsub}. Although $\hat{\delta}_{(nt, n]}$ is generally computed under a misspecified model, the likelihood ratio methodology still performs well by virtue of $\hat{\delta}_{(0, nt]}$ and $\hat{\delta}_{(nt, n]}$ being far apart near the true changepoint. The non-parametric estimator $\hat{T}_n$ performs well for lower values of $b$, but if the change in attachment function is subtle, it performs worse. This simulation, along with the previous, indicates that the empirical degree distribution is not sensitive to small changes in the attachment function from a statistical perspective.  

\begin{figure}
\includegraphics[width=\textwidth]{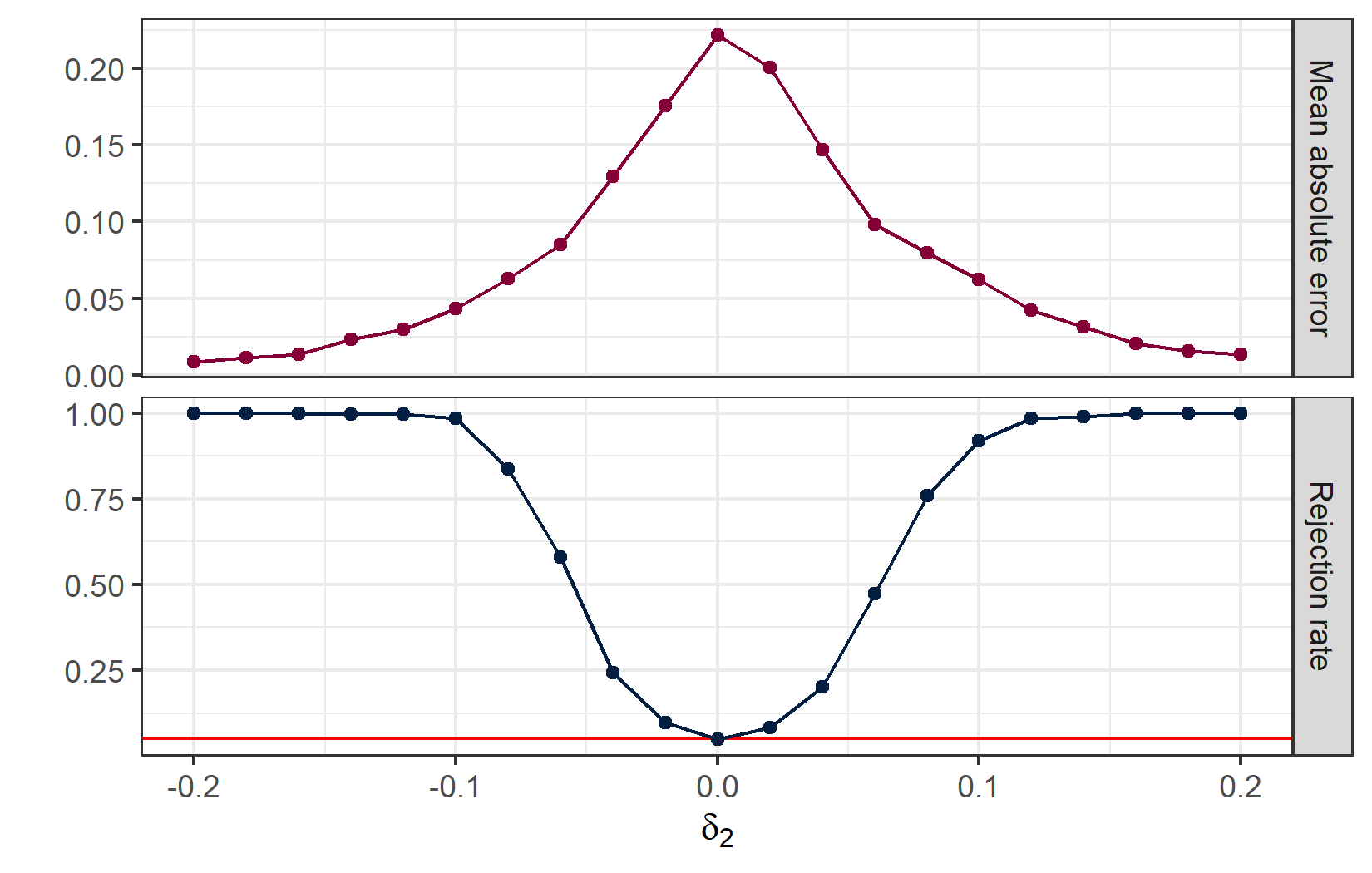}
\caption{Empirical rejection rates for the likelihood ratio test applied to 500 PA$(0.6; 0, \delta_2)$ networks with $50{,}000$ edges and $\gamma = 0.1$ along with average absolute error for changepoint location. Error is only computed for those tests that reject the null hypothesis. Red line indicates $\alpha = 0.05$.}
\label{fig:HA}
\end{figure}

\begin{table}
\centering
\begin{tabular}{ c c c c c c}
\hline \hline
$\delta_2$ & $0.1$ & $0.2$ & $0.3$ & $0.4$ & $0.5$ \\ 
\hline
MAE($\hat{t}_n$) & $0.0281$ & $0.0070$ & $0.0030$ & $0.0018$ & $0.0013$  \\
MAE($\hat{T}_n$) & $0.2574$ & $0.1518$ & $0.0930$ & $0.0710$ & $0.0560$ \\
\hline 
\end{tabular}
\caption{Empirical mean absolute error of $\hat{t}_n$ and $\hat{T}_n$ for 500 PA$(0.6; 0, \delta_2)$ networks with $100{,}000$ edges  and $\gamma = 0.1$.}
\label{tab:HA2}
\end{table}

\begin{table}
\centering
\begin{tabular}{ c c c c c c}
\hline \hline
$b$ & $0.5$ & $0.6$ & $0.7$ & $0.8$ & $0.9$ \\ 
\hline
MAE($\hat{t}_n$) & $0.0004$ & $0.0005$  & $0.0009$ & $0.0019$  & $0.0079$  \\
MAE($\hat{T}_n$) & $0.0343$ & $0.0440$ & $0.0607$ & $0.0969$ & $0.2320$ \\
\hline 
\end{tabular}
\caption{Empirical mean absolute error for $\hat{t}_n$ and $\hat{T}_n$ for 500 networks initialized by a PA$(1)$ process which transitions to a sublinear PA process with attachment function $(\text{Degree} + 1)^b$ at $t^\star = 0.6$. Networks are of $100{,}000$ edges, and we choose $\gamma = 0.1$.}
\label{tab:HAsub}
\end{table}

\section{Multiple changepoint detection}\label{sec:ScoreTest}

So far our study only focuses on the case with a single changepoint, but for pragmatic purposes, it is important to consider the generalization to circumstances with multiple changepoints.
We start with discussions on some difficulties that the methodology presented in Section \ref{sec:ChangepointEstimation} may face when applied to the multiple changepoint setting and present an alternative strategy to detect multiple changepoints. The consideration of multiple changepoints can not only accentuate difficulties faced in the single changepoint case, but also introduce new complications. Akin to the single changepoint case, statistical multiple changepoint detection is often presented under the hypothesis testing framework, a convention that we will follow \cite{chakraborty2021high, niu2016multiple}. 

Fix $\gamma \in (0, 1/2)$ and consider the hypothesis test
\begin{align}
\label{eq:multiplehypotheses}
\begin{split}
H_0: \ &\text{The graph sequence  } \{ G(k) \}_{k = 2}^n \text{  evolves according to } \text{PA}(\delta). \\
H_A: \ &\text{There exists a partition } \gamma < t^\star_1 <  \dots < t^\star_{\tau} < 1 - \gamma \text{ such that } \{ G(k) \}_{k = 2}^n \\
&\text{ evolves according to } \text{PA}(t^\star_1, \dots, t^\star_{\tau}; \delta_1, \dots, \delta_{\tau + 1})
\end{split}
\end{align}
Here $\text{PA}(t^\star_1, \dots, t^\star_{\tau}; \delta_1, \dots, \delta_{\tau + 1})$ is the natural extension of the model presented in Section \ref{sec:PAChangepoint} but now jumps occur at edges $\floor{nt^\star_1}, \dots, \floor{nt^\star_{\tau + 1}}$ with the initializer process being PA$(\delta_1)$. Note that $H_0$ in \eqref{eq:multiplehypotheses} matches that of \eqref{eq:hypotheses}, implying that if the null hypothesis is true, the same exact methodology from Section \ref{sec:ChangepointEstimation} can be applied. Jointly estimating the changepoints under $H_A$ is a difficult task, and hence the task of finding multiple changepoints is often reduced to sequentially uncovering changepoints one at a time. That is, the multiple changepoint detection problem is reduced to a sequence of single changepoint detection problems. 

\subsection{Screening and Ranking (SaRa)}\label{sec:Sara}

One popular way decomposing the multiple changepoint problem to a sequence of single changepoint problems is via local estimation of the multiple changepoints. When applying likelihood ratio methodology, it is pertinent to ensure that the model is correctly specified. Naively applying the sequence of statistics \eqref{LR} to a graph sequence with multiple changepoints will result in portions of the edgelist that are specified as a single PA process, though they are truly governed by multiple PA processes. Under the alternative, this leads to suboptimal estimation of model parameters, and thus suboptimal statistics and hypothesis tests. Instead, to ensure test regions contain at most one changepoint, \cite{hao2013multiple} and \cite{niu2012screening} proposed a screening-and-ranking (SaRa) algorithm to detect multiple changepoints. In effect, the SaRa algorithm decomposes the hypothesis test \eqref{eq:multiplehypotheses} into a sequence of hypotheses
\begin{align}
\label{eq:multiplehypothesesSaRa}
\begin{split}
H_0(k): \ &\text{The edges } E(k + h)\setminus E(k - h) \text{ are added via offset parameter } \delta. \\
H_A(k): \ &\text{The edges } E(k + h)\setminus E(k) \text{ and }   E(k)\setminus E(k - h)\text{ are added via} \\
&\text{offset parameters } \delta_1 \neq \delta_2.
\end{split}
\end{align}
for $k = n_0 + h, \dots, n - h$. Here, $h > 0$ is chosen such that no two changepoints are within $h$ edges of each other. In order for asymptotics developed in previous sections to kick in, $h$ must grow with $n$. Henceforth, we assume $h_n = \floor{ns}$ for some $s \in (0, 1)$. In the continuous time setting, this assume the changepoints specified under $H_A$ in \eqref{eq:multiplehypotheses} are approximately separated by $s$ for large $n$. With this window-based assumption, individual hypothesis tests can be performed over select windows that contain only a single changepoint, assuming a proper choice of $h_n$. 

We slightly augment the methods of \cite{hao2013multiple} and \cite{niu2012screening} for our purposes. Using likelihood ratio methods from Section \ref{sec:ChangepointEstimation}, the natural statistic to test \eqref{eq:multiplehypothesesSaRa} is
\begin{equation}
\label{eq:LRh}
\frac{L_{(k - h_n, k + h_n]}(\hat{\delta}_{(k - h_n, k + h_n]})}{L_{(k - h_n, k]}(\hat{\delta}_{(k - h_n, k]}) L_{(k, k + h_n]}(\hat{\delta}_{(k, k + h_n]})} \qquad \text{for } k = h_n + 1,\dots, n - h_n - 1.
\end{equation}
Theoretically, however, this statistic is difficult to work with. Namely, if we convert to continuous time by letting $k = \floor{nt}$, $\hat{\delta}_{(nt, n(t + s)]}\approx \hat{\delta}_{(k, k + h_n]}$ and $\hat{\delta}_{(n(t - s), n(t + s)]}\approx \hat{\delta}_{(k - h_n, k + h_n]}$, it is difficult to establish the joint convergence of these (scaled and centered) processes in $D[0, 1 - \gamma] \times D[\gamma, 1- \gamma]$. In the proof of Theorem \ref{thm:nullhyp}, we were able to get away without establishing joint convergence since $\hat{\delta}_n = \delta_{(0, n]}$ is a realization of $\delta_{(0, nt]}$ at $t = 1$. Then establishing the convergence of the scaled and centered $\delta_{(0, nt]}$ in the Skorohod space was sufficient, since the continuous mapping theorem allows us to prove statements like \eqref{eq:MLEfunctional}. The joint convergence of the processes $\hat{\delta}_{(nt, n(t + s)]}$ (or equivalently $\hat{\delta}_{(n(t - s), nt]}$) and $\hat{\delta}_{(n(t-s), n(t + s)]}$ in \eqref{eq:LRh}  cannot be derived from the convergence of a single process or transformations of such processes, thus theoretical analysis becomes less tractable. We refer to Chapter 13 of \cite{whitt2002stochastic} for more on deriving stochastic process limits from existing ones.

Instead, we propose the statistic
\begin{equation}
\label{eq:LRh2}
\Lambda_k (h_n) := \frac{L_{(k - h_n, k]}(\hat{\delta}_{(k, k + h_n]}) L_{(k, k + h_n]}(\hat{\delta}_{(k - h_n, k]})}{L_{(k - h_n, k]}(\hat{\delta}_{(k - h_n, k]}) L_{(k, k + h_n]}(\hat{\delta}_{(k, k + h_n]})},
\end{equation}
for $k = h_n + 1,\dots, n - h_n - 1$.
In the denominator \eqref{eq:LRh2} maximizes the likelihood under the alternative hypotheses in \eqref{eq:multiplehypothesesSaRa}. The numerator, on the other hand, is an unsophisticated representation of the null hypothesis. Under the null, $\hat{\delta}_{(k, k + h_n]}$ and $\hat{\delta}_{(k - h_n, k]}$ should in a sense be interchangeable since they are both estimators of $\delta$. Thus, switching the maximizers of $L_{(k - h_n, k]}(\lambda)$ and $L_{(k, k + h_n]}(\lambda)$ should result in a likelihood ratio that is approximately $1$ under the null hypothesis, but far from optimal under the alternative. Note that $\hat{\delta}_{(k - h_n, k]}$ and $\hat{\delta}_{(k, k + h_n]}$ are time shifts of the same process, making continuous mapping methodology suitable for the analysis of \eqref{eq:LRh2}.

The following corollary gives the limiting process for $-2 \log \Lambda_{\floor{nt}} (\floor{ns})$. The proof is very similar to that of Theorem \ref{thm:nullhyp}, and hence we omit it for brevity.

\begin{corollary}
\label{thm:nullhypSaRa}
Fix $s \in (0, 1/2)$. Under $H_0$ in \eqref{eq:multiplehypotheses}
\begin{equation}
-2 \log \Lambda_{\floor{nt}}(\floor{ns}) \Rightarrow \frac{2}{s} \left(W(t + s) + W(t - s) - 2 W(t) \right)^2
\end{equation}
in $D[s, 1 - s]$, where $\Rightarrow$ denotes the weak convergence and $W(\cdot)$ is a Wiener process.
\end{corollary}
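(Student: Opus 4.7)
The plan is to mimic the proof of Theorem \ref{thm:nullhyp}: apply a second-order Taylor expansion about the two window MLEs to reduce $-2\log\Lambda_{\lfloor nt\rfloor}(\lfloor ns\rfloor)$ to a continuous functional of the score process $n^{-1/2}u_{(0,n\cdot]}(\delta)$, and then push convergence through by the martingale FCLT plus continuous mapping. Set $\hat{\delta}_L(t):=\hat{\delta}_{(n(t-s),nt]}$ and $\hat{\delta}_R(t):=\hat{\delta}_{(nt,n(t+s)]}$. Since these are the MLEs on their respective windows, the first-order Taylor terms vanish and
\[
-2\log\Lambda_{\lfloor nt\rfloor}(\lfloor ns\rfloor)=-\left[u'_{(n(t-s),nt]}(\tilde{\delta}_L)+u'_{(nt,n(t+s)]}(\tilde{\delta}_R)\right]\left(\hat{\delta}_R(t)-\hat{\delta}_L(t)\right)^2
\]
for some intermediate $\tilde{\delta}_L,\tilde{\delta}_R$ between $\hat{\delta}_L(t)$ and $\hat{\delta}_R(t)$.

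For the curvature factor, combining the representation \eqref{eq:scoreprime2} with the $H_0$ specialization of Lemma \ref{lem:tailchng}, namely $\sup_t|N_{>i}(\lfloor nt\rfloor)/n-tp_{>i}(\delta)|\xrightarrow{p}0$ (essentially Lemma \ref{lem:score_prime_asymp}), gives
\[
\sup_t\left|\frac{1}{n}u'_{(n(t-s),nt]}(\delta)+sI(\delta;\delta)\right|\xrightarrow{p}0,
\]
with the analogous bound on $(nt,n(t+s)]$. Uniform consistency $\sup_t|\hat{\delta}_L(t)-\delta|+|\hat{\delta}_R(t)-\delta|\xrightarrow{p}0$, which follows from the same uniform LLN applied to the score representation \eqref{eq:score2}, together with the local Lipschitz dependence of $u'/n$ in $\lambda$, then yield
\[
\sup_t\left|-\frac{1}{n}\left(u'_{(n(t-s),nt]}(\tilde{\delta}_L)+u'_{(nt,n(t+s)]}(\tilde{\delta}_R)\right)-2sI(\delta;\delta)\right|\xrightarrow{p}0.
\]

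For the MLE difference, linearizing the first-order conditions $0=u_{(n(t-s),nt]}(\hat{\delta}_L(t))$ and $0=u_{(nt,n(t+s)]}(\hat{\delta}_R(t))$ around $\delta$ and using the curvature limit yields
\[
\sqrt{n}(\hat{\delta}_R(t)-\hat{\delta}_L(t))=\frac{n^{-1/2}\left[u_{(nt,n(t+s)]}(\delta)-u_{(n(t-s),nt]}(\delta)\right]}{sI(\delta;\delta)}+o_p(1)
\]
uniformly in $t$. Under $H_0$, $u_{(0,\cdot]}(\delta)$ is a martingale with step-$k$ increment $(D_{v_k}(k-1)+\delta)^{-1}-(2+\delta)^{-1}$ and predictable quadratic variation asymptotic to $ntI(\delta;\delta)$ at time $\lfloor nt\rfloor$; the martingale invariance principle (as used in Lemma \ref{lem:MLE}) therefore gives $n^{-1/2}u_{(0,n\cdot]}(\delta)\Rightarrow W(\cdot\,I(\delta;\delta))$ in $D[0,1]$. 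Continuous mapping applied to $W\mapsto W(\cdot+s)-2W(\cdot)+W(\cdot-s)$, together with Slutsky and Brownian self-similarity $W(aI(\delta;\delta))\stackrel{d}{=}\sqrt{I(\delta;\delta)}\,\tilde W(a)$, then collects the $I(\delta;\delta)$ factors into
\[
-2\log\Lambda_{\lfloor nt\rfloor}(\lfloor ns\rfloor)\Rightarrow\frac{2}{s}\left(\tilde W(t+s)+\tilde W(t-s)-2\tilde W(t)\right)^2.
\]

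The main obstacle is the joint weak convergence of $(\hat{\delta}_L(\cdot),\hat{\delta}_R(\cdot))$ as processes in $D[0,1]^2$ — exactly the difficulty that the authors flag when motivating \eqref{eq:LRh2} over \eqref{eq:LRh}. It is resolved because both processes are continuous functionals of the single score process $n^{-1/2}u_{(0,n\cdot]}(\delta)$, so their joint law transfers from the univariate FCLT directly via the continuous mapping theorem, without any bivariate invariance principle. The remaining ingredients — uniform control of the Taylor remainders, floor-function round-off, and boundary behaviour near $t\in\{0,1\}$ where the windows are truncated — are routine and parallel to the arguments for Theorem \ref{thm:nullhyp}.
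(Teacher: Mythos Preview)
Your proposal is correct and follows precisely the route the paper indicates (the paper omits the proof, stating only that it ``is very similar to that of Theorem~\ref{thm:nullhyp}''). In particular, your Taylor decomposition $-2\log\Lambda_{\lfloor nt\rfloor}(\lfloor ns\rfloor)=-[u'_{(n(t-s),nt]}(\tilde\delta_L)+u'_{(nt,n(t+s)]}(\tilde\delta_R)](\hat\delta_R-\hat\delta_L)^2$ and your reduction of the joint law of $(\hat\delta_L,\hat\delta_R)$ to a continuous functional of the single score process $n^{-1/2}u_{(0,n\cdot]}(\delta)$ are exactly the ``time shifts of the same process'' argument the authors invoke when motivating the statistic \eqref{eq:LRh2}, so nothing further is needed.
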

Corollary~\ref{thm:nullhypSaRa} provides the building block for the application of 
the SaRa algorithm to our PA network setup here, which we now explain.

The SaRa algorithm proceeds by collecting the locations which locally maximize $-2 \log \Lambda_k (h_n)$. Along the lines of \cite{hao2013multiple}, we call $\ell$ a $h$-local maximizer if 
$$
-2 \log \Lambda_\ell (h_n) \geq -2 \log \Lambda_k (h_n),
\quad \text{for all }
k \in \{ \ell - h, \dots, \ell + h \}.
$$ 
For simpilcity, we choose $h = h_n$ as a default, though more informed choices can be made. Naturally, these maximums locally maximize the discrepancy characterized by the likelihood ratio and provide the most evidence against the null hypothesis. Once the scanning step is complete and we have identified the $h_n$-local maximizers, we compare the collection of $h_n$-local maximums to the $(1 - \alpha)$-th quantile of the $s$-local maximimums of  
$$X(t) := \frac{2}{s} \left(W(t + s) + W(t - s) - 2 W(t) \right)^2,$$
 the limiting process of $-2 \log \Lambda_{\floor{nt}} (\floor{ns})$. A $s$-local maximizer is a continuous time analog the $h_n$-local maximizer; $t$ is a $s$-local maximizer of $X(t)$ if $X(t) \geq X(u)$ for all $u \in [t - s, t + s]$. The quantiles of the $s$-local maximums of $X(t)$ are difficult to obtain analytically, but since $X(t)$ is a transformation of Wiener processes, they can be easily simulated. 

In comparison to \eqref{LR}, the statistic \eqref{eq:LRh2} has computational advantages. As discussed previously, the maximum likelihood estimates in \eqref{LR} and \eqref{eq:LRh2} in  do not have a closed-form and are solved by setting \eqref{score} equal to zero and employing Newton's Method. In particular, for \eqref{LR}, one must compute the likelihood ratio at every single edge $m$, which further  involves computing the maximum likelihood estimates $\hat{\delta}_{(0, m]}$ and $\hat{\delta}_{(m, n]}$. Even with clever programming and techniques such as warm starts, this is not easily computable for large networks. Although optimization still needs to be performed at every time point $m$ for \eqref{eq:LRh2}, the window $h_n$ also serves as a computational backstop for the procedure since optimization occurs over $2 h_n$ edges rather than employing the entire edgelist. Though faster methods are still to be desired if one were to analyze large networks with other changepoint segmentation methods such as binary segmentation.
We also point out that 
these analyses give rise to multiple hypothesis tests, so one may consider control of the false discovery rate or family wise error rate as in \cite{hao2013multiple}. We will leave this point for future research.  

\subsection{Score test}

Another popular way of sequentially detecting multiple changepoints is via binary segmentation \cite{niu2016multiple}. Although there are many variants of binary segmentation, including those of \cite{fryzlewicz2014wild} and \cite{olshen2004circular}, we apply the routine version as presented in Algorithm \ref{alg:BS}. In the algorithm, we let $f(\cdot)$ be a function that, from an edgelist, returns TRUE if a changepoint is detected, and otherwise returns FALSE. Further, let $g(\cdot)$ be a function that returns an estimated changepoint from an edgelist. 

\begin{algorithm} 
\label{alg:BS}
\caption{BinarySegmentation($E(n)$, $f(\cdot)$, $g(\cdot)$)}
\begin{algorithmic}
\Require edgelist $E(n)$, $f(\cdot)$, $g(\cdot)$\;
\Ensure set of estimated changepoint locations  $\hat{\mathcal{T}}$\;
\If{$f(E(n))$}
	\State $m \gets g(E(n))$\;
    \State $\hat{\mathcal{T}} \gets \hat{\mathcal{T}} \cup m$\;
    \State BinarySegmentation($E(m)$,  $f(\cdot)$, $g(\cdot)$)\;
    \State BinarySegmentation($E(n) \setminus E(m)$,  $f(\cdot)$, $g(\cdot)$)\;
\EndIf
\end{algorithmic}
\end{algorithm}

Binary segmentation is a particularly convenient technique as it conveniently extends the single changepoint detection procedures to the multiple changepoint setting. Though, for computationally intensive procedures, their repeated application over multiple segments renders them infeasible for practical applications. This is especially true for dynamic networks, as the number of edges in a typical network can exceed the millions or even billions. As mentioned in Section \ref{sec:Sara}, the likelihood ratio procedures in Section \ref{sec:ChangepointEstimation} can unfortunately become computationally onerous, which calls for the need of a quicker procedure to detect changepoints.

In order to assuage the computational bottleneck induced by the likelihood ratio methodology, we introduce a score-based statistic that drastically reduces computational load in detecting changepoints. To introduce the statistic assume for a moment that we are indeed under $H_0$ in \eqref{eq:multiplehypotheses}. Under $H_0$, the best estimate of $\delta$ is $\hat{\delta}_n = \hat{\delta}_{(0, n]}$, the MLE based on the entire network. By definition, $u_{(0, n]}(\hat{\delta}_n) = 0$. However, if the whole data stream is governed by PA$(\delta)$, then we might expect that $\hat{\delta}_n$ approximately solves $u_{(0, m]}(\lambda) = 0$ for any $m < n$. On the other hand, if there is a changepoint at $m$, $u_{(0, m]}(\hat{\delta}_n)$ should be far from $0$ since  $\hat{\delta}_n$ is computed under a misspecified model. This leads us to propose the following statistic for changepoint detection:
\begin{equation}
\label{eq:scorestat}
S_m := -u^2_{(0, m]}(\hat{\delta}_n) \left( \frac{1}{u'_{(0, m]}(\hat{\delta}_n)} + \frac{1}{u'_{(m, n]}(\hat{\delta}_n)} \right)\qquad \text{for } m = \floor{n\gamma},\dots, \floor{n(1 - \gamma)}.
\end{equation}
Here, we  need to ensure $\floor{n\gamma} \leq m \leq \floor{n(1 - \gamma)}$ for $\gamma \in (0, 1)$ in order to guarantee that there is sufficient data so that the score function and observed information are a good representatives of the PA process up to that point. The hessian in \eqref{eq:scorestat} is used as a scaling factor to eliminate unknown parameters in the asymptotic distribution of $S_m$. In theory, one could employ $u'_{(0, n]}(\hat{\delta}_n)$ instead of $u'_{(0, nt]}(\hat{\delta}_n)$ for a better estimate of the asymptotic covariance of $\hat{\delta}_n$ under the null hypothesis, though we employ the latter to match the asymptotic distribution in Theorem \ref{thm:nullhyp}. Intuitively, the likelihood ratio \eqref{LR} is likely the most powerful, and hence it may be beneficial to imitate its behavior under the null hypothesis. Theorem \ref{thm:nullhyp2} gives the asymptotic distribution of the proposed statistic under the null hypothesis, with proofs deferred to Appendix~\ref{subsec:pf_null2}. This gives rise to a hypothesis test of changepoint existence where the null hypothesis is rejected when $\sup_{t \in [\gamma, 1]} S_{nt}$ exceeds the $(1 - \alpha)$-th quantile of a $\sup_{t \in [\gamma, 1]} B^2(t)/t(1-t)$ distribution.

\begin{theorem}
\label{thm:nullhyp2}
Fix $\gamma \in (0, 1/2)$. Then under $H_0$ in \eqref{eq:multiplehypotheses}
\begin{equation}
\sup_{t \in [\gamma, 1 - \gamma]} S_{nt} \Rightarrow \sup_{t \in [\gamma, 1 - \gamma]} \frac{B^2(t)}{t(1 - t)},
\end{equation}
in $\mathbb{R}$.
\end{theorem}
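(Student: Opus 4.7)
The plan is to reduce $S_{nt}$ to a continuous functional of the single process $u_{(0, nt]}(\delta)/\sqrt{n}$ and then apply the continuous mapping theorem on top of a functional CLT, in essentially the same spirit as Theorem \ref{thm:nullhyp}. Most of the building blocks are already developed in the proof of that theorem, so the argument here is largely a reorganization. In fact, $S_m$ is the Rao score statistic for the two-sample comparison $\delta_1 = \delta_2$ within the natural two-parameter embedding (using $u_{(m, n]}(\hat{\delta}_n) = -u_{(0, m]}(\hat{\delta}_n)$, which follows from the first-order condition $u_{(0,n]}(\hat{\delta}_n) = 0$), so classical score-test asymptotics should apply once their functional version is in place.

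First, I would establish a functional LLN for the observed-information processes: uniformly in $t \in [\gamma, 1 - \gamma]$,
$$\frac{u'_{(0, nt]}(\delta)}{n} \xrightarrow{p} -t\, I(\delta; \delta), \qquad \frac{u'_{(nt, n]}(\delta)}{n} \xrightarrow{p} -(1-t)\, I(\delta; \delta).$$
These follow from the representation \eqref{eq:scoreprime2} and the uniform convergence $N_{>i}(\floor{nt})/n \to t\, p_{>i}(\delta)$ valid under $H_0$, after a truncation-in-$i$ step justified by the decay rate implicit in \eqref{eq:finull}. Consistency of $\hat{\delta}_n$ from \cite{gao2017asymptotic}, together with smoothness of $u'$ in $\lambda$, upgrades these limits to hold with $\hat{\delta}_n$ in place of $\delta$.

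Next, a uniform Taylor expansion of the score at the pooled MLE,
$$\frac{u_{(0, nt]}(\hat{\delta}_n)}{\sqrt{n}} = \frac{u_{(0, nt]}(\delta)}{\sqrt{n}} + \frac{u'_{(0, nt]}(\delta)}{n}\, \sqrt{n}(\hat{\delta}_n - \delta) + R_n(t),$$
combined with $\sqrt{n}(\hat{\delta}_n - \delta) = O_p(1)$ (Lemma \ref{lem:MLE}) and the preceding LLN, gives the Brownian-bridge centering
$$\frac{u_{(0, nt]}(\hat{\delta}_n)}{\sqrt{n}} = \frac{u_{(0, nt]}(\delta)}{\sqrt{n}} - t\, \frac{u_{(0, n]}(\delta)}{\sqrt{n}} + o_p(1),$$
once the expansion is re-applied at $t = 1$ together with $u_{(0, n]}(\hat{\delta}_n) = 0$ to solve $\sqrt{n}(\hat{\delta}_n - \delta) = I(\delta;\delta)^{-1}\, u_{(0, n]}(\delta)/\sqrt{n} + o_p(1)$. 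Combining this with the functional CLT $u_{(0, nt]}(\delta)/\sqrt{n} \Rightarrow \sqrt{I(\delta;\delta)}\, W(t)$ established en route to Theorem \ref{thm:nullhyp}, I get $u_{(0, nt]}(\hat{\delta}_n)/\sqrt{n} \Rightarrow \sqrt{I(\delta;\delta)}\, B(t)$ in $D[\gamma, 1 - \gamma]$. The continuous mapping theorem applied to
$$S_{nt} = -\,\frac{u^2_{(0, nt]}(\hat{\delta}_n)}{n}\, \left( \frac{1}{u'_{(0, nt]}(\hat{\delta}_n)/n} + \frac{1}{u'_{(nt, n]}(\hat{\delta}_n)/n} \right)$$
then produces the process-level limit $S_{nt} \Rightarrow B^2(t)/\big(t(1 - t)\big)$, since the bracketed sum converges in probability, uniformly on $[\gamma, 1 - \gamma]$, to $-1/\big(I(\delta;\delta)\, t(1-t)\big)$, which cancels the $I(\delta;\delta) B^2(t)$ prefactor up to a sign. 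A final application of the supremum functional, which is continuous on $D[\gamma, 1-\gamma]$ under the uniform metric once the interval is bounded away from $0$ and $1$, completes the argument.

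The main obstacle, as in Theorem \ref{thm:nullhyp}, is making the Taylor remainder $R_n(t)$ vanish uniformly in $t$: one needs a uniform bound $\sup_{\lambda \in U}\sup_{t \in [\gamma, 1-\gamma]} |u''_{(0, nt]}(\lambda)|/n = O_p(1)$ on a small neighborhood $U$ of $\delta$. This follows from a tail-weighted representation of $u''$ analogous to \eqref{eq:scoreprime2}, together with the uniform tail control of $N_{>i}(\floor{nt})/n$ afforded by Lemma \ref{lem:score_prime_asymp}; these estimates should already be on hand after Theorem \ref{thm:nullhyp}, so no genuinely new technique is required.
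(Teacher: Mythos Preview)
Your proposal is correct and arrives at the limit by the same mechanism as the paper (functional CLT for the score, uniform LLN for the Hessian, continuous mapping), with one organizational difference. The paper centers the Taylor expansion at the partial-data MLE $\hat{\delta}_{(0,nt]}$ rather than at $\delta$: since $u_{(0,nt]}(\hat{\delta}_{(0,nt]}) = 0$, the exact mean-value form gives
\[
n^{-1/2}u_{(0,nt]}(\hat{\delta}_n) = n^{-1}u'_{(0,nt]}(\delta^\star_n)\cdot \sqrt{n}\bigl(\hat{\delta}_n - \hat{\delta}_{(0,nt]}\bigr),
\]
and the paper then invokes the weak limit for $\sqrt{n}(\hat{\delta}_n - \hat{\delta}_{(0,nt]})$ already established as \eqref{eq:MLEfunctional2} in the proof of Theorem~\ref{thm:nullhyp}. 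This sidesteps your remainder $R_n(t)$ entirely, at the cost of routing through the MLE process $\hat{\delta}_{(0,nt]}$. Your expansion about $\delta$ is more self-contained; note, however, that you do not actually need a bound on $u''$: writing the expansion in mean-value form $n^{-1/2}u_{(0,nt]}(\hat{\delta}_n) = n^{-1/2}u_{(0,nt]}(\delta) + n^{-1}u'_{(0,nt]}(\tilde\delta_n)\sqrt{n}(\hat\delta_n - \delta)$ with $\tilde\delta_n$ between $\hat\delta_n$ and $\delta$, the required control is precisely the $\sup_t\sup_\lambda$ statement of Lemma~\ref{lem:score_prime_asymp} together with consistency of $\hat\delta_n$. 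Either route reduces to the same three ingredients.
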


By using \eqref{eq:scorestat}, the computational load is alleviated by requiring only one computation of the MLE per segment. If \eqref{LR} were applied, one would need to compute on the order of $n$ many MLE's for the first segment. Further, since the score function is additive, it requires only $O(n)$ operations to compute $S_{m}$ for $m = \floor{n \gamma},\dots, n$. Although we are unable to present a consistency result for the score-based method, the computational benefits are obvious.

\subsection{Multiple changepoint simulation study}\label{subsec:multiple_sim}

In this section we compare the empirical performance of the window and score methods to simulated, multiple changepoint data.  As in \cite{niu2016multiple}, it may be desirable to control the family-wise error rate (FWER) or false discovery rate (FDR) via a multiple testing procedure, though we present unadulterated results as the choice of procedure can muddy comparisons. First, we evaluate the false positive rates for the window and score methods when there is no changepoint. Since the window method finds multiple local maxima within a data sequence, we define the positive rate to be the proportion of $h_n$-local maximums that reject the null hypothesis out of total identified $h_n$-local maximums. On the other hand, the positive rate for the score test is the number of times the maximum $S_m$ exceeds the $(1 -\alpha)$-th quantile of the null distribution in Theorem \ref{thm:nullhyp2} out of the number of total segments tested. Under the null, the positive rate for the score test will be close to the FWER due to the sequential nature of binary segmentation. For the window method, however, the FWER will be inflated compared to the positive rate under the null due to multiple hypothesis tests being performed at once (one for each $h_n$-local maxima). 

To evaluate control on the false positive rate, we apply the window and score methods to 500 simulated PA$(\delta)$ networks of size $100{,}000$ where $\delta \in \{-0.5, 0, 1, 2 \}$. We let $h_n = 10{,}000$ for the window statistic and $\gamma = 0.1$ for the score statistic. Here, $h_n$ and $\gamma$ are chosen so that the region $[0.1, 0.9]$ is searched for a changepoint in both cases. Further, these settings are chosen to ensure that there are enough edges to accurately estimate parameters of the PA processes under each regime. The empirical false positive rate for each $\delta$ value is reported in Table \ref{tab:H0mult}. The false positive rates behave as expected, concentrating around $\alpha = 0.05$. The study indicates that even in the multiple changepoint setting, the hypothesis testing perspective still provides control on detecting the existence of a changepoint.

\begin{table}
\centering
\begin{tabular}{ c c c c c}
\hline \hline
$\delta$ & $-0.5$ & $0$ & $1$ & $2$ \\ 
\hline
Window & $0.0457$ & $0.0473$ & $0.0556$ & $0.0491$  \\ 
Score & $0.0499$ & $0.0558$ & $0.0519$ & $0.0519$  \\  
\hline 
\end{tabular}
\caption{False positive rates for the window and score methods applied to 500 PA$(\delta)$ networks with $50{,}000$ edges and $\gamma = 0.1$ and $h_n = 10{,}000$.}
\label{tab:H0mult}
\end{table}

Next, we assess the ability for the window and score methods to detect multiple changepoints and estimate their locations consistently. To do so, we apply the methods to 500 simulated PA$(0.2, 0.5; 1, 1.5, \delta_3)$ networks of size $100{,}000$. That is, the network-generating process switches from PA$(1)$ to PA$(1.5)$ at $t^\star_1 = 0.2$, and then switches from PA$(1.5)$ to PA$(\delta_3)$ at $t^\star_2 = 0.5$. We allow $\delta_3$ to vary from $1$ to $1.4$ by increments of $0.1$. If $\delta_3 = 1$, we are under an ``epidemic alternative"; the network generating process eventually returns to the state it was initialized under. As in the previous simulation, we let $h_n = 10{,}000$ and $\gamma = 0.1$. The estimated number of changepoints and average Rand index across the 500 simulations are reported in Table \ref{tab:HAmult}. For $\delta_3 = 1$, which is the largest change between the PA processes at $t^\star = 0.5$, the window method exhibits the benefits of local estimation of the changepoint. Since changepoint estimation relies on local information, the initializing PA$(1)$ process does not influence the estimation. The score method performs worse in this situation since the null hypothesis estimate $\hat{\delta}_n$ reflects that the network is generating by a PA$(1)$ process for $70\%$ of the network evolution. However, as soon as $\delta_3$ is slightly larger than $1$, the score method outperforms the window-based method. Once $\delta_3$ becomes closer to $1.5$ and the second and third PA processes become more similar, both methods exhibit poorer performance. Though, the score method more consistently detects that there are indeed $2$ changepoints. Generally, the window method more accurately detects changepoints under an epidemic-like alternative, but the speed and accuracy of the score method under other settings make it a desirable choice for multiple changepoint detection.

\begin{table}
\centering
\begin{tabular}{ c c @{\hskip 7mm} c c c c c c}
\hline \hline
Method & $\delta_3$ & \multicolumn{5}{c}{\# of changepoints} & Rand index \\
\hline 
& & $0$ & $1$ & $2$ & $3$ & $4$ & \\[2mm]
\multirow{5}{*}{Window} & $1.0$ & $0$ & $12$ & $458$ & $29$ & $1$ & $0.963$ \\ 
 & $1.1$ & $0$ & $56$ & $414$ & $29$ & $1$ & $0.934$ \\
 & $1.2$ & $3$ & $191$ & $282$ & $24$ & $0$ & $0.851$ \\
 & $1.3$ & $1$ & $352$ & $141$ & $6$ & $0$ & $0.757$ \\
 & $1.4$ & $5$ & $425$ & $67$ & $3$ & $0$ & $0.713$ \\[.25 em]
 \hline 
\multirow{5}{*}{Score} & $1.0$ & $0$ & $128$ & $370$ & $3$ & $0$ & $0.934$ \\  
 & $1.1$ & $0$  & $36$  & $451$ & $13$ & $0$ & $0.948$ \\
 & $1.2$ & $2$  & $6$ & $479$ & $13$ & $0$ & $0.940$ \\
 & $1.3$ & $0$ & $127$ & $363$ & $10$ & $0$ & $0.858$ \\
 & $1.4$ & $0$ & $390$ & $108$ & $2$ & $0$ & $0.725$\\
\hline 
\end{tabular}
\caption{Number of changepoints found and average rand index for the window and score methods applied to 500 PA$(0.2, 0.5; 1, 1.5, \delta_3)$ networks with $100{,}000$ edges and $\gamma = 0.1$ and $h_n = 10{,}000$.}
\label{tab:HAmult}
\end{table}

\section{Data Example}\label{sec:data}

We now demonstrate the applicability of our changepoint detection methods to real world networks by applying our methods to the Twitter Higgs network, first analyzed in \cite{de2013anatomy} (available on \cite{snapnets}). The Higgs boson particle, whose existence was theorized in 1964 by Peter Higgs and others, bolstered a popular theory in particle physics. On July 4th, 2012 at 8 AM GMT, scientists at CERN sent ripples through the physics community with the announcement of the discovery of a particle with properties agreeing with that of the Higgs boson particle. Before the official announcement, however, rumors of the discovery spread through Twitter. The Twitter Higgs network tracks the activity of a subset of users discussing this discovery from July 1st to July 7th, 2012. Here, we limit our analysis to just the retweets of these users so that a directed edge $(v, w)$ indicates that node $v$ retweeted node $w$. This network is temporal, and each edge is observed in the order of its creation with a timestamp. Intuitively, one would expect that with the announcement of Higgs boson discovery, the dynamics and short term attractiveness of users retweets would shift dramatically. 

Unfortunately, a retweet network is not a tree; users can retweet multiple times during the network evolution. Hence, in order to better conform with model assumptions, we remove users with out-degree larger than $1$. Although this seems like a drastic adjustment, 67.6\% of users retweet only once throughout the network evolution. Hence, we are still able to analyze the behavior of a typical user in the network with this data cleaning step. This resulting network contains 198,437 users. 

We apply both the score and window tests to the network to detect multiple changepoints. For the score test we allow $\gamma = 0.1$, and we set $h_n = \floor{0.05 n} = 9921$ for the window method. The score method detects 5 changepoints, while the window method detects 6. In order to correct for multiple testing, we employ a Holm adjustment to the $p$-values for both procedures \citep{holm1979simple}. In this case, the Holm correction has no effect on the number of estimated changepoints, indicating the significance for the changepoints detected. 

We then utilize maximum likelihood to fit preferential attachment models over the constant segments of the network defined by the changepoints. The maximum likelihood estimates and Fisher information-based confidence intervals (cf. \cite{gao2017asymptotic, wan2017fitting}) are given in Figures \ref{fig:score} and \ref{fig:window} for the score and window methods, respectively. 
Throughout, the estimated changepoints are colored in red, with the official announcement time colored in blue. 
Additionally, Figures \ref{fig:score} and \ref{fig:window} display the moving proportion of nodes with degrees larger than $100$ within a centered time window of length $200$. Note that the choice of the threshold ($100$, the empirical $70$-percentile of the degree distribution) can be changed to any other upper quantiles of the degree distribution. Essentially, if the proportion of selected nodes with large degrees ($>100$) increases within a time period, then we may expect the degree distribution to have heavier tails, and the estimated offset parameter $\delta$ over that region should decrease. As a result, the attachment mechanism has become more preferential and thus high degree nodes are more attractive.

\begin{figure}
\includegraphics[width=\textwidth]{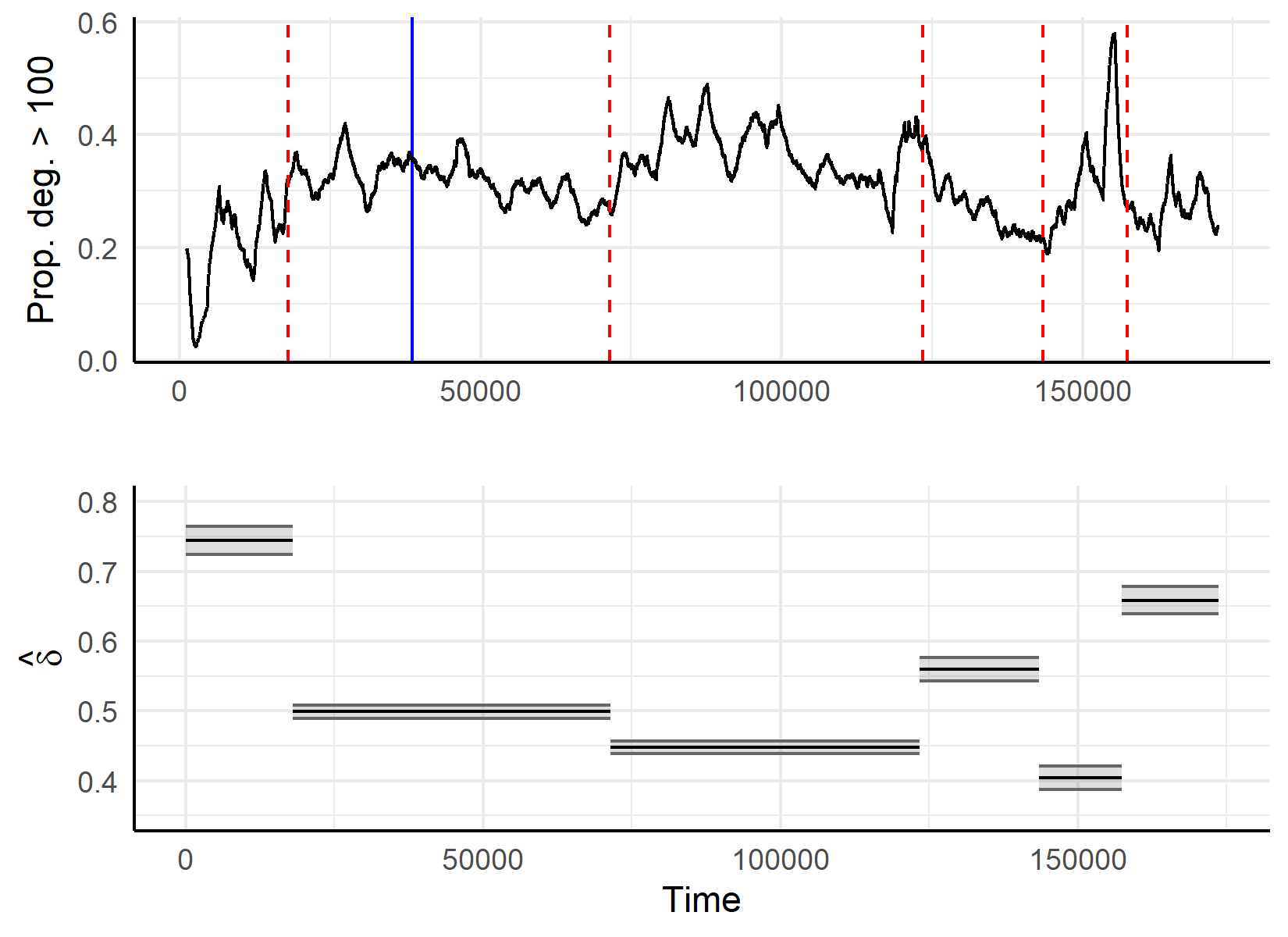}
\caption{Time of Higgs boson announcement (in blue) along with estimated changepoints (in red)  for the score method. The first panel displays a sliding window proportion of attached nodes with degree larger than $100$ over a period of length $200$. The second panel displays offset parameter estimates and confidence intervals over the constant regions computed via MLE.}
\label{fig:score}
\end{figure}

\begin{figure}
\includegraphics[width=\textwidth]{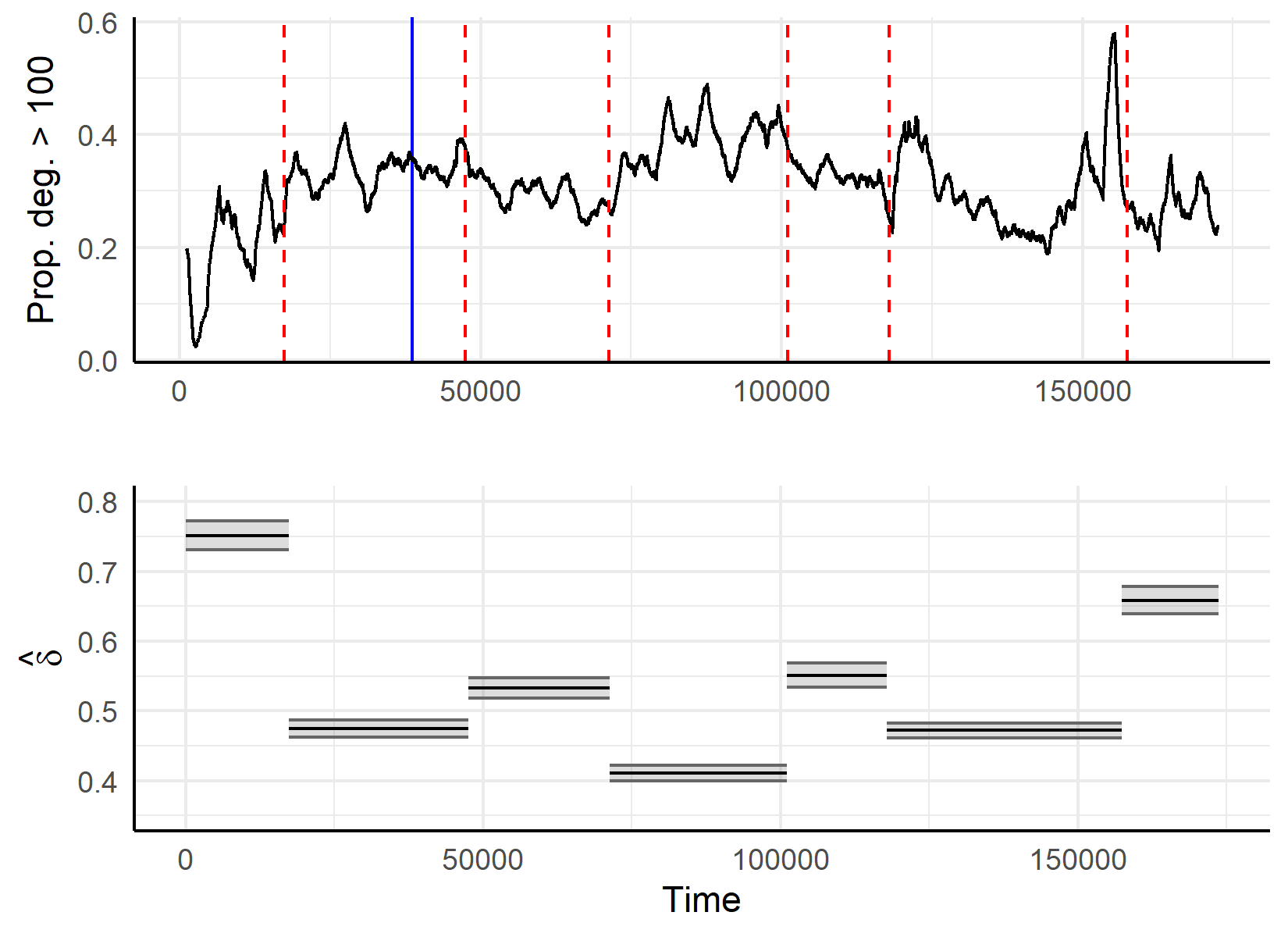}
\caption{Time of Higgs boson announcement (in blue) along with estimated changepoints (in red)  for the window method. The first panel displays a sliding window proportion of attached nodes with degree larger than $100$ over a period of length $200$. The second panel displays offset parameter estimates and confidence intervals over the constant regions computed via MLE.}
\label{fig:window}
\end{figure}

Both methods capture the varying dynamics in users retweet behavior. 
The nearest changepoint to the official announcement detected by the score method was approximately two hours before the announcement at 05:49:55 GMT, while
the window method produced the nearest estimated changepoint at 09:08:20 GMT on 07/04/2012. We further remark that in Figures~\ref{fig:score} and \ref{fig:window}, the $x$-axis corresponds to the number of steps in the network evolution, not the real time scale. Due to the large number of retweets produced around the announcement of the news, the first two changepoints detected by the two methods are in fact all not very far away from the official announcement in real time.

Across both methods, the proportion of attached nodes with degree greater than 100 remains relatively constant between the estimated changepoints, indicating that the attachment process has stabilized over these regions. Both methods indicate that the offset parameter $\delta$ decreases near the official announcement of the discovery, while eventually increasing towards the end of the observed network evolution. Such trend suggests that around the announcement the attachment process becomes more preferential, so that large degree nodes are more attractive over that time period. This is reasonable since initial tweets surrounding a news breaking event are likely to be popular and quickly gain attention. In later stages of the news cycle, however, the attachment process becomes more uniform, indicating that attention has diffused throughout the network after a short period of time.

Although the window method is more sensitive, some estimated changepoint locations are shared across the two methods. Figure \ref{fig:LRT_plot} reports the test statistics for the first segmentation step of the score method, as well as the likelihood ratio statistics for the window method. The lack of spikes in the score statistic between July 5th and 6th conveys the necessity of binary segmentation; without additional segmentation, the two additional changepoints in that time period would likely not have been detected. Note that for the likelihood ratio statistics, there is a later spike that remains undetected, but is classified as a changepoint via the score statistic. This spike was not classified as an $h_n$-local maximizer due to its proximity to another maximizer, and hence is undetected by the window method. The window $h_n$ can be further reduced to estimate additional changepoints in the network evolution, though we choose not to do so in order to retain accurate estimation of the offset parameters. 

Overall, our numerical analyses show that the preferential attachment model with changepoints, combined with our likelihood-based methodology, can be used to successfully track the dynamics of an evolving network.

\begin{figure}
\includegraphics[width=\textwidth]{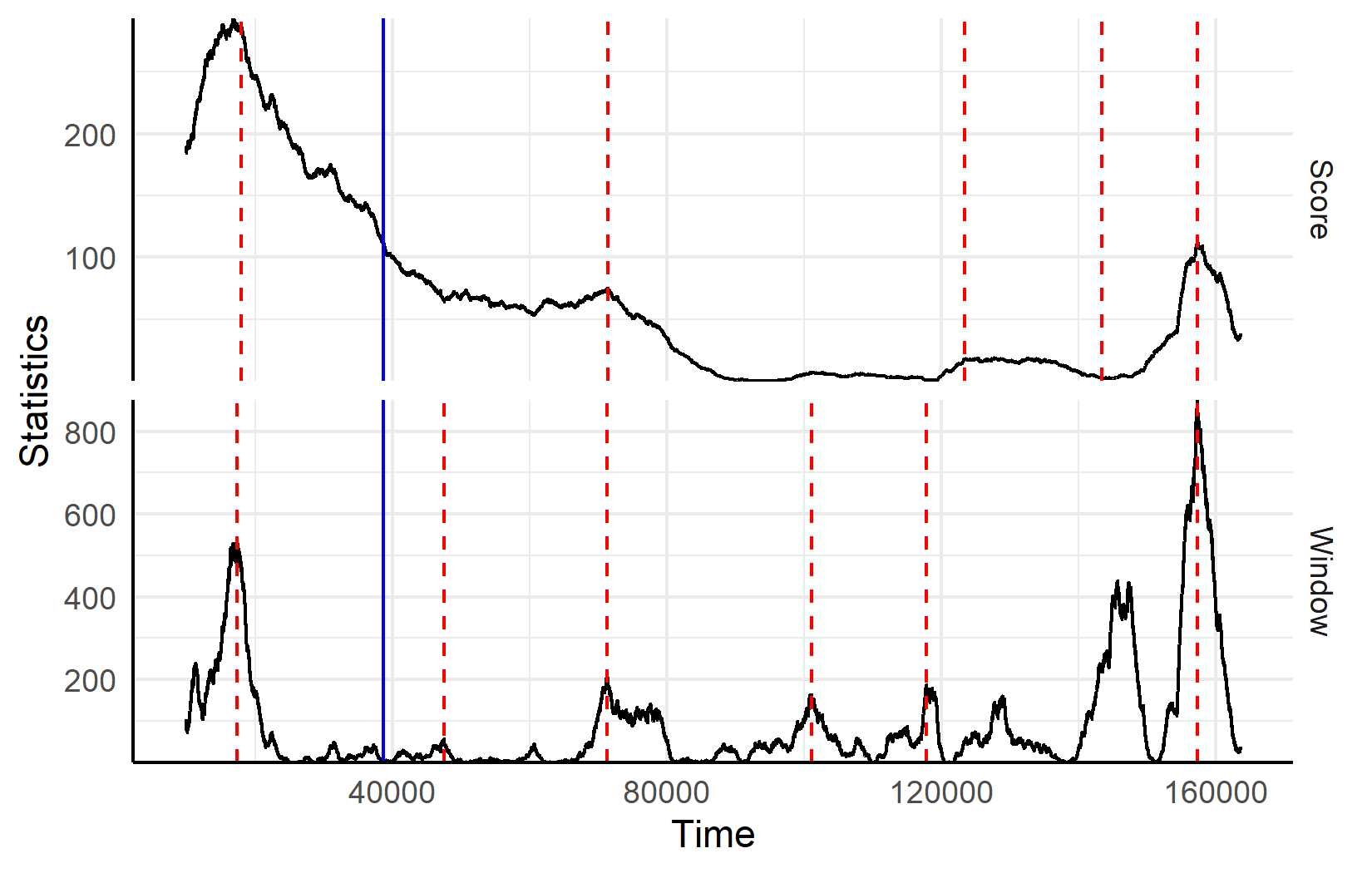}
\caption{The statistics $S_k$ and $\Lambda_{k}(h_n)$ plotted over the network evolution along with the estimated changepoints for each method (in red). The announcement of the Higgs boson discovery is marked in blue.}
\label{fig:LRT_plot}
\end{figure}

\section{Conclusion}\label{sec:conclusion}

In this paper, we introduce likelihood-based methods for changepoint detection in preferential attachment models. For single changepoint detection, we produced a theoretically justified likelihood ratio test that empirically performs better than a non-parametric estimator under correctly and miss-specified models. Further, we extend the likelihood based methodology to the multiple changepoint setting where both window-based and fast binary segmentation methods can be applied to large networks. The presented methods offer solutions to the statistical detection of changepoints via a hypothesis testing perspective not previously put forward in the network literature. When applied to a Twitter retweet network, the multiple changepoint methods offer a statistically sound analysis of the varying dynamics in the presence of shocks to the network evolution.

With this work, there are multiple avenues of research yet to be explored. First, one can consider the extension of these methods to more realistic network models such as the one in \cite{wan2017fitting}. We believe the methods presented here can be extended to the case where users in a network can make multiple connections. Further, one can consider the development of likelihood-based methodology for changepoint detection in more general preferential attachment models such as sublinear preferential attachment (as in \cite{banerjee2023fluctuation, gao2017consistent}). Finally, one can consider more interesting definitions of a changepoint in the network setting. Here, we considered a classical definition of changepoint where the distribution of the entire network changes, but it is also possible to assume changes in a subset of the network or some other types of changes in the network dynamics. 

\bibliographystyle{plain}
\bibliography{PA_Changepoint.bib}

\appendix
\section{Proofs}\label{sec:pf}

\subsection{Proof of Theorem \ref{thm:nullhyp}}\label{subsec:pf_thm1}
The proof of Theorem \ref{thm:nullhyp} requires the weak convergence of 
$\sqrt{n}(\hat{\delta}_{(ns, nt]} - \delta)$, which is given in Lemma \ref{lem:MLE}. The weak convergence of $\sqrt{n}(\hat{\delta}_{(ns, nt]} - \delta)$ further relies on the weak convergence of the score function evaluated at the truth, $n^{-1}u_{(0, nt]}(\delta)$, to a Weiner process. To prove this, we rely on the fact that the summands of $n^{-1}u_{(0, nt]}(\delta)$ are a martingale difference array, and hence we can apply functional martingale central limit theorems (FMCLT). We now present Lemma \ref{lem:MLE}.

\begin{lemma}\label{lem:MLE}
Fix $s \in [0, 1]$ and $\tau \in (0, 1 - s)$. Assume that $\{ G(k) \}_{k = 2}^n$ evolves according to the PA$(\delta)$ rule.
Then as $n \rightarrow \infty$
\[ 
(t - s)I(\delta; \delta) \cdot \sqrt{n}(\hat{\delta}_{(ns, nt]} - \delta) \Rightarrow
W\left(tI(\delta; \delta)\right) - W\left(sI(\delta; \delta) \right)
\] in $D[s + \tau, 1]$ where $W(\cdot)$ is a Wiener process, and
\[ 
I(\lambda; \delta) = \sum_{i = 1}^\infty \frac{p_{>i}(\delta)}{(i + \lambda)^2} - \frac{1}{(2 + \lambda)^2},
\]
for $\lambda \in [\eta, K]$.
%and $\Rightarrow$ denotes the weak convergence.
% as a sequence of random elements in $D[a, b]$, $0 < a < b < 1$.
\end{lemma}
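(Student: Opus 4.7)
The plan is to reduce the asymptotic normality of $\hat{\delta}_{(ns,nt]}$ to a functional martingale central limit theorem (FMCLT) for the score process $n^{-1/2} u_{(0,nt]}(\delta)$, together with a uniform law of large numbers for the observed information $n^{-1} u'_{(0,nt]}(\lambda)$.

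The starting point is a standard Taylor expansion of the score at the MLE. Since $u_{(ns,nt]}(\hat\delta_{(ns,nt]}) = 0$, mean-value expansion around $\delta$ gives
\[
\sqrt{n}(\hat\delta_{(ns,nt]}-\delta) \;=\; -\frac{n^{-1/2}\bigl[u_{(0,nt]}(\delta) - u_{(0,ns]}(\delta)\bigr]}{n^{-1} u'_{(ns,nt]}(\tilde\delta_{n,t})},
\]
where $\tilde\delta_{n,t}$ lies between $\hat\delta_{(ns,nt]}$ and $\delta$. Consistency of $\hat\delta_{(ns,nt]}$ on $[s+\tau,1]$, proved in \cite{gao2017asymptotic}, together with a uniform-in-$t$ version of that argument, yields $\sup_t|\tilde\delta_{n,t}-\delta|\xrightarrow{p}0$, so it suffices to analyze the numerator and denominator separately as processes in $t$.

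For the numerator, write $u_{(0,\lfloor nt\rfloor]}(\delta) = \sum_{k=3}^{\lfloor nt\rfloor} X_k$ with $X_k := (D_{v_k}(k-1)+\delta)^{-1} - (2+\delta)^{-1}$. Using the attachment rule \eqref{PAscheme}, one checks that $\mathbb{E}[X_k\mid\mathcal{F}_{k-1}] = 0$ (the $D_{v_k}(k-1)+\delta$ terms cancel against the PA probabilities, leaving $|V(k-1)|/[(2+\delta)(k-1)] = 1/(2+\delta)$), so $\{X_k\}$ is a martingale difference sequence. The conditional second moment is
\[
\mathbb{E}[X_k^2\mid\mathcal{F}_{k-1}] = \frac{1}{(2+\delta)(k-1)}\sum_{i\ge 1}\frac{N_i(k-1)}{i+\delta}-\frac{1}{(2+\delta)^2},
\]
and using the identity \eqref{cdfpmf} together with the convergence $N_i(k)/k\xrightarrow{p} p_i(\delta)$ of \eqref{deg_counts}, this conditional variance tends to $I(\delta;\delta)$. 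Since $|X_k|\le (1+\delta)^{-1}+(2+\delta)^{-1}$ is uniformly bounded, the Lindeberg condition is trivial, and the FMCLT (e.g., Theorem 7.1.4 of \cite{ethier2009markov}) yields
\[
n^{-1/2} u_{(0,n\cdot]}(\delta) \;\Rightarrow\; W\!\bigl(\cdot\, I(\delta;\delta)\bigr) \quad \text{in } D[0,1].
\]

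For the denominator, using \eqref{eq:scoreprime2} and the uniform-in-$t$ convergence $\sup_{t\in[0,1]}|N_{>i}(\lfloor nt\rfloor)/n - tp_{>i}(\delta)|\xrightarrow{p}0$ (which is a direct consequence of the Doob/martingale argument behind \eqref{deg_counts2}, since $N_{>i}(\lfloor nt\rfloor)$ is monotone in $t$), I expect
\[
\sup_{t\in[s+\tau,1]}\Bigl|n^{-1}u'_{(ns,nt]}(\lambda) + (t-s)I(\lambda;\delta)\Bigr|\xrightarrow{p}0
\]
uniformly in $\lambda$ on a neighborhood of $\delta$. Combined with $\tilde\delta_{n,t}\xrightarrow{p}\delta$ uniformly, this gives $n^{-1}u'_{(ns,nt]}(\tilde\delta_{n,t})\xrightarrow{p} -(t-s)I(\delta;\delta)$ uniformly on $[s+\tau,1]$. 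Applying the continuous mapping theorem to the Taylor ratio and rearranging then produces the claimed limit $W(tI(\delta;\delta))-W(sI(\delta;\delta))$.

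The main technical obstacle is controlling the infinite sum over $i$ in \eqref{eq:scoreprime2} uniformly in $t$, i.e.\ upgrading the pointwise-in-$i$ uniform convergence of $N_{>i}(\lfloor nt\rfloor)/n$ to uniform control of the tail of the series. The natural route is a truncation argument: split the sum at some threshold $I_0$, control the head by the uniform convergence for each fixed $i$, and dominate the tail using the moment bound $\sum_i N_{>i}(\lfloor nt\rfloor)/n \le \mathbb{E}[D_v^2(\lfloor nt\rfloor)]/n$-type estimates, which are uniform in $t$ because the degree moments grow at known rates under PA$(\delta)$. A small but important detail is that the FMCLT produces weak convergence in $D[0,1]$ of a single process, so joint convergence at $(s,t)$ with $s$ fixed is automatic by projection; this is why the lemma is stated with $s$ fixed rather than letting both endpoints vary.
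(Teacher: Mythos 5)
Your proposal follows the same architecture as the paper's proof: a mean-value expansion of the score at the MLE, the martingale-difference representation of $u_{(0,nt]}(\delta)$ with an FMCLT giving $n^{-1/2}u_{(0,n\cdot]}(\delta)\Rightarrow W(\cdot\,I(\delta;\delta))$ in $D[0,1]$, a uniform (in $t$ and $\lambda$) law of large numbers for the observed information, uniform consistency of $\hat{\delta}_{(ns,nt]}$, and a continuous-mapping step (including the projection $x\mapsto x-x(s)$ for the fixed left endpoint). These are precisely Lemmas \ref{score_asymp}, \ref{lem:score_prime_asymp}, and \ref{lem:delta_unif} in the paper, and your conditional-variance computation, the identity via \eqref{cdfpmf}, and the bounded-differences Lindeberg check all match.

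There is, however, one concrete defect in the step you yourself flag as the main technical obstacle. Your proposed domination of the tail of the series in \eqref{eq:scoreprime2} by $\mathbb{E}\bigl[D_v^2(\floor{nt})\bigr]/n$-type estimates fails on part of the parameter range: since $p_i(\delta)\sim C i^{-(3+\delta)}$, the degree distribution has finite second moment only for $\delta>0$, and for $\delta\in(-1,0]$ (values used throughout the paper's simulations) the empirical second moment $\sum_v D_v^2(\floor{nt})$ grows superlinearly in $n$, so this bound does not yield a tail that is uniformly small after dividing by $n$. No moment estimate is needed: the correct control is the deterministic handshake bound. Since $iN_{>i}(m)\le \sum_{j\ge 1} jN_j(m)=2m$, one has $N_{>i}(\floor{nt})/n\le 2/i$ for all $t$, whence
\begin{equation*}
\sum_{i>M}\frac{N_{>i}(\floor{nt})/n}{(i+\lambda)^2}\;\le\;\sum_{i>M}\frac{2/i}{(i+\eta)^2},
\end{equation*}
uniformly in $t\in[0,1]$ and $\lambda\in[\eta,K]$, which can be made arbitrarily small by choosing $M$ large; this is exactly the bound used in the paper's proof of Lemma \ref{lem:score_prime_asymp}, exploiting that the weights $(i+\lambda)^{-2}$ already supply the decay. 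With that substitution your truncation argument goes through for all $\delta>-1$, and the rest of your proof is sound.
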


We defer the proof of Lemma~\ref{lem:MLE} to Section~\ref{subsec:pf_mle}, and proceed with the proof of Theorem~\ref{thm:nullhyp}.

\begin{proof}
From \eqref{LR}, we see that
\begin{align*}
-2 \log \Lambda_{nt} &= \left(2\ell_{(0, nt]}(\hat{\delta}_{(0, nt]}) - 2\ell_{(0, nt]}(\hat{\delta}_n)  \right)+
\left( 2\ell_{(nt, n]}(\hat{\delta}_{(nt, n]}) - 2\ell_{(nt, n]}(\hat{\delta}_n)\right) \\
& := I_{(0, nt]} + I_{(nt, n]}.
\end{align*}
We first focus on $I_{(0, nt]}$. Applying Taylor's expansion to $\ell_{(0, nt]}(\cdot)$, we have that there exists $\delta^\star_n$ between $\hat{\delta}_{(0, nt]}$ and $\hat{\delta}_n$ such that
\begin{align*}
\ell_{(0, nt]}(\hat\delta_n) - \ell_{(0, nt]}(\hat{\delta}_{(0, nt]}) &= u_{(0, nt]}(\hat{\delta}_{(0, nt]})\left(\hat\delta_n -  \hat{\delta}_{(0, nt]} \right) + \frac{u'_{(0, nt]}(\delta^\star_n)}{2}\left(\hat\delta_n -  \hat{\delta}_{(0, nt]} \right)^2 \\
&= \frac{u'_{(0, nt]}(\delta^\star_n)}{2}\left(\hat\delta_n -  \hat{\delta}_{(0, nt]} \right)^2.
\end{align*}
Therefore,
\begin{align*}
I_{(0, nt]} = -u'_{(0, nt]}(\delta^\star_n)\left(\hat{\delta}_n -  \hat{\delta}_{(0, nt]} \right)^2 = -\frac{1}{n}u'_{(0, nt]}(\delta^\star_n)\left( \sqrt{n} \left(\hat{\delta}_n -  \hat{\delta}_{(0, nt]} \right) \right)^2.
\end{align*}
%Some algebra gives that
Also, since
\begin{equation}
\sqrt{n} \left(\hat{\delta}_n -  \hat{\delta}_{(0, nt]} \right) = \sqrt{n} \left(\hat{\delta}_n - \delta \right) - \sqrt{n} \left(  \hat{\delta}_{(0, nt]} - \delta \right),
\label{eq:MLEfunctional}
\end{equation}
and by Lemma~\ref{lem:MLE} as well as the continuity of the functional 
$F: x \in D[\gamma, 1] \mapsto x(1) - x$,
 we have 
\begin{equation}
\label{eq:MLEfunctional2}
t I(\delta; \delta) \cdot \sqrt{n} \left(\hat{\delta}_n -  \hat{\delta}_{(0, nt]} \right) \Rightarrow tW\left( I(\delta; \delta)\right) - W\left( tI(\delta; \delta) \right)\qquad \text{in }D[\gamma,1].
\end{equation}
%which is a negative Brownian bridge $-\tilde{B}\left( tI \right)$. 
Similar arguments as in (ii) and (iii) in the proof of Lemma \ref{lem:MLE} give that
%$-\frac{1}{n}u'_{(0, nt]}(\delta^\star_n) \xrightarrow{p} tI$ and hence
\[
tI(\delta; \delta)  \cdot I_{(0, nt]} \Rightarrow\left(W\left( tI(\delta; \delta) \right) - tW(I(\delta; \delta))\right)^2,\qquad \text{in }D[\gamma,1].
\]
Defining the Brownian bridge $\displaystyle B(t):= I^{-1/2}(\delta; \delta) \left(W(tI(\delta; \delta)) - tW(I(\delta; \delta)) \right)$, we thus have
\[
I_{(0, nt]} \Rightarrow \frac{B^2(t)}{t},\qquad \text{in }D[\gamma,1].
\]
%or, equivalently, $t \cdot I_{(0, nt]} \overset{D[0, 1]}{\Rightarrow} \bar{B}^2\left( t \right)$, where $\bar{B}(\cdot)$ is another Brownian Bridge. 
Nearly the same proof methodology applies for $I_{(nt, n]}$, thus giving that
\begin{equation}
\label{eq:I}
I_{(nt, n]} \Rightarrow \frac{B^2(t)}{1 - t}
 \qquad \text{in }D[0,1 - \gamma].
\end{equation}
Applying the continuity of the functional $G: x(t) \in D[\gamma, 1 - \gamma]  \mapsto x(t) + x(1 - t)$ we thus have that $I_{(0, nt]} + I_{(nt, n]}$ converges weakly to the sum of their respective limits, which can be rewritten as 
\begin{align*}
\frac{B^2(t)}{t} + \frac{B^2(t)}{1 - t} = \frac{B^2(t)}{t(1 - t)},
\end{align*}
Thus
\begin{equation}
-2 \log \Lambda_{nt} \Rightarrow \frac{B^2(t)}{t(1 - t)},\qquad \text{in }D[\gamma, 1 - \gamma].
\end{equation}
Finally, using continuity of the functional 
$$
G: x \in D[\gamma, 1 - \gamma]  \mapsto \sup_{t \in [\gamma, 1 - \gamma]} x(t),
$$ 
we have that
\begin{align*}
\sup_{t \in [\gamma, 1 - \gamma]} -2 \log \Lambda_{nt} \Rightarrow \sup_{t \in [\gamma, 1 - \gamma]} \frac{B^2(t)}{t(1 - t)} \qquad \text{in }\mathbb{R}.
\end{align*}
\end{proof}

\subsection{Proof of Lemma~\ref{lem:MLE}}\label{subsec:pf_mle}

\begin{proof}
In the spirit of \cite{wan2017fitting}, we use Taylor expansion to rewrite the score at $\hat{\delta}_{(ns, nt]}$ around $\delta$. Then
\begin{equation}
\label{eq:Taylor}
0 = u_{(ns, nt]}(\delta) + u'_{(ns, nt]}(\delta^\star_n)(\hat{\delta}_{(ns, nt]} - \delta),
\end{equation}
where $\delta^\star_n = \delta + \xi(\hat{\delta}_{(ns, nt]} - \delta)$ for some $\xi \in [0, 1]$. The intermediate value $\delta^\star_n$ depends on $t$, though we drop this dependence in notation. Some algebra then gives
\begin{equation}
\sqrt{n}(\hat{\delta}_{(ns, nt]} - \delta) = - \left(\frac{1}{\frac{1}{n} u'_{(ns, nt]}(\delta^\star_n)} \right)n^{-1/2} u_{(ns, nt]}(\delta).
\end{equation}
The following results are used to prove Lemma \ref{lem:MLE}:
\begin{itemize}
\item[(i)] $n^{-1/2} u_{(0, nt]}(\delta) \Rightarrow  W \left( tI(\delta; \delta) \right) \qquad \text{in } D[0, 1]$,
\item[(ii)] $\displaystyle \sup_{t \in [0, 1]} \sup_{\lambda \in [\eta, K]} \left| -n^{-1}u'_{(0, nt]}(\lambda) - tI(\lambda; \delta) \right| \xrightarrow{p} 0$,
\item[(iii)] $\displaystyle \sup_{t \in [s + \tau, 1]}  \left| \hat{\delta}_{(ns, nt]} - \delta \right| \xrightarrow{p} 0$.
\end{itemize}
where $W(\cdot)$ is a Wiener process on $[0, 1]$. The use of $\tau$ in (iii) ensures that $t$ is not chosen to be exactly $s$, and retains compactness of the index set. We prove these results in Lemmas \ref{score_asymp}, \ref{lem:score_prime_asymp} and \ref{lem:delta_unif}, respectively.  Lemma \ref{score_asymp} combined with the continuity of the functional $F: x \in D[s + \tau, 1] \mapsto x - x(s)$ gives that
\begin{equation}
\label{eq:W}
n^{-1/2} u_{(ns, nt]}(\delta) = n^{-1/2} u_{(0, nt]}(\delta) - n^{-1/2} u_{(0, ns]}(\delta) \Rightarrow W(tI(\delta; \delta)) - W(sI(\delta; \delta)),
\end{equation}
in $D[s + \tau, 1]$. Lemma \ref{lem:score_prime_asymp} gives that
\begin{align*}
&\sup_{t \in [s + \tau, 1]} \sup_{\lambda \in [\eta, K]} \left| -n^{-1}u'_{(ns, nt]}(\lambda) - (t - s)I(\lambda; \delta) \right| \\
&= \sup_{t \in [s + \tau, 1]} \sup_{\lambda \in [\eta, K]} \left| -n^{-1}u'_{(0, nt]}(\lambda) + n^{-1}u'_{(0, ns]}(\lambda) - tI(\lambda; \delta) + s I(\lambda; \delta) \right| \\
&\leq \sup_{t \in [s + \tau, 1]} \sup_{\lambda \in [\eta, K]} \left| -n^{-1}u'_{(0, nt]}(\lambda) - tI(\lambda; \delta) \right|+  \sup_{\lambda \in [\eta, K]}  \left| n^{-1}u'_{(0, ns]}(\lambda) + s I(\lambda; \delta) \right| \xrightarrow{p} 0.
\end{align*}
We now establish the convergence $-n^{-1}u'_{(ns, nt]}(\delta_n^\star) \xrightarrow{p} (t-s)I(\delta; \delta)$ in $D[s + \tau, 1]$. See that
\begin{align}
\label{eq:int_conv}
\begin{split}
&\sup_{t \in [s + \tau, 1]} \left| -n^{-1}u'_{(ns, nt]}(\delta_n^\star) - (t - s)I(\delta; \delta) \right| \\
&\leq \sup_{t \in [s + \tau, 1]} \left| -n^{-1}u'_{(ns, nt]}(\delta_n^\star) - (t - s)I(\delta_n^\star) \right| + \sup_{t \in [s + \tau, 1]} \left| (t - s)I(\delta_n^\star) - (t - s)I(\delta; \delta) \right| \\
&\leq \sup_{t \in [s + \tau, 1]} \sup_{\lambda \in [\eta, K]} \left| -n^{-1}u'_{(ns, nt]}(\lambda) - (t - s)I(\lambda; \delta) \right| + (1 - s) \sup_{t \in [s + \tau, 1]} \left|I(\delta_n^\star) - I(\delta; \delta) \right|.
\end{split}
\end{align}
The first term converges in probability to $0$. Since $I(\lambda; \delta)$ is (uniformly) continuous on $[\eta, K]$, it suffices to show that $\delta_n^\star \xrightarrow{p} \delta$ in $D[s + \tau, 1]$. Since $\left| \delta_n^\star - \delta \right| \leq | \hat{\delta}_{(ns, nt]} - \delta |$, we can instead show that $\sup_{t \in [s + \tau, 1]}| \hat{\delta}_{(ns, nt]} - \delta | \xrightarrow{p} 0$. This is proven in Lemma \ref{lem:delta_unif}. 

Thus, $\sup_{t \in [s + \tau, 1]} \left|I(\delta_n^\star) - I(\delta; \delta) \right| \xrightarrow{p} 0$. Hence, by \eqref{eq:int_conv}
\begin{equation}
\label{eq:fisher_D}
-n^{-1}u'_{(ns, nt]}(\delta_n^\star) \xrightarrow{p} (t-s)I(\delta; \delta) \qquad \text{in } D[s + \tau, 1].
\end{equation}
Combining the convergences \eqref{eq:W} and \eqref{eq:fisher_D}, we have that jointly in $D[s + \tau, 1] \times D[s + \tau, 1]$
\begin{align*}
\begin{pmatrix}
n^{-1/2} u_{(ns, nt]}(\delta) \\
-n^{-1}u'_{(ns, nt]}(\delta_n^\star)
\end{pmatrix} 
\Rightarrow 
\begin{pmatrix}
W(tI(\delta; \delta)) - W(sI(\delta; \delta))\\
(t-s)I(\delta; \delta)
\end{pmatrix}.
\end{align*}
Finally, by one last application of the continuous mapping theorem
\begin{align*}
(t-s)I(\delta; \delta) \cdot \sqrt{n}(\hat{\delta}_{(ns, nt]} - \delta) &= - \frac{(t-s)I(\delta; \delta)}{\frac{1}{n} u'_{(ns, nt]}(\delta^\star_n)} n^{-1/2} u_{(ns, nt]}(\delta) \\
&\Rightarrow W(tI(\delta; \delta)) - W(sI(\delta; \delta)).
\end{align*}
\end{proof}

\subsection{Supporting Lemmas for the Proof of Lemma~\ref{lem:MLE}}
\begin{lemma}
\label{score_asymp}
Suppose the graph sequence $\{G(k)\}_{k = 2}^{n}$ evolves according to PA$(\delta)$. Then as $n \rightarrow \infty$,
\[ n^{-1/2} u_{(0, nt]}(\delta) \Rightarrow W\left( tI(\delta; \delta) \right), \]
in $D[0, 1]$.
\end{lemma}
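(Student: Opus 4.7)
My plan is to apply a functional martingale central limit theorem. The first observation is that the score evaluated at the truth admits a natural martingale decomposition: defining
\[
X_k := \frac{1}{D_{v_k}(k-1) + \delta} - \frac{1}{2+\delta},
\]
a direct computation using \eqref{PAscheme} gives
\[
\mathbb{E}\!\left[\frac{1}{D_{v_k}(k-1)+\delta}\,\bigg|\,\mathcal{F}_{k-1}\right]
= \sum_{v \in V(k-1)} \frac{1}{D_v(k-1)+\delta}\cdot\frac{D_v(k-1)+\delta}{(2+\delta)(k-1)} = \frac{1}{2+\delta},
\]
so $\{X_k, \mathcal{F}_k\}$ is a bounded martingale difference sequence and $u_{(0,nt]}(\delta) = \sum_{k=3}^{\floor{nt}} X_k$. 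Since $|X_k| \le \frac{1}{1+\delta} + \frac{1}{2+\delta}$ is a deterministic constant (recall $\delta > -1$), the triangular-array increments $n^{-1/2}X_k$ are bounded by $Cn^{-1/2}$, so the conditional Lindeberg condition is immediate: for any $\varepsilon > 0$, eventually the indicator $\mathbf{1}\{|X_k| > \varepsilon\sqrt{n}\}$ vanishes uniformly.

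The main step is then to verify that the predictable quadratic variation converges to the identified limit. Compute
\[
\mathbb{E}[X_k^2 \mid \mathcal{F}_{k-1}]
= \frac{1}{(2+\delta)(k-1)}\sum_{j\ge 1}\frac{N_j(k-1)}{j+\delta} - \frac{1}{(2+\delta)^2}.
\]
Using \eqref{cdfpmf} I can rewrite the target $I(\delta;\delta) = \frac{1}{2+\delta}\sum_{j\ge 1} \frac{p_j(\delta)}{j+\delta} - \frac{1}{(2+\delta)^2}$, so it suffices to prove
\[
\frac{1}{n}\sum_{k=3}^{\floor{nt}} \frac{1}{k-1}\sum_{j\ge 1}\frac{N_j(k-1)}{j+\delta} \xrightarrow{p} t\sum_{j\ge 1}\frac{p_j(\delta)}{j+\delta},
\]
uniformly in $t \in [0,1]$. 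For fixed $j$, \eqref{deg_counts} gives $N_j(k-1)/(k-1) \xrightarrow{p} p_j(\delta)$, so a Cesaro-averaging argument delivers convergence of the truncated sum $\sum_{j\le J}$ to $t\sum_{j\le J} p_j(\delta)/(j+\delta)$. To control the tail $j > J$, I would use the trivial bound $N_j(k-1) \le \min\{k-1,\, 2(k-1)/j\}$ (from the edge-count identity $\sum_j jN_j(k-1) \le 2(k-1)$), yielding $\frac{1}{k-1}\sum_{j>J} N_j(k-1)/(j+\delta) \le 2\sum_{j>J} j^{-1}(j+\delta)^{-1}$, which is summable and tends to $0$ as $J\to\infty$; the matching bound $\sum_{j>J} p_j(\delta)/(j+\delta) \to 0$ follows from the power-law tail $p_j(\delta) \sim Cj^{-(3+\delta)}$. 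Monotonicity in $t$ of the predictable quadratic variation then upgrades pointwise convergence to uniform convergence via a standard Dini-type argument.

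Having verified the Lindeberg and quadratic-variation conditions, I invoke the functional martingale CLT (e.g., Theorem 3.33 of Jacod and Shiryaev, or Theorem 7.1.4 of Ethier and Kurtz) to conclude that the continuous-time process $n^{-1/2}u_{(0,nt]}(\delta)$ converges weakly in $D[0,1]$ to a centered Gaussian martingale with deterministic variance function $t\,I(\delta;\delta)$, which is precisely $W(tI(\delta;\delta))$. The principal obstacle is the tail control in the infinite sum defining the conditional variance; everything else is either a bounded-increments triviality or a direct application of the already-established convergence \eqref{deg_counts}.
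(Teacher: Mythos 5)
Your proposal is correct and follows essentially the same route as the paper's proof: the same martingale difference array $X_k$, the same conditional-variance computation reducing to $\frac{1}{n}\sum_k \frac{1}{k-1}\sum_j \frac{N_j(k-1)}{j+\delta}$, the same identification of the limit $tI(\delta;\delta)$ via \eqref{cdfpmf} and \eqref{deg_counts}, and a functional martingale CLT (the paper invokes Durrett's Theorem 2.5, recorded as Theorem \ref{thm:FMCLT}; your Jacod--Shiryaev/Ethier--Kurtz citation is an equivalent substitute). Your explicit truncation-plus-tail-bound argument using $N_j(k-1)\le 2(k-1)/j$ is a welcome fleshing-out of a step the paper compresses into a one-line appeal to Ces\`aro convergence, and your bounded-increments shortcut for the Lindeberg condition is even cleaner than the paper's.
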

\begin{proof}
We look to apply Theorem 2.5 in \cite{durrett1978functional}, which we record as Theorem \ref{thm:FMCLT} for convenience. Towards this end, define
\begin{equation}
\eta_{n, k} = \frac{1}{\sqrt{n}}\left( \frac{1}{D_{v_k}(k - 1) + \delta} - \frac{1}{2 + \delta} \right),
\end{equation}
and let $\mathcal{F}_{n,k} = \sigma(G(0), G(1), \dots, G(k))$. $\eta_{n, k}$ is a martingale difference array since 
\begin{align*}
\mathbb{E}\left[ \eta_{n, k}  \mid  \mathcal{F}_{n,k} \right] &= \frac{1}{\sqrt{n}}\left( \sum_{w = 1}^{k - 1} \frac{1}{D_w(k - 1) + \delta} \frac{D_w(k - 1) + \delta}{(k - 1)(2 + \delta)}  - \frac{1}{2 + \delta} \right) \\
&= \frac{1}{\sqrt{n}}\left( \sum_{w = 1}^{k - 1} \frac{1}{(k - 1)(2 + \delta)}  - \frac{1}{2 + \delta} \right) = 0.
\end{align*}
To show convergence, we confirm (a) of Theorem \ref{thm:FMCLT}. Note that
\begin{align*}
&\sum_{k = 3}^{\floor{nt}} \mathbb{E}\left[ \eta^2_{n, k} \bigm\vert \mathcal{F}_{n,k} \right] \\
&= \frac{1}{n} \sum_{k = 3}^{\floor{nt}} \mathbb{E}\left[ \frac{1}{D_{v_k}(k - 1) + \delta}\left(\frac{1}{D_{v_k}(k - 1) + \delta} - \frac{1}{2 + \delta} \right) \right.\\
&\left.\qquad \qquad\qquad- \frac{1}{2 + \delta}\left(\frac{1}{D_{v_k}(k - 1) + \delta} - \frac{1}{2 + \delta} \right) \biggm\vert \mathcal{F}_{n,k} \right].
\end{align*}
The right hand term is $0$ by a similar computation as before. Continuing with the remaining term, we have
\begin{align*}
&\frac{1}{n} \sum_{k = 3}^{\floor{nt}} \mathbb{E}\left[ \frac{1}{D_{v_k}(k - 1) + \delta}\left(\frac{1}{D_{v_k}(k - 1) + \delta} - \frac{1}{2 + \delta} \right) \biggm\vert \mathcal{F}_{n,k} \right] \\
&= \frac{1}{n} \sum_{k = 3}^{\floor{nt}} \mathbb{E}\left[ \frac{1}{(D_{v_k}(k - 1) + \delta)^2} - \frac{1}{2 + \delta} \frac{1}{D_{v_k}(k - 1) + \delta} \biggm\vert\ \mathcal{F}_{n,k} \right] \\
&= \frac{1}{n} \sum_{k = 3}^{\floor{nt}} \sum_{w = 1}^{k - 1} \frac{1}{(D_{w}(k - 1) + \delta)^2} \frac{D_{w}(k - 1) + \delta}{(k - 1)(2 + \delta)} \\
&\qquad- \frac{1}{2 + \delta} \frac{1}{n} \sum_{k = 3}^{\floor{nt}}  \sum_{w = 1}^{k - 1} \frac{1}{D_{w}(k - 1) + \delta} \frac{D_{w}(k - 1) + \delta}{(k - 1)(2 + \delta)} \\
&= \frac{1}{2 + \delta} \frac{1}{n}\sum_{k = 3}^{\floor{nt}} \frac{1}{k - 1}  \sum_{w = 1}^{k - 1} \frac{1}{D_{w}(k - 1) + \delta} -  \frac{1}{2 + \delta} \frac{1}{n} \sum_{k = 3}^{\floor{nt}} \frac{1}{2 + \delta} \\
&= \frac{1}{2 + \delta} \frac{1}{n}\sum_{k = 3}^{\floor{nt}} \frac{1}{k - 1}  \sum_{w = 1}^{k - 1} \sum_{i = 1}^\infty \frac{1_{\{ D_w(k - 1) = i \}}}{i + \delta} - \frac{\floor{nt} - 2}{n}  \frac{1}{(2 + \delta)^2} \\
&= \frac{1}{2 + \delta} \frac{1}{n}\sum_{k = 3}^{\floor{nt}} \sum_{i = 1}^\infty \frac{N_i(k - 1)/(k - 1)}{i + \delta} - \frac{\floor{nt} - 2}{n}  \frac{1}{(2 + \delta)^2} \\
&= \frac{t}{2 + \delta}\frac{1}{nt} \sum_{k = 3}^{\floor{nt}} \sum_{i = 1}^\infty \frac{N_i(k - 1)/(k - 1)}{i + \delta} - \frac{\floor{nt} - 2}{n}  \frac{1}{(2 + \delta)^2}.
\end{align*}
Since $N_i(n)/n \xrightarrow{a.s.} p_i(\delta)$ by \eqref{deg_counts}, Ces\`aro convergence of random varibles gives that $\frac{1}{nt} \sum_{k = 1}^{\floor{nt}} \sum_{i = 1}^\infty \frac{N_i(k - 1)/(k - 1)}{i + \delta} \xrightarrow{a.s.}  \sum_{i = 1}^\infty \frac{p_i(\delta)}{i + \delta} = \frac{1}{2 + \delta} \sum_{i = 1}^\infty \frac{p_{>i}(\delta)}{(i + \delta)^2}$. Thus
\begin{equation}
\sum_{k = 3}^{\floor{nt}} \mathbb{E}\left[ \eta^2_{n, k} \bigm\vert \ \mathcal{F}_{n,k} \right] \xrightarrow{p}  t\left(\sum_{i = 1}^\infty \frac{p_{>i}(\delta)}{(i + \delta)^2} - \frac{1}{(2 + \delta)^2}\right) = tI(\delta; \delta).
\end{equation}
Next we confirm (b) of Theorem \ref{thm:FMCLT}. Fix $\epsilon > 0$ and note that 
\begin{equation}
\left| \frac{1}{D_{v_k}(k - 1) + \delta} - \frac{1}{2 + \delta} \right| \leq \left| \frac{1}{D_{v_k}(k - 1) + \delta}\right| + \left|\frac{1}{2 + \delta} \right| \leq  \frac{1}{1 + \delta}  +  \frac{1}{2 + \delta}  := m(\delta).
\label{bound}
\end{equation}
Hence, $|\eta_{n, k}| \leq \frac{1}{\sqrt{n}}m(\delta)$. Thus
\begin{align*}
\sum_{k = 3}^{\floor{nt}} \mathbb{E}\left[ \eta^2_{n, k} 1_{\{|\eta_{n, k}| > \epsilon \}} \bigm\vert \mathcal{F}_{n,k} \right] \leq m^2(\delta) \frac{1}{n} \sum_{k = 3}^{\floor{nt}} \mathbb{P}\left(|\eta_{n, k}| > \epsilon  \bigm\vert \mathcal{F}_{n,k}\right).
\end{align*}
Since $\mathbb{P}\left(|\eta_{n, k}| > \epsilon  \bigm\vert \mathcal{F}_{n,k}\right) \xrightarrow{a.s.} 0$, Ces\`aro convergence thus gives that the bound tends to $0$ almost surely and $\sum_{k = 3}^{\floor{nt}} \mathbb{E}\left[ \eta^2_{n, k} 1_{\{|\eta_{n, k}| > \epsilon \}} \bigm\vert \mathcal{F}_{n,k} \right]  \xrightarrow{a.s.} 0$. Since all conditions are satisfied, Theorem \ref{thm:FMCLT} gives
\begin{equation}
n^{-1/2} u_{(0, nt]}(\delta) = \frac{1}{\sqrt{n}} \sum_{k = 3}^{\floor{nt}} \left( \frac{1}{D_{v_k}(k - 1) + \delta} - \frac{1}{2 + \delta} \right) \Rightarrow W\left( tI(\delta; \delta) \right) \text{ in } D[0, 1].
\end{equation}
\end{proof}

\begin{theorem}
\label{thm:FMCLT}
Suppose $\left\lbrace X_{n, i}, \mathcal{F}_{n, i}  \right\rbrace$ is a martingale difference array where $\mathcal{F}_{n, i}$ is sequence of sigma algebras that increase with $n$. If
\begin{itemize}
\item[(a)] for all $t \in [0, 1]$, $\sum_{i = 1}^{\floor{nt}} \mathbb{E} \left[X_{n, i} \mid  \mathcal{F}_{n, i} \right] \xrightarrow{p} \varphi(t)$ where $\varphi$ is continuous and
\item[(b)] for all $\epsilon > 0$, $\sum_{i = 1}^{\floor{nt}} \mathbb{E} \left[X^2_{n, i} 1_{\left\lbrace |X_{n, i}| > \epsilon\right\rbrace} \mid  \mathcal{F}_{n, i} \right] \xrightarrow{p} 0$
\end{itemize}
then $\sum_{i = 1}^{\floor{nt}} X_{n, i} \Rightarrow W(\varphi(t))$ in $D[0, 1]$, where $W(\cdot)$ is a Wiener process. 
\end{theorem}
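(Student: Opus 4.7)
The plan is to establish this functional martingale CLT through the classical two-step route: finite-dimensional weak convergence of the partial-sum process to $W(\varphi(\cdot))$, combined with tightness in $D[0,1]$. By Billingsley's weak convergence criterion, these two ingredients together imply the claimed process-level convergence.

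For the finite-dimensional piece, I would fix $0 \leq t_1 < \cdots < t_k \leq 1$ and, via the Cram\'er--Wold device, reduce to showing that any fixed linear combination $\sum_{j=1}^k c_j \sum_{\lfloor nt_{j-1}\rfloor < i \leq \lfloor nt_j \rfloor} X_{n,i}$ converges in distribution to a centered normal with variance $\sum_j c_j^2 (\varphi(t_j) - \varphi(t_{j-1}))$, matching the independent-increment Gaussian structure of $W(\varphi(\cdot))$. The weighted combined sum is itself a martingale difference array with respect to $\{\mathcal{F}_{n,i}\}$; its summed conditional variances converge to the right quantity by condition (a), and condition (b) upgrades to the Lindeberg condition for the weighted array (the reweighting changes $\varepsilon$ only by a constant depending on $\max_j |c_j|$). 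The scalar martingale CLT (Brown's theorem, or Theorem 3.2 of Hall and Heyde) then delivers the desired normal limit.

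For tightness, I would invoke the Aldous--Rebolledo criterion for locally square-integrable martingales on $D[0,1]$, which requires two ingredients: (i) for each $t$, $\sum_{i \leq \lfloor nt \rfloor} X_{n,i}$ is tight on $\mathbb{R}$, which follows from the orthogonality identity $\mathbb{E}\bigl[(\sum X_{n,i})^2\bigr] = \mathbb{E}\bigl[\sum \mathbb{E}[X_{n,i}^2 \mid \mathcal{F}_{n,i}]\bigr]$ together with condition (a); and (ii) for any $[0,1]$-valued stopping times $\tau_n$ and any deterministic $\delta_n \downarrow 0$, the conditional variance $\sum_{\lfloor n\tau_n \rfloor < i \leq \lfloor n(\tau_n + \delta_n) \rfloor} \mathbb{E}[X_{n,i}^2 \mid \mathcal{F}_{n,i}] \xrightarrow{p} 0$, which is immediate from condition (a) and the uniform continuity of the continuous function $\varphi$ on the compact interval $[0,1]$.

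The hard part is controlling the jumps of the partial-sum process uniformly in $n$: conditional-variance control alone does not rule out a single large increment, and that is precisely where condition (b) becomes indispensable. A particularly clean way to handle this (and the route taken in Durrett and Resnick) is via Skorohod embedding: on an enlarged probability space, construct a single Brownian motion $B_n$ so that each $X_{n,i}$ is realized as a Brownian increment over a stopping time with mean $\mathbb{E}[X_{n,i}^2 \mid \mathcal{F}_{n,i}]$. The partial-sum process then coincides with $B_n(\Phi_n(t))$, where $\Phi_n(t) := \sum_{i \leq \lfloor nt \rfloor} \mathbb{E}[X_{n,i}^2 \mid \mathcal{F}_{n,i}]$. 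Condition (a) gives pointwise convergence $\Phi_n \xrightarrow{p} \varphi$, and monotonicity of $\Phi_n$ combined with continuity of $\varphi$ upgrades this to uniform convergence on $[0,1]$ via a Dini-type argument. Condition (b) ensures that the Skorohod stopping times concentrate around their predictable means. The continuous mapping theorem, applied to the composition map, then yields $B_n \circ \Phi_n \Rightarrow W \circ \varphi$ in $D[0,1]$, which is the desired conclusion.
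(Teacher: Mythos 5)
The paper does not prove this statement at all: it is quoted verbatim as Theorem 2.5 of Durrett and Resnick (1978) and used as a black box, so there is no internal proof to compare against. Your reconstruction is sound and, in its final form, is essentially the original source's own argument: the Skorohod-embedding route (realize each $X_{n,i}$ as a Brownian increment over a stopping time, show the cumulative stopping times track the predictable variance process $\Phi_n$ uniformly using the conditional Lindeberg condition, upgrade pointwise convergence $\Phi_n \xrightarrow{p} \varphi$ to uniform convergence by monotonicity of $\Phi_n$ and continuity of $\varphi$, then compose) is exactly how Durrett and Resnick prove it. Two remarks. First, your opening plan (Cram\'er--Wold fidis via Brown's theorem plus Aldous--Rebolledo tightness) is a second, independent proof sketch that the embedding argument then supersedes; either route works, but as written the first is redundant scaffolding. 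Second, you silently and correctly repaired two typos in the paper's statement: condition (a) as printed sums $\mathbb{E}[X_{n,i}\mid\mathcal{F}_{n,i}]$, which is identically zero for a martingale difference array, and the conditioning should be on $\mathcal{F}_{n,i-1}$; the intended hypothesis, which is what the paper actually verifies in Lemma~\ref{score_asymp} and what your proof uses, is $\sum_{i\le\floor{nt}}\mathbb{E}[X_{n,i}^2\mid\mathcal{F}_{n,i-1}]\xrightarrow{p}\varphi(t)$ with $\varphi$ continuous (and automatically nondecreasing). One sentence in your write-up is imprecise --- the partial-sum process equals $B_n$ evaluated at the sum of the realized stopping times, not at $\Phi_n(t)$ --- but you immediately supply the missing step (concentration of the stopping times around their predictable compensator, which is where condition (b) enters), so the logic is complete.
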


\begin{lemma}
\label{lem:score_prime_asymp}
Suppose the graph sequence $\{G(k)\}_{k = 2}^{n}$ evolves according to PA$(\delta)$. Then as $n \rightarrow \infty$,
\[ \sup_{t \in [0, 1]} \sup_{\lambda \in [\eta, K]} \left| n^{-1}u_{(0, nt]}(\lambda) - tU(\lambda; \delta) \right| \xrightarrow{p} 0, \]
and
\[ \sup_{t \in [0, 1]} \sup_{\lambda \in [\eta, K]} \left| -n^{-1}u'_{(0, nt]}(\lambda) - tI(\lambda; \delta) \right| \xrightarrow{p} 0, \]
where 
\[ U(\lambda; \delta) =  \sum_{i = 1}^\infty \frac{p_{>i}(\delta)}{i + \lambda} - \frac{1}{2 + \lambda}.\]
\end{lemma}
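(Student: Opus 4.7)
\medskip
\noindent\textbf{Proof proposal.}
The plan is to exploit the representations \eqref{eq:score2} and \eqref{eq:scoreprime2} (with $s=0$) to write
\[
n^{-1}u_{(0,nt]}(\lambda) = \sum_{i=1}^\infty \frac{N_{>i}(\floor{nt})/n}{i+\lambda} - \frac{\floor{nt}/n}{2+\lambda},
\qquad
-n^{-1}u'_{(0,nt]}(\lambda) = \sum_{i=1}^\infty \frac{N_{>i}(\floor{nt})/n}{(i+\lambda)^2} - \frac{\floor{nt}/n}{(2+\lambda)^2},
\]
so that the desired uniform convergence reduces to showing that, uniformly in $t\in[0,1]$ and $\lambda\in[\eta,K]$,
\[
\sum_{i=1}^\infty \frac{N_{>i}(\floor{nt})/n - t\,p_{>i}(\delta)}{i+\lambda} \xrightarrow{p} 0,
\]
and likewise with $(i+\lambda)^2$ in the denominator. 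The deterministic drift term $\floor{nt}/(n(2+\lambda)) - t/(2+\lambda)$ is trivially $O(1/n)$ uniformly in $(t,\lambda)$. Uniformity in $\lambda$ on the compact set $[\eta,K]$ follows from the crude bound $1/(i+\lambda)\le 1/(i+\eta)$ together with a truncation-in-$i$ argument.

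The core step is to establish, for each fixed $i\ge 1$,
\[
\sup_{t\in[0,1]}\left|\frac{N_{>i}(\floor{nt})}{n} - t\,p_{>i}(\delta)\right| \xrightarrow{p} 0.
\]
This follows from a P\'olya/Dini-type argument: $t\mapsto N_{>i}(\floor{nt})/n$ is monotone nondecreasing and its putative limit $t\mapsto t\,p_{>i}(\delta)$ is continuous, so it suffices to combine the pointwise convergence \eqref{deg_counts2} (which gives $N_{>i}(\floor{nt})/n\xrightarrow{p} t\,p_{>i}(\delta)$ for each fixed $t$) on a finite $\epsilon$-grid of $[0,1]$ with the monotone sandwich $N_{>i}(\floor{nt_j})/n \le N_{>i}(\floor{nt})/n \le N_{>i}(\floor{nt_{j+1}})/n$. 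The excess over the grid is controlled by $(t_{j+1}-t_j)\,p_{>i}(\delta)$, which is arbitrarily small.

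The uniform-in-$\lambda$ control on the tail comes from a deterministic bound. Since the total degree in $G(n)$ equals $2(n-1)$ (every edge contributes $2$), at most $2(n-1)/(i+1)$ nodes can have degree strictly greater than $i$, i.e.\
\[
\frac{N_{>i}(n)}{n} \le \frac{2}{i+1},
\]
and by monotonicity $N_{>i}(\floor{nt})/n \le 2/(i+1)$ uniformly in $t$. Combined with $p_{>i}(\delta)=O(i^{-(2+\delta)})$ from Stirling, this yields
\[
\sup_{t\in[0,1]}\sup_{\lambda\in[\eta,K]} \sum_{i>M}\frac{N_{>i}(\floor{nt})/n + t\,p_{>i}(\delta)}{i+\lambda} \;=\; O(1/M),
\]
and similarly $O(1/M^2)$ with $(i+\lambda)^2$ in the denominator. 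Given $\epsilon>0$, I fix $M$ large so that this tail is below $\epsilon$, and then appeal to the previous paragraph applied to the finitely many indices $i\le M$ to dispatch the head. Putting the head and tail estimates together delivers both claims.

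The main obstacle is the interplay between the three sources of non-uniformity ($t$, $\lambda$, and the sum over $i$); the deterministic tail bound from the total-degree constraint is what makes the truncation argument painless, while the monotonicity in $t$ of $N_{>i}(\floor{nt})$ is what converts the pointwise convergence in \eqref{deg_counts2} into uniform-in-$t$ convergence. Neither piece is hard individually, but they must be dovetailed carefully to keep the $\lambda$-uniformity transparent.
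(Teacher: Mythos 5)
Your proposal matches the paper's proof essentially step for step: the same $N_{>i}$ representation of the score and hessian, the same monotonicity-plus-continuous-limit (P\'olya-type) upgrade of the pointwise convergence \eqref{deg_counts2} to uniformity in $t$ (which the paper asserts in one line and you spell out), the same deterministic bound $N_{>i}(\floor{nt})/\floor{nt} \le 2/i$ from the total-degree identity, and the same truncation at $M$ to get uniformity in $\lambda$ over $[\eta, K]$. The only discrepancies are harmless bookkeeping: under the paper's initialization the total degree at step $\floor{nt}$ is $2\floor{nt}$ rather than your $2(n-1)$, and the exact representation carries $O(1/n)$ boundary terms from the seed graph (e.g.\ the paper's $2/(n(1+\lambda)^2)$ correction), both of which vanish uniformly exactly as you note for the drift term.
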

\begin{proof}
We only prove the second result, as the proof for the first is nearly identical. The proof follows those of \cite{gao2017asymptotic} and \cite{wan2017fitting}, with minor adjustments for the supremum norm in $t$. First, see that by using a similar strategy as in Lemma \ref{score_asymp}
\begin{align*}
-\frac{1}{n} u'_{(0, nt]}(\lambda) &= \frac{1}{n} \sum_{k = 3}^{\floor{nt}} \frac{1}{(D_{v_k}(k - 1) + \lambda)^2} - \frac{\floor{nt} - 2}{n(2 + \lambda)^2} \\
&= \frac{1}{n} \sum_{k = 3}^{\floor{nt}} \sum_{i = 1}^\infty \frac{1_{\{ D_{v_k}(k - 1) = i \}}}{(i + \lambda)^2} - \frac{\floor{nt} - 2}{n(2 + \lambda)^2} \\
&= \sum_{i = 1}^\infty \frac{N_{>i}(\floor{nt})/n}{(i + \lambda)^2} -  \frac{1}{n} \sum_{i = 1}^\infty \frac{21_{\{i = 1 \}}}{(i + \lambda)^2} - \frac{\floor{nt} - 2}{n(2 + \lambda)^2} \\
&= \sum_{i = 1}^\infty \frac{N_{>i}(\floor{nt})/n}{(i + \lambda)^2} - \frac{1}{n}  \frac{2}{(1 + \lambda)^2} - \frac{\floor{nt} - 2}{n(2 + \lambda)^2}.
\end{align*}
Further,
\begin{align*}
\sup_{t \in [0, 1]}& \sup_{\lambda \in [\eta, K]} \left| -n^{-1}u'_{(0, nt]}(\lambda) - tI(\lambda; \delta) \right| \\
&\leq \sup_{t \in [0, 1]} \sup_{\lambda \in [\eta, K]} \left| \sum_{i = 1}^\infty \frac{N_{>i}(\floor{nt})/n}{(i + \lambda)^2} - \sum_{i = 1}^\infty \frac{tp_{>i}(\delta)}{(i + \lambda)^2}\right| + \frac{1}{n}  \frac{2}{(1 + \eta)^2} \\
&\quad + \sup_{t \in [0, 1]} \sup_{\lambda \in [\eta, K]} \left|- \frac{\floor{nt} - 2}{n(2 + \lambda)^2} + \frac{1}{(2 + \lambda)^2} \right|.
\end{align*}
Clearly, the second term converges to $0$. The third term also converges to $0$ since 
\begin{align*}
\sup_{t \in [0, 1]} \sup_{\lambda \in [\eta, K]} \left|- \frac{\floor{nt} - 2}{n(2 + \lambda)^2} + \frac{1}{(2 + \lambda)^2} \right| &\leq \frac{1}{(2 + \eta)^2} \sup_{t \in [0, 1]} \left|- \frac{\floor{nt} - 2}{n} + 1 \right| \rightarrow 0.
\end{align*}
We now turn our attention to the first term. From \eqref{deg_counts2}, we have that $N_{>i}(\floor{nt})/n \xrightarrow{p} tp_{>i}(\delta)$ for any $t \in [0, 1]$. Since $N_{>i}(\floor{nt})/n$ is a monotone and  $tp_{>i}(\delta)$ is continuous as functions of $t$, this convergence is uniform, i.e.
\begin{equation}
\label{eq:unifdeg}
\sup_{t \in [0, 1]}\left|N_{>i}(\floor{nt})/n -  tp_{>i}(\delta) \right| \xrightarrow{p} 0.
\end{equation}
Also note that for every $i \geq 1$
\begin{align*}
iN_{>i}(\floor{nt}) = i \sum_{j = i + 1}^\infty N_{j}(\floor{nt}) \leq \sum_{j = 1}^\infty jN_{j}(\floor{nt}) = 2 \floor{nt},
\end{align*}
since summing over the degrees returns twice the number of edges. Hence 
\begin{equation*}
N_{>i}(\floor{nt})/nt \leq N_{>i}(\floor{nt})/\floor{nt} \leq 2/i.
\end{equation*}
We thus have for any $M \in \mathbb{N}$
\begin{align*}
&\sup_{t \in [0, 1]}  \sup_{\lambda \in [\eta, K]}  \left| \sum_{i = 1}^\infty \frac{N_{>i}(\floor{nt})/n}{(i + \lambda)^2} - \sum_{i = 1}^\infty \frac{tp_{>i}(\delta)}{(i + \lambda)^2}\right| \\
&\leq   \sum_{i = 1}^\infty \frac{\sup_{t \in [0, 1]} \left| N_{>i}(\floor{nt})/n - tp_{>i}(\delta) \right|}{(i + \eta)^2} \\
&\leq \sum_{i = 1}^M \frac{\sup_{t \in [0, 1]} \left| N_{>i}(\floor{nt})/n - tp_{>i}(\delta) \right|}{(i + \eta)^2} + \sum_{i = M + 1}^\infty \frac{2/i}{(i + \eta)^2} + \sum_{i = M + 1}^\infty \frac{p_{>i}(\delta)}{(i + \eta)^2}.
\end{align*}
By choosing large enough $M$, the last two terms can be made arbitrarily small. By \eqref{eq:unifdeg}, the first term converges in probability to $0$. Putting these results together, we have
\[\sup_{t \in [0, 1]} \sup_{\lambda \in [\eta, K]} \left| -n^{-1}u'_{(0, nt]}(\lambda) - tI(\lambda; \delta) \right| \xrightarrow{p} 0. \]
\end{proof}

\begin{lemma}
\label{lem:delta_unif}
Fix $s \in [0, 1]$ and $\tau \in (0, 1 - s)$. Assume that $\{G(k)\}_{k = 2}^{n}$ evolves according to the PA$(\delta)$ rule.
Then as $n \rightarrow \infty$
\begin{align*}
\sup_{t \in [s + \tau, 1]} \left| \hat{\delta}_{(ns, nt]} - \delta \right| \xrightarrow{p} 0.
\end{align*}
\end{lemma}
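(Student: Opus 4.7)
The plan is a standard uniform-in-$t$ consistency argument for M-estimators, bootstrapping from the uniform law of large numbers for the score in Lemma~\ref{lem:score_prime_asymp}. Writing $u_{(ns, nt]}(\lambda) = u_{(0, nt]}(\lambda) - u_{(0, ns]}(\lambda)$ and applying the first uniform convergence in Lemma~\ref{lem:score_prime_asymp} to each piece immediately yields
\[
\sup_{t \in [s + \tau, 1]} \sup_{\lambda \in [\eta, K]} \left| n^{-1} u_{(ns, nt]}(\lambda) - (t - s)\, U(\lambda; \delta) \right| \xrightarrow{p} 0,
\]
where $U(\lambda; \delta) = \sum_{i \geq 1} p_{>i}(\delta)/(i + \lambda) - 1/(2 + \lambda)$.

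The next step is to identify $\delta$ as a well-separated zero of the limit. Using \eqref{cdfpmf}, a direct computation gives $U(\delta;\delta) = \frac{1}{2+\delta}\sum_i p_i(\delta) - \frac{1}{2+\delta} = 0$, and $\partial_\lambda U(\lambda;\delta) = -I(\lambda;\delta)$ is strictly negative on $[\eta,K]$ (this identifiability statement being established in \cite{gao2017asymptotic}), so $\lambda = \delta$ is the unique zero of $U(\cdot;\delta)$ on $[\eta,K]$ and $U(\cdot;\delta)$ is strictly monotone there. Consequently, for each $\epsilon>0$ with $\delta\pm\epsilon\in[\eta,K]$, the quantity
\[
\rho_\epsilon := \inf\bigl\{|U(\lambda;\delta)| : \lambda\in[\eta,K],\ |\lambda-\delta|\geq\epsilon \bigr\}
\]
is strictly positive, and the separation $(t-s)|U(\lambda;\delta)|\geq\tau\rho_\epsilon$ holds uniformly in $t\in[s+\tau,1]$ on the same set of $\lambda$'s.

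Finally, on the high-probability event that the uniform remainder from the first display is below $\tau\rho_\epsilon/2$, every $\lambda \in [\eta,K]$ with $|\lambda - \delta| \geq \epsilon$ satisfies $|n^{-1} u_{(ns, nt]}(\lambda)| \geq \tau\rho_\epsilon/2 > 0$ for every $t\in[s+\tau,1]$. Since $\hat{\delta}_{(ns, nt]}$ is a root of $u_{(ns, nt]}$ by construction, this forces $|\hat{\delta}_{(ns, nt]} - \delta| < \epsilon$ for all $t\in[s+\tau,1]$ simultaneously, which is exactly the claim.

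The main obstacle is securing the uniform-in-$t$ separation. The buffer $\tau>0$ is essential: as $t\downarrow s$, the limit $(t-s)U(\lambda;\delta)$ degenerates to zero and the separation $\tau\rho_\epsilon$ collapses, so consistency would fail near $t=s$ for the obvious sample-size reason. A secondary technicality is guaranteeing that a root of $u_{(ns,nt]}$ actually exists in $[\eta,K]$ for every $t\in[s+\tau,1]$ with probability tending to one; this follows from the strict concavity of $\ell_{(ns,nt]}$ for large $n$, which in turn comes from the positivity of $I(\lambda;\delta)$ on $[\eta,K]$ together with the second uniform convergence in Lemma~\ref{lem:score_prime_asymp}.
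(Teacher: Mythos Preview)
Your proposal is correct and follows essentially the same approach as the paper's proof: both combine the uniform convergence of $n^{-1}u_{(ns,nt]}(\lambda)$ to $(t-s)U(\lambda;\delta)$ from Lemma~\ref{lem:score_prime_asymp} with the well-separation of $\delta$ as the unique zero of $U(\cdot;\delta)$, then use that $\hat{\delta}_{(ns,nt]}$ is a root of the score to localize it uniformly in $t$. The only cosmetic difference is that the paper works with the sign of $U$ (positive on $[\eta,\delta-\xi]$, negative on $[\delta+\xi,K]$), which yields root existence via the intermediate value theorem for free, whereas you phrase the separation through $|U|\geq\rho_\epsilon$ and handle existence separately via strict concavity.
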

\begin{proof}
This proof closely follows that of Theorem 3.2 in \cite{wan2017fitting}. From Lemma 4 of \cite{gao2017asymptotic}, $U(\lambda; \delta)$ has a unique zero at $\lambda = \delta$. Furthermore, it is positive for $\lambda < \delta$ and negative for $\lambda > \delta$. 

Fix $\epsilon > 0$. By continuity of $U(\lambda; \delta)$, there exits $\xi > 0$ such that $U(\lambda; \delta) > \epsilon/\tau$ on $[\eta, \delta - \xi]$ and $U(\lambda; \delta) < -\epsilon/\tau$ on $[\delta + \xi, K]$. Hence
\begin{align*}
\inf_{\lambda \in [\eta, \delta - \xi]} \inf_{ t \in [s + \tau, 1]}(t - s)U(\lambda; \delta) > \epsilon  \qquad \text{and} \qquad  \sup_{\lambda \in [\delta + \xi, K]} \sup_{ t \in [s + \tau, 1]}(t - s)U(\lambda; \delta) < -\epsilon .
\end{align*}
Further, for all $\lambda \in [\eta, \delta - \xi]$ and $t \in [s + \tau, 1]$,
\begin{align*}
n^{-1}u_{(ns, nt]}(\lambda) &\geq (t - s)U(\lambda; \delta) - \sup_{v \in [s + \tau, 1]} \sup_{\lambda \in [\eta, K]} \left| n^{-1}u_{(ns, nv]}(\lambda) - (v - s)U(\lambda; \delta) \right| \\
&> \epsilon - \sup_{v \in [s + \tau, 1]} \sup_{\lambda \in [\eta, K]} \left| n^{-1}u_{(ns, nv]}(\lambda) - (v - s)U(\lambda; \delta) \right|.
\end{align*}
Similarly, for all $\lambda \in [\delta + \xi, K]$ and $t \in [s + \tau, 1]$,
\begin{align*}
n^{-1}u_{(ns, nt]}(\lambda) &\leq (t - s)U(\lambda; \delta) + \sup_{v \in [s + \tau, 1]} \sup_{\lambda \in [\eta, K]} \left| n^{-1}u_{(ns, nv]}(\lambda) - (v - s)U(\lambda; \delta) \right| \\
&< -\epsilon + \sup_{v \in [s + \tau, 1]} \sup_{\lambda \in [\eta, K]} \left| n^{-1}u_{(ns, nv]}(\lambda) - (v - s)U(\lambda; \delta) \right|.
\end{align*}
By Lemma \ref{lem:score_prime_asymp}, we see that
\begin{align*}
\mathbb{P} \left( \sup_{v \in [s + \tau, 1]} \sup_{\lambda \in [\eta, K]} \left| n^{-1}u_{(ns, nv]}(\lambda) - (v - s)U(\lambda; \delta) \right| < \epsilon/2 \right) \rightarrow 1 \qquad \text{as } n\rightarrow \infty. 
\end{align*}
Hence with probability tending towards $1$ as $n \rightarrow \infty$:
\begin{align*}
&\inf_{t \in [s + \tau, 1]} \inf_{\lambda \in [\eta, \delta - \xi]} n^{-1}u_{(ns, nt]}(\lambda) \geq \epsilon/2,\\
& \sup_{t \in [s + \tau, 1]} \sup_{\lambda \in [\delta + \xi, K]} n^{-1}u_{(ns, nt]}(\lambda) \leq -\epsilon/2.
\end{align*}
Since $u_{(ns, nt]}(\hat{\delta}_{(ns, nt]}) = 0$ $\forall t \in [s + \tau, 1]$, this implies 
\begin{align*}
\mathbb{P}\left( \sup_{t \in [s + \tau, 1]} \left| \hat{\delta}_{(ns, nt]} - \delta \right| < \xi \right) \rightarrow 1.
\end{align*}
\end{proof}

\subsection{Proof of Theorem \ref{thm:delta_unif_chng}}\label{sec:delta_unif_chng_proof}

The proof is very similar to that of Lemma \ref{lem:delta_unif}. We replace the use of Lemma 4 in \cite{gao2017asymptotic} with Lemma \ref{lem:unique_zero}, which ensures that $U^{\star}_t(\lambda) - U^{\star}_s(\lambda)$ has a unique zero at $\delta_2$. Additionally, we replace  Lemma \ref{lem:score_prime_asymp} with Lemma \ref{lem:score_asymp_chng} to ensure the uniform convergence of $n^{-1}u_{(ns, nt]}(\lambda)$ to $U^{\star}_t(\lambda) - U^{\star}_s(\lambda)$ on $[\eta, K]$. The result follows accordingly. Lemmas \ref{lem:score_asymp_chng} and \ref{lem:unique_zero} are proven below.

\begin{lemma}
\label{lem:score_asymp_chng}
Suppose the graph sequence $\{G(k)\}_{k = 2}^{n}$ evolves according to PA$(t^\star; \delta_1, \delta_2)$. Then as $n \rightarrow \infty$,
\begin{align*}
\sup_{t \in [t^\star, 1]} \sup_{\lambda \in [\eta, K]} \left| n^{-1}u_{(0, nt]}(\lambda)  - U_t^\star(\lambda) \right| \xrightarrow{p} 0,
\end{align*}
and
\begin{align*}
\sup_{t \in [t^\star, 1]} \sup_{\lambda \in [\eta, K]} \left| -n^{-1}u'_{(0, nt]}(\lambda)  - I_t^\star(\lambda) \right| \xrightarrow{p} 0,
\end{align*}
where 
\begin{align*}
U_t^\star(\lambda) \equiv&  t \left(\sum_{i = 1}^\infty \frac{p_{>i}^\star(t; \delta_1, \delta_2)}{i + \lambda} - \frac{1}{2 + \lambda}\right),
\end{align*}
and 
\begin{align*}
I_t^\star(\lambda) \equiv&  t \left(\sum_{i = 1}^\infty \frac{p_{>i}^\star(t; \delta_1, \delta_2)}{(i + \lambda)^2} - \frac{1}{(2 + \lambda)^2}\right).
\end{align*}
\end{lemma}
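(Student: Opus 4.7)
The plan is to adapt the argument from Lemma~\ref{lem:score_prime_asymp} almost line by line, with Lemma~\ref{lem:tailchng} playing the role that \eqref{deg_counts2} played in the homogeneous PA setting. I would begin by using \eqref{eq:score2} and \eqref{eq:scoreprime2} with $s=0$ to write
\begin{align*}
n^{-1}u_{(0,nt]}(\lambda) &= \sum_{i=1}^\infty \frac{N_{>i}(\floor{nt})/n}{i+\lambda} - \frac{\floor{nt}/n}{2+\lambda}, \\
-n^{-1}u'_{(0,nt]}(\lambda) &= \sum_{i=1}^\infty \frac{N_{>i}(\floor{nt})/n}{(i+\lambda)^2} - \frac{\floor{nt}/n}{(2+\lambda)^2},
\end{align*}
modulo an $O(1/n)$ boundary term coming from the seed graph (handled exactly as in Lemma~\ref{lem:score_prime_asymp}). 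The targets $U^\star_t(\lambda)$ and $I^\star_t(\lambda)$ have the same form with $N_{>i}(\floor{nt})/n$ replaced by $tp^\star_{>i}(t;\delta_1,\delta_2)$ and $\floor{nt}/n$ replaced by $t$. The single-variable gap $|\floor{nt}/n - t|$ is $O(1/n)$ uniformly in $t$, and when divided by $(2+\eta)^p$ for $p=1,2$ it contributes negligibly to the supremum, so the problem reduces to controlling the series terms.

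For the series, I would use the standard truncation device: fix $M$ and split the sum at $i=M$. For the finite head $i=1,\dots,M$, Lemma~\ref{lem:tailchng} gives $\sup_{t\in[t^\star,1]}|N_{>i}(\floor{nt})/(nt) - p^\star_{>i}(t;\delta_1,\delta_2)|\to_p 0$ for each such $i$, and since $t\in[t^\star,1]$ is bounded this implies $\sup_t|N_{>i}(\floor{nt})/n - tp^\star_{>i}(t;\delta_1,\delta_2)|\to_p 0$; the uniformity in $\lambda\in[\eta,K]$ comes for free by bounding $1/(i+\lambda)^p$ by $1/(i+\eta)^p$. For the tail $i>M$, the structural inequality $iN_{>i}(k)\le \sum_j jN_j(k)=2|E(k)|$ gives $N_{>i}(\floor{nt})/n\le 2/i$, and passing this bound to the limit yields $tp^\star_{>i}(t;\delta_1,\delta_2)\le 2/i$ as well. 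Hence the tail is bounded (uniformly in $t,\lambda$) by
\begin{align*}
\sum_{i>M}\frac{4/i}{(i+\eta)^p} \le 4\sum_{i>M}\frac{1}{i(i+\eta)^p},
\end{align*}
which is summable for both $p=1$ and $p=2$ and can be made arbitrarily small by choosing $M$ large. Combining the two pieces and letting $n\to\infty$ then $M\to\infty$ completes the argument.

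The main obstacle is really just verifying that the model-independent tail bound $iN_{>i}(\floor{nt})\le 2\floor{nt}$ survives under the changepoint dynamics, but this is immediate because it follows from the handshake identity which holds for any graph, regardless of how edges were added. A smaller subtlety is confirming that the pointwise limit in Lemma~\ref{lem:tailchng} upgrades to uniform convergence in $t\in[t^\star,1]$ when multiplied by $t$: since $t$ is continuous and bounded away from both $0$ and $\infty$ on the compact interval $[t^\star,1]$, this is automatic. Once these two observations are in place, the rest of the proof is identical in structure to Lemma~\ref{lem:score_prime_asymp} and requires no new ideas.
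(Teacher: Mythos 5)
Your proposal is correct and follows essentially the same route as the paper: the paper's proof simply repeats the argument of Lemma~\ref{lem:score_prime_asymp}, replacing the uniform convergence in \eqref{eq:unifdeg} with its changepoint analogue (Proposition 4.17 of \cite{bhamidi2018change}, which is precisely the content of Lemma~\ref{lem:tailchng} that you invoke), and keeps the same truncation at $M$ together with the handshake bound $iN_{>i}(\floor{nt}) \leq 2\floor{nt}$ for the tail. The two points you flag as needing care --- the model-independence of the tail bound and upgrading the convergence of $N_{>i}(\floor{nt})/(nt)$ to that of $N_{>i}(\floor{nt})/n$ after multiplying by the bounded factor $t \leq 1$ --- are handled exactly as the paper does.
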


\begin{proof}
Repeat the proof of Lemma \ref{lem:score_prime_asymp}, replacing the supremum over $[0, t]$ with a supremum over $[t^\star, t]$, until \eqref{eq:unifdeg} where we replace the convergence with Proposition 4.17 of \cite{bhamidi2018change} which states that, uniformly on $[t^\star, 1]$, $N_{>i}(\floor{nt})/n$ converges in probability to $tp_{>i}^\star(t; \delta_1, \delta_2)$. The rest of the proof follows exactly.
\end{proof}

The next lemma demonstrates that for $t > s \geq t^\star$, the limit of $n^{-1}u_{(ns, nt]}(\lambda)$, $U_t^\star(\lambda) - U_s^\star(\lambda)$ has a unique zero at $\delta_2$. This is a necessary ingredient in order to prove consistency of the MLE in the post-changepoint regime.

\begin{lemma}
\label{lem:unique_zero}
Suppose $t^\star \leq s < t \leq 1$. Then $U_t^\star(\lambda) - U_{s}^\star(\lambda)$ has a unique zero at $\lambda = \delta_2$.
\end{lemma}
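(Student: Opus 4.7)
The plan is to use Lemma \ref{lem:recur} to recognize $\lambda \mapsto U_t^\star(\lambda) - U_s^\star(\lambda)$ as a positive multiple of the derivative of a function $g$ whose critical points are forced to be strict local maxima, from which uniqueness of the zero is immediate. I will first check that $\delta_2$ is indeed a zero, then exhibit strict concavity of $g$ at every critical point via a variance (Cauchy--Schwarz) bound.

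Evaluating at $\lambda = \delta_2$ and applying Lemma \ref{lem:recur} to each summand yields
\begin{align*}
U_t^\star(\delta_2) - U_s^\star(\delta_2) &= \sum_{i \geq 1} \frac{tp_{>i}^\star(t;\delta_1,\delta_2) - sp_{>i}^\star(s;\delta_1,\delta_2)}{i+\delta_2} - \frac{t-s}{2+\delta_2} \\
&= \frac{1}{2+\delta_2}\int_s^t \sum_{i \geq 1} p_i^\star(u;\delta_1,\delta_2)\,du - \frac{t-s}{2+\delta_2} = 0,
\end{align*}
since $\{p_i^\star(u;\delta_1,\delta_2)\}_{i \geq 1}$ is a probability mass function (Theorem \ref{thm:degdistchng}). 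Applying Lemma \ref{lem:recur} to the general $\lambda$ expression and introducing the weights
$$q_i := \frac{(i+\delta_2)\int_s^t p_i^\star(u;\delta_1,\delta_2)\,du}{(2+\delta_2)(t-s)}, \qquad i \geq 1,$$
the $q_i$ are nonnegative and sum to one by the calculation just performed. One then has
$$\frac{U_t^\star(\lambda) - U_s^\star(\lambda)}{t-s} = \sum_{i \geq 1} \frac{q_i}{i+\lambda} - \frac{1}{2+\lambda} = g'(\lambda), \quad g(\lambda) := \sum_{i \geq 1} q_i \log(i+\lambda) - \log(2+\lambda),$$
so the task reduces to proving $g'$ has a unique zero on $[\eta, K]$.

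At any $\lambda^\star$ with $g'(\lambda^\star) = 0$, we have $\sum_i q_i/(i+\lambda^\star) = 1/(2+\lambda^\star)$. Treating $q$ as a law on $\mathbb{N}^+$, this reads $\mathbb{E}_q[1/(I+\lambda^\star)] = 1/(2+\lambda^\star)$, so $\mathrm{Var}_q(1/(I+\lambda^\star)) \geq 0$ gives
$$\sum_{i \geq 1} \frac{q_i}{(i+\lambda^\star)^2} \geq \left(\sum_{i \geq 1} \frac{q_i}{i+\lambda^\star}\right)^2 = \frac{1}{(2+\lambda^\star)^2},$$
strictly, because $q_i > 0$ for every $i \geq 1$ (the limiting distribution $p_i^\star$ has full support on $\mathbb{N}^+$ via the birth--immigration representation in Theorem \ref{thm:degdistchng}) and $1/(i+\lambda^\star)$ is not constant in $i$. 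Hence $g''(\lambda^\star) = 1/(2+\lambda^\star)^2 - \sum_i q_i/(i+\lambda^\star)^2 < 0$: every critical point of $g$ is a strict local maximum. If $g'$ had two zeros $\lambda_1 < \lambda_2$, then $g$ would attain an interior minimum at some $\lambda_3 \in (\lambda_1, \lambda_2)$ with $g'(\lambda_3)=0$ and $g''(\lambda_3) \geq 0$, contradicting what was just shown. Thus $\delta_2$ is the only zero.

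The main obstacle is recognizing how to bring in the variance inequality; this hinges on the observation that Lemma \ref{lem:recur} factors $tp_{>i}^\star(t;\delta_1,\delta_2) - sp_{>i}^\star(s;\delta_1,\delta_2)$ so as to produce exactly the probability-distribution structure required. Once that is set up, the uniqueness argument is standard.
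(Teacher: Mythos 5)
There is a genuine gap at the single point on which your whole argument pivots: the normalization $\sum_{i\geq 1} q_i = 1$. You assert it follows ``by the calculation just performed,'' but that calculation (evaluating at $\lambda = \delta_2$ and applying Lemma \ref{lem:recur}) establishes only the \emph{zeroth}-moment identity $\int_s^t \sum_i p_i^\star(u;\delta_1,\delta_2)\,du = t-s$, i.e.\ that $\{p_i^\star(u)\}_i$ is a pmf. By contrast,
\begin{align*}
\sum_{i\geq 1} q_i = \frac{1}{(2+\delta_2)(t-s)} \int_s^t \sum_{i \geq 1} (i+\delta_2)\, p_i^\star(u;\delta_1,\delta_2)\, du,
\end{align*}
which equals $1$ if and only if the \emph{first}-moment (mean-degree-two) identity $\sum_i i\, p_i^\star(u;\delta_1,\delta_2) = 2$ holds for $u \in (s,t]$. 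That identity is true, but it is precisely the nontrivial core of the paper's proof: it is obtained by computing $\mathbb{E}\bigl[\xi_{\delta_2}(\tilde{T}(u)) + \delta_2\bigr]$ and $\mathbb{E}\bigl[\xi^{\xi_{\delta_1}(T)}_{\delta_2}(\tau^\star(u)) + \delta_2\bigr]$ explicitly (equations \eqref{eq:zeroone} and \eqref{eq:zerotwo}), including the computation $\mathbb{E}[\xi_{\delta_1}(T)] = 2 + \delta_1$ under the pre-change law, and observing that the two contributions combine to exactly $2+\delta_2$. Nothing in Theorem \ref{thm:degdistchng} or Lemma \ref{lem:recur} hands you this for free, and your variance step genuinely needs it: if $m := \sum_i q_i \neq 1$, Cauchy--Schwarz only yields $\sum_i q_i/(i+\lambda^\star)^2 \geq m^{-1}(2+\lambda^\star)^{-2}$ at a critical point, so $g''(\lambda^\star) \leq (1 - 1/m)(2+\lambda^\star)^{-2}$, which fails to be negative if $m > 1$. (One could alternatively argue $m = 1$ probabilistically, since $q_i$ is the limit of $\bigl(N_{>i}(\floor{nt}) - N_{>i}(\floor{ns})\bigr)/(\floor{nt}-\floor{ns})$ and exactly one existing node increments its degree per step, but exchanging the limit with the infinite sum over $i$ then requires a tail bound you have not supplied.)

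Modulo this missing identity, your wrapper is sound and is in fact a genuinely different finish from the paper's: the paper factors $U_t^\star(\lambda) - U_s^\star(\lambda) = \frac{\delta_2-\lambda}{(2+\delta_2)(2+\lambda)}\int_s^t J(u,\lambda)\,du$ and shows $J(u,\lambda) > 0$ directly, whereas you recognize $(U_t^\star - U_s^\star)/(t-s)$ as $g'$ for a log-likelihood-type function and kill multiple zeros via strict concavity at critical points. Note, though, that the paper's positivity argument for $J$ reduces to the \emph{same} identity $\sum_i (i+\delta_2)p_i^\star(u;\delta_1,\delta_2) = 2+\delta_2$, so the two proofs share their hard step; yours simply leaves it unproven. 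Once you import that computation, your argument is complete, and (worth stating explicitly, since it is used downstream in the proofs of Theorems \ref{thm:delta_unif_chng} and \ref{thm:consistency}) it also recovers the sign structure: a unique zero at $\delta_2$ with $g''(\delta_2) < 0$ forces $U_t^\star - U_s^\star$ to be positive on $\lambda < \delta_2$ and negative on $\lambda > \delta_2$.
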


\begin{proof}
Note that
\begin{align*}
U_{t}^\star(\lambda) - U_{s}^\star(\lambda) =&  \sum_{i = 1}^\infty \frac{tp_{>i}^\star(t; \delta_1, \delta_2) - sp_{>i}^\star(s; \delta_1, \delta_2)}{i + \lambda} - \frac{t - s}{2 + \lambda},
\end{align*}
which, using Lemma \ref{lem:recur}, we may write as
\begin{align*}
U_{t}^\star(\lambda) - U_{s}^\star(\lambda) =&  \int_s^t \left(\sum_{i = 1}^\infty \frac{p_{>i}^\star(u; \delta_1, \delta_2)(i + \delta_2)}{(i + \lambda)(2 + \delta_2)} - \frac{1}{2 + \lambda} \right) du \\
=& \int_s^t \left(\sum_{i = 1}^\infty \frac{p_{i}^\star(u; \delta_1, \delta_2)(i + \delta_2)}{(i + \lambda)(2+\delta_2)} - \sum_{i = 1}^\infty \frac{p_{i}^\star(u; \delta_1, \delta_2)(i + \lambda)}{(2 + \lambda)(i + \lambda)} \right) du \\
=& \int_s^t \left(\sum_{i = 1}^\infty \frac{p_{i}^\star(u; \delta_1, \delta_2)}{i + \lambda}\left(\frac{i + \delta_2}{2+\delta_2} -\frac{i + \lambda}{2 + \lambda} \right) \right) du \\
=& \frac{\delta_2 - \lambda}{(2 + \delta_2)(2 + \lambda)} \int_s^t \sum_{i = 1}^\infty \frac{p_{i}^\star(u; \delta_1, \delta_2)}{i + \lambda}(2 - i) du \\
\equiv& \frac{\delta_2 - \lambda}{(2 + \delta_2)(2 + \lambda)} \int_s^t J(u, \lambda) du.
\end{align*}
We prove that the integrand $J(u, \lambda)$ is strictly positive. See that
\begin{align*}
\sum_{i = 1}^\infty \frac{p_{i}^\star(u; \delta_1, \delta_2)}{i + \lambda}(2 - i) =&  \frac{p_{1}^\star(u; \delta_1, \delta_2)}{1 + \lambda} - \sum_{i = 3}^\infty \frac{p_{i}^\star(u; \delta_1, \delta_2)}{i + \lambda}(i - 2) \\
>& \frac{p_{1}^\star(u; \delta_1, \delta_2)}{1 + \lambda} - \frac{1}{1 + \lambda} \sum_{i = 3}^\infty p_{i}^\star(u; \delta_1, \delta_2)(i - 2) \\
=& \frac{1}{1 + \lambda} \sum_{i = 1}^\infty p_{i}^\star(u; \delta_1, \delta_2)(2 - i) \\
=&\frac{2 + \delta_2}{1 + \lambda}  - \frac{1}{1 + \lambda} \sum_{i = 1}^\infty (i + \delta_2)p_{i}^\star(u; \delta_1, \delta_2). 
\end{align*}
From Lemma \ref{thm:degdistchng}, we see that
\begin{align*}
\sum_{i = 1}^\infty (i + \delta_2)p_{i}^\star(u; \delta_1, \delta_2) =& (1 - (t^\star/u))\sum_{i = 1}^\infty (i + \delta_2)\mathbb{P}\left(\xi_{\delta_2}(\tilde{T}(u)) = i \right) \\
&+ (t^\star/u)\sum_{i = 1}^\infty (i + \delta_2)\mathbb{P}\left(\xi_{\delta_2}^{\xi_{\delta_1}(T)}(\tau^\star(u)) = i \right) \\
=& (1 - (t^\star/u))\mathbb{E}\left[ \xi_{\delta_2}(\tilde{T}(u)) + \delta_2\right] \\
&+ (t^\star/u)\mathbb{E}\left[ \xi_{\delta_2}^{\xi_{\delta_1}(T)}(\tau^\star(u)) + \delta_2\right].
\end{align*}
Using the fact that $\xi_{\delta_2}(t) - 1$ has a negative binomial distribution with number of successes $1 + \delta_2$ and probability of success $e^{-t}$, we may additionally compute that
\begin{align*}
\mathbb{E}\left[ \xi_{\delta_2}(\tilde{T}(u)) + \delta_2\right] =& \mathbb{E}\left[ \mathbb{E}\left[ \xi_{\delta_2}(\tilde{T}(u)) + \delta_2 \mid \tilde{T}(u) \right]\right] \\
=& \mathbb{E}\left[ (1 + \delta_2)e^{\tilde{T}(u)}\right].
\end{align*}
A straight-forward calculation shows that
\begin{align*}
\mathbb{E}\left[ e^{\tilde{T}(u)} \right] = \frac{2 + \delta_2}{1 + \delta_2} \frac{1 - (t^\star/u)^{\frac{1 + \delta_2}{2 + \delta_2}}}{1 - t^\star/u},
\end{align*}
so that in totality we may write 
\begin{align}
\label{eq:zeroone}
\mathbb{E}\left[ \xi_{\delta_2}(\tilde{T}(u)) + \delta_2\right] =& (2 + \delta_2) \frac{1 - (t^\star/u)^{\frac{1 + \delta_2}{2 + \delta_2}}}{1 - t^\star/u}. 
\end{align}
Similarly, we may condition on $\xi_{\delta_1}(T)$ to write
\begin{align*}
\mathbb{E}\left[ \xi_{\delta_2}^{\xi_{\delta_1}(T)}(\tau^\star(u)) + \delta_2\right] =& \mathbb{E}\left[ \mathbb{E}\left[ \xi_{\delta_2}^{\xi_{\delta_1}(T)}(\tau^\star(u)) + \delta_2 \mid \xi_{\delta_1}(T)\right]\right] \\
=&  \mathbb{E}\left[ (u/t^\star)^{\frac{1}{2 + \delta_2}}(\xi_{\delta_1}(T) + \delta_2) \right].
\end{align*}
Recalling \eqref{eq:alt1}, we have that
\begin{align*}
\mathbb{E}\left[\xi_{\delta_1}(T) + \delta_1 \right] =& (2 + \delta_1)\frac{\Gamma(3 + 2\delta_1)}{\Gamma(1 + \delta_1)} \sum_{i = 1}^\infty (i + \delta_1)\frac{\Gamma(i + \delta_1)}{\Gamma(i + 3 + 2\delta_1)} \\
=& (2 + \delta_1)\frac{\Gamma(3 + 2\delta_1)}{\Gamma(1 + \delta_1)} \sum_{i = 1}^\infty  \frac{\Gamma(i + 1 + \delta_1)}{\Gamma(i + 3 + 2\delta_1)} \\
=& (2 + \delta_1)\frac{\Gamma(3 + 2\delta_1)}{\Gamma(1 + \delta_1)} \sum_{i = 1}^\infty \frac{1}{1 + \delta_1} \left( \frac{\Gamma(i + 1 + \delta_1)}{\Gamma(i + 2 + 2\delta_1)} -  \frac{\Gamma(i + 2 + \delta_1)}{\Gamma(i + 3 + 2\delta_1)} \right) \\
=& (2 + \delta_1)\frac{\Gamma(3 + 2\delta_1)}{\Gamma(1 + \delta_1)} \frac{1}{1 + \delta_1}  \frac{\Gamma(2 + \delta_1)}{\Gamma(3 + 2\delta_1)} \\
=& 2 + \delta_1,
\end{align*}
so that $\mathbb{E}[\xi_{\delta_1}(T) + \delta_2] = 2 + \delta_2$. Thus, we have calculated that
\begin{align}
\label{eq:zerotwo}
\mathbb{E}\left[ \xi_{\delta_2}^{\xi_{\delta_1}(T)}(\tau^\star(u)) + \delta_2 \right] = (u/t^\star)^{\frac{1}{2 + \delta_2}}(2 + \delta_2). 
\end{align}
Using \eqref{eq:zeroone} and \eqref{eq:zerotwo}, we may write
\begin{align*}
\sum_{i = 1}^\infty (i + \delta_2)p_{i}^\star(u; \delta_1, \delta_2) =&  (1 - (t^\star/u))\mathbb{E}\left[ \xi_{\delta_2}(\tilde{T}(u)) + \delta_2\right] \\
&+ (t^\star/u)\mathbb{E}\left[ \xi_{\delta_2}^{\xi_{\delta_1}(T)}(\tau^\star(u)) + \delta_2\right] \\
=& (2 + \delta_2) (1 - (t^\star/u)^{\frac{1 + \delta_2}{2 + \delta_2}}) + (2 + \delta_2)(t^\star/u)^{\frac{1 + \delta_2}{2 + \delta_2}} \\
=& 2 + \delta_2.
\end{align*}
Returing to our original goal, we have shown that
\begin{align*}
J(u, \lambda) =& \sum_{i = 1}^\infty \frac{p_{i}^\star(u; \delta_1, \delta_2)}{i + \lambda}(2 - i) \\
>& \frac{2 + \delta_2}{1 + \lambda}  - \frac{1}{1 + \lambda} \sum_{i = 1}^\infty (i + \delta_2)p_{i}^\star(u; \delta_1, \delta_2) \\
=& \frac{2 + \delta_2}{1 + \lambda}  - \frac{2 + \delta_2}{1 + \lambda} \\
=& 0.
\end{align*}
In summary, we have written 
\begin{align*}
U_{t}^\star(\lambda) - U_{s}^\star(\lambda) =& \frac{\delta_2 - \lambda}{(2 + \delta_2)(2 + \lambda)} \int_s^t J(u, \lambda) du,
\end{align*}
where $J(u, \lambda) > 0$. Thus, for $t > s$, $U_{t}^\star(\lambda) - U_{s}^\star(\lambda)$ is strictly positive for $\lambda < \delta_2$, strictly negative for $\lambda > \delta_2$ and exactly zero when $\lambda = \delta_2$. Hence $\delta_2$ is the unique zero of $U_{t}^\star(\lambda) - U_{s}^\star(\lambda)$.
\end{proof}

\subsection{Proof of Theorem \ref{thm:MLE_chng}}\label{sec:MLE_chng_proof}

\begin{proof}
The proof is very similar to that of Lemma \ref{lem:MLE}. Recall that a Taylor expansion gives that
\begin{align*}
\sqrt{n}(\hat{\delta}_{(ns, nt]} - \delta_2) = -\left(\frac{1}{\frac{1}{n}u'_{(ns, nt]}(\delta^\star_n)} \right)n^{-1/2}u_{(ns, nt]}(\delta_2),
\end{align*}
where $\delta^\star_n$ lies between $\hat{\delta}_{(ns, nt]}$ and $\delta_2$. In order to prove \ref{thm:MLE_chng}, it suffices to prove
\begin{itemize}
\item[(i)] $n^{-1/2} u_{(ns, nt]}(\delta_2) \Rightarrow  W \left( I_t^\star(\delta_2) - I_s^\star(\delta_2) \right) \qquad \text{in } D[0, 1]$,
\item[(ii)] $\displaystyle \sup_{t \in [t^\star, 1]} \sup_{\lambda \in [\eta, K]} \left| -n^{-1}u'_{(ns, nt]}(\lambda) - \left(I_t^\star(\lambda) - I_s^\star(\lambda)\right) \right| \xrightarrow{p} 0$,
\item[(iii)] $\displaystyle \sup_{t \in [s + \tau, 1]}  \left| \hat{\delta}_{(ns, nt]} - \delta_2 \right| \xrightarrow{p} 0$.
\end{itemize}
Statements (ii) and (iii) are proven in Lemma \ref{lem:score_asymp_chng} and Theorem \ref{thm:delta_unif_chng}, respectively. Hence, it suffices to prove (i). As in Lemma \ref{score_asymp}, we aim to apply Theorem \ref{thm:FMCLT} to the martingale differences
\begin{align*}
\eta_{n, k} = \frac{1}{\sqrt{n}}\left( \frac{1}{D_{v_k}(k - 1) + \delta_2} - \frac{1}{2 + \delta_2}\right),
\end{align*}
for $k > \floor{ns}$. In order to prove that condition (a) of Theorem \ref{thm:FMCLT} is satisfied, recall as in the proof of Lemma \ref{score_asymp} that we may write  
\begin{align*}
\sum_{k = \floor{ns} + 1}^{\floor{nt}} \mathbb{E}\left[\eta_{n, k}^2 \mid \mathcal{F}_{n, k} \right] =& \frac{1}{2 + \delta_2} \frac{1}{n} \sum_{k = \floor{ns} + 1}^{\floor{nt}} \sum_{i = 1}^\infty \frac{N_i(k-1)/(k-1)}{i + \delta_2} \\
&- \frac{\floor{nt} - \floor{ns}}{n} \frac{1}{(2 + \delta_2)^2}.
\end{align*}
Note that under the changepoint model, $N_i(\floor{nt})/nt \xrightarrow{p} p_i^\star(t; \delta_1, \delta_2)$, uniformly in $t$, as $k \rightarrow \infty$ by Theorem \ref{thm:degdistchng}. Hence we may conclude that
\begin{align*}
 \frac{1}{n} \sum_{k = \floor{ns} + 1}^{\floor{nt}} \frac{N_i(k-1)}{k - 1} \xrightarrow{p} \int_s^t p_i^\star(u; \delta_1, \delta_2) du,
\end{align*}
where Lemma \ref{lem:recur} gives that
\begin{align*}
\int_s^t p_i^\star(u; \delta_1, \delta_2) du = \frac{2 + \delta_2}{i + \delta_2} \left(tp^\star_{>i}(t; \delta_1, \delta_2) - sp^\star_{>i}(s; \delta_1, \delta_2) \right).
\end{align*}
Hence
\begin{align*}
\sum_{k = \floor{ns} + 1}^{\floor{nt}} \mathbb{E}\left[\eta_{n, k}^2 \mid \mathcal{F}_{n, k} \right] \xrightarrow{p}& \sum_{i = 1}^\infty\frac{tp_{>i}^\star(t; \delta_1, \delta_2) - sp_{>i}^\star(s; \delta_1, \delta_2)}{(i + \delta_2)^2} - \frac{t - s}{(2 + \delta_2)^2} \\
&= I_t^\star(\delta_2) - I_s^\star(\delta_2).
\end{align*}
Thus condition (a) of Theorem \ref{thm:FMCLT} is proved. Condition (b) of  Theorem \ref{thm:FMCLT} is proved just as in Theorem \ref{score_asymp}. Hence, (i) is proved and asymptotic normality of the MLE is achieved.

We now prove the variance bound. See that by using Lemma \ref{lem:recur}
\begin{align*}
I_t^\star(\delta_2) - I_s^\star(\delta_2)  =& \sum_{i = 1}^\infty\frac{tp_{>i}^\star(t; \delta_1, \delta_2) - sp_{>i}^\star(s; \delta_1, \delta_2)}{(i + \delta_2)^2} -  \frac{t - s}{(2 + \delta_2)^2}  \\
=& \frac{1}{2 + \delta_2} \int_s^t \left(\sum_{i = 1}^\infty \frac{p_i^\star(u;\delta_1, \delta_2)}{i + \delta_2} - \frac{1}{2 + \delta_2}\right) du.
\end{align*}
Note that $p_i^\star(u;\delta_2, \delta_2) = p_i(\delta_2)$. This can be easily seen with the identity
\begin{equation}
\label{eq:a} 
\mathbb{P}\left( \xi^{j+1}_\delta(t) > i \right) = \mathbb{P}\left( \xi^j_\delta(t) > i  \right) + \frac{i + \delta}{j + \delta} \mathbb{P}\left( \xi^j_\delta(t) = i \right),
\end{equation}
for $i > j \geq 1$ and $t \geq 0$. Equation \eqref{eq:a} is easily obtained through an integration by parts. With \eqref{eq:a} in hand, see that
\begin{align*}
p_i^\star(u;\delta_2, \delta_2) =& p_i(\delta_2)\mathbb{P}\left(\xi_{2\delta_2}^3(\tau^\star(u)) > i + 2 \right) \\
&+ (t^\star/u)\sum_{j = 1}^ip_j(\delta_2)\mathbb{P}\left(\xi_{\delta_2}^j(\tau^\star(u)) = i\right) \\
=& p_i(\delta_2)\mathbb{P}\left(\xi_{2\delta_2}^3(\tau^\star(u)) > i + 2 \right) \\
&+ p_i(\delta_2)\sum_{j = 1}^i \frac{i + 2 + 2\delta_2}{j + 2 + 2\delta_2}\mathbb{P}\left(\xi_{2\delta_2}^{j+2}(\tau^\star(u)) = i + 2\right), \\
\intertext{and using \eqref{eq:a}}
=&  p_i(\delta_2)\mathbb{P}\left(\xi_{2\delta_2}^3(\tau^\star(u)) > i + 2 \right) \\
&+ p_i(\delta_2)\sum_{j = 1}^i\left(\mathbb{P}\left(\xi_{2\delta_2}^{j+3}(\tau^\star(u)) > i + 2\right) - \mathbb{P}\left(\xi_{2\delta_2}^{j+2}(\tau^\star(u)) > i + 2\right) \right) \\
=&  p_i(\delta_2)\mathbb{P}\left(\xi_{2\delta_2}^3(\tau^\star(u)) > i + 2 \right) \\
&+ p_i(\delta_2)\left(1 - \mathbb{P}\left(\xi_{2\delta_2}^3(\tau^\star(u)) > i + 2 \right) \right)  \\
=& p_i(\delta_2).
\end{align*}
Hence, if  $\delta_1 = \delta_2$, $I_t^\star(\delta_2) - I_s^\star(\delta_2) = (t - s)I(\delta_2; \delta_2)$. Further, by interchanging sums we may rewrite
\begin{align*}
\sum_{i = 1}^\infty \frac{p_i^\star(u;\delta_1, \delta_2)}{i + \delta_2} =& \sum_{i = 1}^\infty \frac{p_i(\delta_2)\mathbb{P}\left(\xi_{2\delta_2}^3\left(\tau^\star(u)\right) > i + 2 \right)}{i + \delta_2} \\
&+ (t^\star/u)\sum_{i = 1}^\infty\frac{\sum_{j = 1}^ip_j(\delta_1)\mathbb{P}\left(\xi_{\delta_2}^j(\tau^\star(u)) = i\right)}{i + \delta_2} \\
=& \sum_{i = 1}^\infty \frac{p_i(\delta_2)\mathbb{P}\left(\xi_{2\delta_2}^3\left(\tau^\star(u)\right) > i + 2 \right)}{i + \delta_2} \\
&+ (t^\star/u)\sum_{j = 1}^\infty p_j(\delta_1) \sum_{i = j}^\infty\frac{\mathbb{P}\left(\xi_{\delta_2}^j(\tau^\star(u)) = i\right)}{i + \delta_2} \\
\equiv& \sum_{i = 1}^\infty \frac{p_i(\delta_2)\mathbb{P}\left(\xi_{2\delta_2}^3\left(\tau^\star(u)\right) > i + 2 \right)}{i + \delta_2} + (t^\star/u)\sum_{j = 1}^\infty p_j(\delta_1) w_j.
\end{align*}
We thus aim to show that if $\delta_1 < \delta_2$ then $\sum_{j = 1}^\infty p_j(\delta_1) w_j > \sum_{j = 1}^\infty p_j(\delta_2) w_j$ and thus $I_t^\star(\delta_2) - I_s^\star(\delta_2) > (t - s)I(\delta_2; \delta_2)$. If $\delta_1 > \delta_2$, we will also show that the inequality is reversed. We rewrite
\begin{align*}
\sum_{j = 1}^\infty p_j(\delta_1) w_j =& p_1(\delta_1)w_1 + \sum_{j = 2}^\infty p_j(\delta_1) w_j \\
=& \left(1 - \sum_{j = 2}^\infty p_j(\delta_1) \right)w_1 + \sum_{j = 2}^\infty p_j(\delta_1) w_j \\ 
=& w_1 + \sum_{j = 2}^\infty p_j(\delta_1) (w_j - w_1) \\
=& w_1 + \sum_{j = 2}^\infty p_j(\delta_1) \sum_{m = 1}^{j - 1}(w_{m + 1} - w_m) \\
=& w_1 + \sum_{m = 1}^\infty  \left(\sum_{j = m + 1}^{\infty}p_{j}(\delta_1) \right)(w_{m + 1} - w_m) \\
=& w_1 + \sum_{m = 1}^\infty  p_{>m}(\delta_1)(w_{m + 1} - w_m) \\
=& w_1 - \sum_{m = 1}^\infty  p_{>m}(\delta_1)(w_{m} - w_{m + 1}).
\end{align*}
We thus aim to prove that if $\delta_1 < \delta_2$, then $\sum_{m = 2}^\infty  p_{>m}(\delta_1)(w_{m} - w_{m + 1}) < \sum_{m = 2}^\infty  p_{>m}(\delta_2)(w_{m} - w_{m + 1})$. If $\delta_1 > \delta_2$, we also aim to prove that the inequality is reversed. In order to do so, we aim to apply Lemma 3 of \cite{gao2017asymptotic}, which requires that $\{w_{m} - w_{m + 1}\}_{m = 1}^\infty$ is a strictly decreasing, non-negative sequence. In order to prove that $\{w_{m} - w_{m + 1}\}_{m = 1}^\infty$ is non-negative,  note that $\{w_{j}\}_{j = 1}^\infty$ is a decreasing sequence since $\xi_{\delta_2}^{j+1}(\tau^\star(u))$ stochastically dominates $\xi_{\delta_2}^{j}(\tau^\star(u))$ and thus
\begin{align*}
w_{j} = \mathbb{E}\left[ \frac{1}{\xi_{\delta_2}^j(\tau^\star(u)) + \delta_2} \right] \geq \mathbb{E}\left[ \frac{1}{\xi_{\delta_2}^{j + 1}(\tau^\star(u)) + \delta_2} \right] = w_{j + 1}.
\end{align*}
We next derive the following relationship for the sequence $\{w_{j}\}_{j = 1}^\infty$:
\begin{align}
\label{eq:convex}
\left(t^\star/u\right)^{\frac{1}{2+\delta_2}}w_{j} + \left(1 - \left(t^\star/u\right)^{\frac{1}{2+\delta_2}} \right)w_{j + 1} =  \frac{\left(t^\star/u\right)^{\frac{1}{2+\delta_2}}}{j + \delta_2}.
\end{align}
See that
\begin{align*}
w_{j + 1} =& \sum_{i = j + 1}^\infty\frac{\mathbb{P}\left(\xi_{\delta_2}^{j + 1}(\tau^\star(u)) = i\right)}{i + \delta_2} \\
=& \sum_{i = j + 1}^\infty \frac{1}{i + \delta_2} \frac{\Gamma(i + \delta_2)}{\Gamma(i - j)\Gamma(j + 1 + \delta_2)} \left(\frac{t^\star}{u}\right)^{\frac{j + 1 + \delta_2}{2+\delta_2}}\left(1 - \left(\frac{t^\star}{u}\right)^{\frac{1}{2+\delta_2}} \right)^{i - j - 1} \\
=& \frac{\left(t^\star/u\right)^{\frac{1}{2+\delta_2}}}{1 - \left(t^\star/u\right)^{\frac{1}{2+\delta_2}}} \sum_{i = j + 1}^\infty \frac{i - j}{j + \delta_2} \frac{\mathbb{P}\left(\xi_{\delta_2}^{j}(\tau^\star(u)) = i\right)}{i + \delta_2} \\
=& \frac{\left(t^\star/u\right)^{\frac{1}{2+\delta_2}}}{1 - \left(t^\star/u\right)^{\frac{1}{2+\delta_2}}}\left( \sum_{i = j + 1}^\infty \frac{\mathbb{P}\left(\xi_{\delta_2}^{j}(\tau^\star(u)) = i\right)}{j + \delta_2} -  \sum_{i = j + 1}^\infty \frac{\mathbb{P}\left(\xi_{\delta_2}^{j}(\tau^\star(u)) = i\right)}{i + \delta_2}\right) \\
=& \frac{\left(t^\star/u\right)^{\frac{1}{2+\delta_2}}}{1 - \left(t^\star/u\right)^{\frac{1}{2+\delta_2}}}\left(  \frac{\mathbb{P}\left(\xi_{\delta_2}^{j}(\tau^\star(u)) > j \right)}{j + \delta_2} + \frac{\mathbb{P}\left(\xi_{\delta_2}^{j}(\tau^\star(u)) = j\right)}{j + \delta_2} -  w_j\right) \\
=& \frac{\left(t^\star/u\right)^{\frac{1}{2+\delta_2}}}{1 - \left(t^\star/u\right)^{\frac{1}{2+\delta_2}}}\left( \frac{1}{j + \delta_2} -  w_j\right).
\end{align*}
Rearranging terms gives \eqref{eq:convex}. We further derive the following convenient upper and lower bounds for the sequence $\{w_{j}\}_{j = 1}^\infty$:
\begin{align}
\label{eq:wbounds}
\frac{(t^\star/u)^{\frac{1}{2+\delta_2}}}{j + \delta_2} (1 - w_j)< w_{j + 1} < \frac{(t^\star/u)^{\frac{1}{2+\delta_2}}}{j + \delta_2}.
\end{align}
In order to do so, see that
\begin{align*}
w_{j + 1} =& \sum_{i = j + 1}^\infty\frac{\mathbb{P}\left(\xi_{\delta_2}^{j + 1}(\tau^\star(u)) = i\right)}{i + \delta_2} \\
=& \sum_{k = j}^\infty\frac{\mathbb{P}\left(\xi_{\delta_2}^{j + 1}(\tau^\star(u)) = k + 1\right)}{k + 1 + \delta_2} \\
=& \sum_{k = j}^\infty\frac{1}{k + 1 + \delta_2} \frac{\Gamma(k + 1 + \delta_2)}{\Gamma(k - j + 1)\Gamma(j + 1 + \delta_2)} \left(\frac{t^\star}{u}\right)^{\frac{j + 1 + \delta_2}{2+\delta_2}}\left(1 - \left(\frac{t^\star}{u}\right)^{\frac{1}{2+\delta_2}} \right)^{k - j} \\
=& \frac{\left(\frac{t^\star}{u}\right)^{\frac{1}{2+\delta_2}}}{j + \delta_2}\sum_{k = j}^\infty\frac{k + \delta_2}{k + 1 + \delta_2} \mathbb{P}\left(\xi_{\delta_2}^{j}(\tau^\star(u)) = k\right) \\
=& \frac{\left(\frac{t^\star}{u}\right)^{\frac{1}{2+\delta_2}}}{j + \delta_2}\left(1- \sum_{k = j}^\infty\frac{1}{k + 1 + \delta_2} \mathbb{P}\left(\xi_{\delta_2}^{j}(\tau^\star(u)) = k\right)\right) \\
=& \frac{\left(\frac{t^\star}{u}\right)^{\frac{1}{2+\delta_2}}}{j + \delta_2}\left(1- \mathbb{E}\left[ \frac{1}{\xi_{\delta_2}^{j}(\tau^\star(u)) + 1 + \delta_2} \right]\right).
\end{align*}
Thus, clearly $w_{j + 1} < (t^\star/u)^{\frac{1}{2+\delta_2}}/(j + \delta_2)$. Also, 
\begin{align*}
\mathbb{E}\left[ \frac{1}{\xi_{\delta_2}^{j}(\tau^\star(u)) + 1 + \delta_2} \right] < \mathbb{E}\left[ \frac{1}{\xi_{\delta_2}^{j}(\tau^\star(u)) + \delta_2} \right] = w_j.
\end{align*}
Hence, \eqref{eq:wbounds} is proven. We now finally prove that $\{ w_{m} - w_{m + 1}\}_{m = 1}^\infty$ is a strictly decreasing sequence. From \eqref{eq:convex}, we have that
\begin{align*}
\frac{\left(t^\star/u\right)^{\frac{1}{2+\delta_2}}}{(j + \delta_2)(j + 1 + \delta_2)} =& \left(t^\star/u\right)^{\frac{1}{2+\delta_2}}(w_{j} - w_{j + 1}) \\
&+ \left(1 - \left(t^\star/u\right)^{\frac{1}{2+\delta_2}} \right)(w_{j + 1} - w_{j + 2}).
\end{align*}
Hence, it suffices to show that
\begin{align*}
w_{j + 1} - w_{j + 2} < \frac{\left(t^\star/u\right)^{\frac{1}{2+\delta_2}}}{(j + \delta_2)(j + 1 + \delta_2)}. 
\end{align*}
We may again apply \eqref{eq:convex} to see that
\begin{align*}
w_{j + 1} - w_{j + 2} =& \frac{1}{j + 1 + \delta_2} - \left(u/t^\star\right)^{\frac{1}{2+\delta_2}}w_{j + 2}, \\
\intertext{and applying the lower bound in \eqref{eq:wbounds} gives that }
<& \frac{1}{j + 1 + \delta_2} - \frac{1}{j + 1 + \delta_2}(1 - w_{j + 1}), \\ 
=& \frac{w_{j + 1}}{j + 1 + \delta_2},
\intertext{while applying the upper bound in \eqref{eq:wbounds} gives that } 
<& \frac{\left(t^\star/u\right)^{\frac{1}{2+\delta_2}}}{(j + \delta_2)(j + 1 + \delta_2)}.
\end{align*} 
Hence, we have prove that the sequence $\{ w_{m} - w_{m + 1}\}_{m = 1}^\infty$ is strictly decreasing. This allows us to apply Lemma 3 \cite{gao2017asymptotic} to state that if $\delta_1 < \delta_2$
\begin{align*}
\sum_{m = 1}^\infty p_{>m}(\delta_1)(w_{m} - w_{m + 1}) < \sum_{m = 1}^\infty p_{>m}(\delta_2)(w_{m} - w_{m + 1}),
\end{align*}
and if $\delta_2 < \delta_1$
\begin{align*}
\sum_{m = 21}^\infty p_{>m}(\delta_1)(w_{m} - w_{m + 1}) > \sum_{m = 1}^\infty p_{>m}(\delta_2)(w_{m} - w_{m + 1}).
\end{align*}
This implies that if $\delta_1 < \delta_2$
\begin{align*}
\sum_{j = 1}^\infty p_{j}(\delta_1)w_j > \sum_{j = 1}^\infty p_{j}(\delta_2)w_j,
\end{align*}
and if $\delta_1 > \delta_2$
\begin{align*}
\sum_{j = 1}^\infty p_{j}(\delta_1)w_j < \sum_{j = 1}^\infty p_{j}(\delta_2)w_j.
\end{align*}
Hence, the proof is complete.
\end{proof}

\subsection{Proof of Theorem~\ref{thm:consistency}}\label{subsec:pf_consistency}

\begin{proof}
To prove \eqref{eq:est}, we show that for any $\tau > 0$ and for some $\kappa > 0$
\begin{align}\label{eq:est_claim}
\mathbb{P}(-2 \log \Lambda_{nt^\star} > \sup_{s \in [\gamma, t^\star - \tau]} -2 \log \Lambda_{ns} + \kappa ) \to 1, \qquad \text{as } n\to\infty.
\end{align}
The proof for $s \in [t^\star + \tau, 1- \gamma]$ is similar and hence we omit it. Akin to the proof of Theorem \ref{thm:nullhyp}, we write
\begin{align*}
&-2 \log \Lambda_{nt^\star} + 2 \log \Lambda_{ns} \\
&=  2 \ell_{(0, nt^\star]}(\hat{\delta}_{(0, nt^\star]}) + 2 \ell_{(nt^\star, n]}(\hat{\delta}_{(nt^\star, n]}) - 2 \ell_{(0, ns]}(\hat{\delta}_{(0, ns]}) - 2 \ell_{(ns, n]}(\hat{\delta}_{(ns, n]}),\\
%Then 
%\begin{align*}
%&2 \ell_{(0, ns]}(\hat{\delta}_{(0, ns]}) + 2 \ell_{(ns, n]}(\hat{\delta}_{(ns, n]}) - 2 \ell_{(0, nt^\star]}(\hat{\delta}_{(0, nt^\star]}) - 2 \ell_{(nt^\star, n]}(\hat{\delta}_{(nt^\star, n]}) \\
\intertext{and since $\ell_{(ns, n]}(\hat{\delta}_{(ns, n]})=\ell_{(ns, nt^\star]}(\hat{\delta}_{(ns, n]})+\ell_{(nt^\star, n]}(\hat{\delta}_{(ns, n]})$, and
$\ell_{(0, nt^\star]}(\hat{\delta}_{(0, nt^\star]}) = \ell_{(0, ns]}(\hat{\delta}_{(0, nt^\star]})+\ell_{(ns, nt^\star]}(\hat{\delta}_{(0, nt^\star]})$, we have}
&= 2 \left(\ell_{(0, ns]}(\hat{\delta}_{(0, nt^\star]}) - \ell_{(0, ns]}(\hat{\delta}_{(0, ns]}) \right) + 2 \left( \ell_{(ns, nt^\star]}(\hat{\delta}_{(0, nt^\star]}) - \ell_{(ns, nt^\star]}(\hat{\delta}_{(ns, n]}) \right) \\
& \ \ \  + 2 \left( \ell_{(nt^\star, n]}(\hat{\delta}_{(nt^\star, n]}) - \ell_{(nt^\star, n]}(\hat{\delta}_{(ns, n]}) \right) \\
&= 2 \left(\ell_{(0, ns]}(\hat{\delta}_{(0, nt^\star]}) - \ell_{(0, ns]}(\hat{\delta}_{(0, ns]}) \right) + 2 \left( \ell_{(ns, nt^\star]}(\hat{\delta}_{(ns, nt^\star]}) - \ell_{(ns, nt^\star]}(\hat{\delta}_{(ns, n]}) \right) \\
& \ \ \ + 2 \left(  \ell_{(ns, nt^\star]}(\hat{\delta}_{(0, nt^\star]}) - \ell_{(ns, nt^\star]}(\hat{\delta}_{(ns, nt^\star]}) \right) + 2 \left( \ell_{(nt^\star, n]}(\hat{\delta}_{(nt^\star, n]}) - \ell_{(nt^\star, n]}(\hat{\delta}_{(ns, n]}) \right); \\
\intertext{and since $\hat{\delta}_{(ns, nt^\star]}$ is a maximizer of $\ell_{(ns, nt^\star]}(\cdot)$, then we have the following lower bound:}
&\geq 2 \left( \ell_{(0, ns]}(\hat{\delta}_{(0, nt^\star]})  - \ell_{(0, ns]}(\hat{\delta}_{(0, ns]}) \right) + 2 \left( \ell_{(ns, nt^\star]}(\hat{\delta}_{(0, nt^\star]}) - \ell_{(ns, nt^\star]}(\hat{\delta}_{(ns, nt^\star]}) \right)\\
& \ \ \ + 2 \left( \ell_{(nt^\star, n]}(\hat{\delta}_{(nt^\star, n]}) - \ell_{(nt^\star, n]}(\hat{\delta}_{(ns, n]}) \right) \\
&:= A_1(n, s) + A_2(n, s)  + A_3(n, s),
\end{align*}
Invoking results in Theorem \ref{thm:nullhyp}, we see that $A_1(n, s)$ and $A_2(n, s)$ converge weakly to processes that are finite with probability $1$. 

We next claim that $\sup_{s \in [\gamma, t^\star - \tau]} A_3(n, s) \xrightarrow{p} \infty$, i.e.
\begin{align*}
\lim_{n\to\infty}\mathbb{P}\left(\sup_{s \in [\gamma, t^\star - \tau]} A_3(n, s) > L \right) = 1,
\end{align*}
for all $L > 0$.
The mean value theorem expansion allows us to write $A_3(n, s)$ as
\begin{equation}
\label{eq:MVT}
2 \left( \ell_{(nt^\star, n]}(\hat{\delta}_{(nt^\star, n]}) - \ell_{(nt^\star, n]}(\hat{\delta}_{(ns, n]}) \right) =  2u_{(nt^\star, n]}(\delta^{\star}_n(s))\left(\hat{\delta}_{(nt^\star, n]} - \hat{\delta}_{(ns, n]} \right),
\end{equation}
for some $\delta^{\star}_n(s)$ between $\hat{\delta}_{(ns, n]}$ and $\hat{\delta}_{(nt^\star, n]}$. Since $\hat{\delta}_{(nt^\star, n]}(\cdot)$ is maximized by $\hat{\delta}_{(nt^\star, n]}$, we can write 
\begin{align*}
 2 &\left( \ell_{(nt^\star, n]}(\hat{\delta}_{(nt^\star, n]}) - \inf_{s \in [\gamma, t^\star - \tau]} \ell_{(nt^\star, n]}(\hat{\delta}_{(ns, n]}) \right)\\
 & = \sup_{s \in [\gamma, t^\star - \tau]} 2\left| u_{(nt^\star, n]}(\delta^{\star}_n(s)) \right| \cdot \left| \hat{\delta}_{(nt^\star, n]} - \hat{\delta}_{(ns, n]} \right|
\end{align*}
Our proof consists of two parts, and we will show that as $n\to\infty$,
\begin{enumerate}
\item[(1)] $\mathbb{P}\left( \inf_{s \in [\gamma, t^\star - \tau]} \left| \hat{\delta}_{(nt^\star, n]} - \hat{\delta}_{(ns, n]} \right| > \epsilon \right) \rightarrow 1$;
\item[(2)] $\sup_{s \in [\gamma, t^\star - \tau]} \left| u_{(nt^\star, n]}(\delta^{\star}_n(s)) \right| \xrightarrow{p} \infty$.
\end{enumerate}

We now focus on (1) and consider the term $\left| \hat{\delta}_{(nt^\star, n]} - \hat{\delta}_{(ns, n]} \right|$. 
Note that $u_{(ns, n]}(\lambda) =  u_{(ns, nt^\star]}(\lambda) + u_{(nt^\star, n]}(\lambda)$, and  Lemmas \ref{lem:score_prime_asymp} and \ref{lem:score_asymp_chng} give
\begin{align}
&\sup_{\lambda \in [\eta, K]}\left|n^{-1}u_{(ns, nt^\star]}(\lambda) - (t^\star - s)U(\lambda; \delta_1) \right| \xrightarrow{p} 0,\label{eq:unifconvergence1}\\
& \sup_{\lambda \in [\eta, K]}\left|n^{-1}u_{(nt^\star, n]}(\lambda) - (U^\star_1(\lambda) - U^\star_{t^\star}(\lambda)) \right| \xrightarrow{p} 0 \label{eq:unifconvergence2}
\end{align}
Hence
\begin{equation}
\label{eq:unifconv}
\sup_{\lambda \in [\eta, K]} \left|n^{-1}u_{(ns, n]}(\lambda) - V(\lambda)\right| \xrightarrow{p} 0,
\end{equation}
where we set $V(\lambda) := (t^\star - s)U(\lambda; \delta_1) + U^\star_1(\lambda) - U^\star_{t^\star}(\lambda)$. Also note that $U(\lambda; \delta_1)$ and $ U^\star_1(\lambda) - U^\star_{t^\star}(\lambda)$ have unique zeros at $\delta_1$ and $\delta_2$, respectively, and are positive before and negative after their zeros (see Lemma 4 in \cite{gao2017asymptotic} and Lemma \ref{lem:unique_zero}).

In particular, $|V(\delta_2)| = |(t^\star - s)U(\delta_2; \delta_1) + U^\star_1(\delta_2) - U^\star_{t^\star}(\delta_2)| =  |(t^\star - s)U_1(\delta_2)| > 0$ since $U(\lambda; \delta_1)$ has a unique zero at $\delta_1$ and is positive beforehand. Fix $0 < \xi < |V(\delta_2)|$, and the continuity of $V(\lambda)$ thus gives the existence of a $\epsilon > 0$ such that 
\begin{equation*}
\inf_{\lambda : |\lambda - \delta_2| < 2\epsilon} \left| V(\lambda) \right| > \xi.
\end{equation*}
Further, since $n^{-1}u_{(ns, n]}(\hat{\delta}_{(ns, n]})= 0$,
\begin{equation}
\label{eq:U3}
\left| V(\hat{\delta}_{(ns, n]}) \right|  =  \left| V(\hat{\delta}_{(ns, n]}) -  n^{-1}u_{(ns, n]}(\hat{\delta}_{(ns, n]}) \right| \leq \sup_{\lambda \in [\epsilon, K]} \left| \frac{1}{n}u_{(ns, n]}(\lambda)  - V(\lambda) \right|.
\end{equation}
Therefore, we conclude from \eqref{eq:unifconv} that with probability tending towards 1,
\begin{align*}
\left|  V(\hat{\delta}_{(ns, n]}) \right| \leq  \sup_{\lambda \in [\epsilon, K]} \left| \frac{1}{n}u_{(ns, n]}(\lambda)  - V(\lambda) \right| \leq \xi < \inf_{\lambda : |\lambda - \delta_2| < 2\epsilon} \left| V(\lambda) \right|,
\end{align*}
which further implies $\mathbb{P}(|\hat{\delta}_{(ns, n]} - \delta_2 | > 2\epsilon) \rightarrow 1$. Meanwhile, by the consistency of $\hat{\delta}_{(nt^\star, n]}$, we have
$$\mathbb{P}(|\hat{\delta}_{(nt^\star, n]} - \delta_2| < \epsilon) \rightarrow 1.$$ Hence,
we see that as $n\to\infty$,
\begin{equation}
\label{eq:difference}
\mathbb{P}\left( \inf_{s \in [\gamma, t^\star - \tau]}  \left| \hat{\delta}_{(nt^\star, n]} - \hat{\delta}_{(ns, n]} \right|  > \epsilon \right) \rightarrow 1.
\end{equation}

We next show that $\sup_{s \in [\gamma, t^\star - \tau]}  |u_{(nt^\star, n]}(\delta^{\star}_n(s))| \xrightarrow{p} \infty$. Let $\mathcal{B}_{r}(x)$ denote the open ball of radius $r$ centered at $x$. Note that as $n \rightarrow \infty$, $\delta^{\star}_n(s) \notin \mathcal{B}_{\kappa}(\delta_2)$ with high probability for some $\kappa > 0$ since $\hat{\delta}_{(nt^\star, n]}$ and $\hat{\delta}_{(ns, n]}$ are well-separated with probability tending towards $1$ and $\ell_{(nt^\star, n]}(\cdot)$ is non-linear and maximized at $\hat{\delta}_{(nt^\star, n]}$. Further,

\begin{align*}
\left|n^{-1}u_{(nt^\star, n]}(\delta^{\star}_n(s))\right| \geq& \left|U^\star_1(\delta^{\star}_n(s)) - U^\star_{t^\star}(\delta^{\star}_n(s))\right| \\
& - \sup_{\lambda \in [\eta, K]} \left|n^{-1}u_{(nt^\star, n]}(\lambda) - (U^\star_1(\lambda) - U^\star_{t^\star}(\lambda)) \right| \\
\geq& \inf_{\lambda \notin B_{\kappa}(\delta_2)} \left|U^\star_1(\lambda) - U^\star_{t^\star}(\lambda)\right| \\
& - \sup_{\lambda \in [\eta, K]} \left|n^{-1}u_{(nt^\star, n]}(\lambda) - (U^\star_1(\lambda) - U^\star_{t^\star}(\lambda)) \right|.
\end{align*}

The first term is nonzero since $U^\star_1(\cdot) - U^\star_{t^\star}(\cdot)$ has a unique zero at $\delta_2$. The second term can be make arbitrarily small by by \eqref{eq:unifconvergence1} and \eqref{eq:unifconvergence2}. Hence
\begin{equation}\label{eq:uprime}
\sup_{s \in [\gamma, t^\star - \tau]} \left| u_{(nt^\star, n]}(\delta^{\star}_n(s)) \right| \xrightarrow{p} \infty.
\end{equation}
Combining \eqref{eq:uprime} with \eqref{eq:difference} gives that
$-2 \log \Lambda_{nt^\star} + \inf_{s \in [\gamma, t^\star - \tau]} 2 \log \Lambda_{ns} \xrightarrow{p} \infty$, which implies the existence of a $\kappa > 0$ such that
\begin{align*}
\mathbb{P}\left( -2 \log \Lambda_{nt^\star} > \sup_{s \in [\gamma, t^\star - \tau]} -2 \log \Lambda_{ns} + \kappa \right) \rightarrow 1.
\end{align*}
Therefore, as $n \rightarrow \infty$, the maximum becomes well-separated, and occurs at $t^\star$ so that we have
\begin{align*}
\hat{t}_n = n^{-1} \underset{m = \floor{n\gamma}, \dots, \floor{n(1 -\gamma)}}{\argmax} \Lambda_m \xrightarrow{p} t^\star.
\end{align*}
\end{proof}

\subsection{Proof of Theorem~\ref{thm:nullhyp2}}\label{subsec:pf_null2}

\begin{proof}
We first observe that via a Taylor expansion of $u_{(0, nt]}(\lambda)$ around $\hat{\delta}_{(0, nt]}$
\begin{align*}
n^{-1/2}u_{(0, nt]}(\hat{\delta}_n) &= n^{-1/2}u_{(0, nt]}(\hat{\delta}_{(0, nt]}) + n^{-1/2}u'_{(0, nt]}(\delta^\star_n)(\hat{\delta}_n - \hat{\delta}_{(0, nt]}) \\
&= n^{-1/2}u'_{(0, nt]}(\delta^\star_n)(\hat{\delta}_n - \hat{\delta}_{(0, nt]}).
\end{align*}
where $\delta^\star_n$ lies between $\hat{\delta}_n$ and $\hat{\delta}_{(0, nt]}$. Similarly to the argument made in Lemma \ref{lem:MLE}, 
\begin{align*}
-n^{-1}u'_{(0, nt]}(\delta^\star_n) \xrightarrow{p} tI(\delta; \delta) \qquad \text{in }D[0,1],
\end{align*}
and by \eqref{eq:MLEfunctional2},
\begin{align*}
t I(\delta; \delta) \cdot \sqrt{n} \left(\hat{\delta}_n -  \hat{\delta}_{(0, nt]} \right) \Rightarrow tW\left( I(\delta; \delta)\right) - W\left( tI(\delta; \delta) \right)\qquad \text{in }D[\gamma,1],
\end{align*}
which implies
\begin{align*}
n^{-1}u^2_{(0, nt]}(\hat{\delta}_n) \Rightarrow  \left( tW\left( I(\delta; \delta)\right) - W\left( tI(\delta; \delta) \right) \right)^2 \qquad \text{in }D[\gamma,1].
\end{align*}
Further, Lemma \ref{lem:score_prime_asymp} gives that in $D[0,1]$, 
\begin{align*}
-n^{-1}u'_{(0, nt]}(\hat{\delta}_n) &\xrightarrow{p} tI(\delta; \delta),\\
-n^{-1}u'_{(nt, n]}(\hat{\delta}_n) &\xrightarrow{p} (1 - t)I(\delta; \delta).
\end{align*}
Define the Brownian bridge $\displaystyle B(t):= I^{-1/2}(\delta; \delta) \left(W(tI(\delta; \delta)) - tW(I(\delta; \delta)) \right)$. Combining the previous three convergences thus gives
\begin{align*}
S_{nt} = -u^2_{(0, nt]}(\hat{\delta}_n)\left( \frac{1}{u'_{(0, nt]}(\hat{\delta}_n)} + \frac{1}{u'_{(nt, n]}(\hat{\delta}_n)} \right) \Rightarrow B^2(t)\left(\frac{1}{t} + \frac{1}{1 - t} \right) = \frac{B^2(t)}{t(1 - t)}
\end{align*}
in $D[\gamma,1-\gamma]$.
Then applying the continuous mapping theorem gives
\begin{align*}
\sup_{t \in [\gamma, 1 - \gamma]} S_{nt} \Rightarrow \sup_{t \in [\gamma, 1 - \gamma]} \frac{B^2(t)}{t(1 - t)} \qquad \text{in }\mathbb{R}.
\end{align*}
\end{proof}

% \begin{acks}[Acknowledgments]
% \end{acks}

%%%%%%%%%%%%%%%%%%%%%%%%%%%%%%%%%%%%%%%%%%%%%%
%% Supplementary Material, if any, should   %%
%% be provided in {supplement} environment  %%
%% with title and short description.        %%
%%%%%%%%%%%%%%%%%%%%%%%%%%%%%%%%%%%%%%%%%%%%%%
%\begin{supplement}
%\stitle{???}
%\sdescription{???.}
%\end{supplement}

%% if your bibliography is in bibtex format, uncomment commands:
%\bibliographystyle{imsart-number} % Style BST file (imsart-number.bst or imsart-nameyear.bst)
%\bibliography{bibliography}       % Bibliography file (usually '*.bib')

%% or include bibliography directly:
% \begin{thebibliography}{}
% \bibitem{b1}
% \end{thebibliography}

\end{document}